\documentclass{article}

\usepackage{amsmath,amssymb,natbib,graphicx}
\usepackage{algorithm,color,multirow,rotating}

\RequirePackage{hyperref}

\newtheorem{proposition}{Proposition}
\newtheorem{theorem}{Theorem}
\newtheorem{definition}{Definition}
\newtheorem{proof}{Proof}
\newtheorem{corollary}{Corollary}
\newtheorem{claim}{Claim}

\setlength{\hoffset}{-.5in}
\setlength{\textwidth}{5.5in}
\setlength{\voffset}{-1in}
\setlength{\textheight}{8.5in}

\DeclareMathOperator{\Delti}{\mathbf{\Delta}^{-1}}
\DeclareMathOperator{\Delt}{\mathbf{\Delta}}
\DeclareMathOperator{\Sig}{\mathbf{\Sigma}}
\DeclareMathOperator{\Lam}{\mathbf{\Lambda}}
\DeclareMathOperator{\Sigi}{\mathbf{\Sig}^{-1}}
\DeclareMathOperator{\Gam}{\mathbf{\Gamma}}
\DeclareMathOperator{\A}{\mathbf{A}}

\DeclareMathOperator{\V}{\mathbf{V}}
\DeclareMathOperator{\U}{\mathbf{U}}

\DeclareMathOperator{\uvec}{\mathbf{u}}
\DeclareMathOperator{\vvec}{\mathbf{v}}
\DeclareMathOperator{\zvec}{\mathbf{z}}
\DeclareMathOperator{\wvec}{\mathbf{w}}
\DeclareMathOperator{\D}{\mathbf{D}}
\DeclareMathOperator{\Omeg}{\mathbf{\Omega}}
\DeclareMathOperator{\Pmat}{\mathbf{P}}

\DeclareMathOperator{\Q}{\mathbf{Q}}
\DeclareMathOperator{\I}{\mathbf{I}}
\DeclareMathOperator{\R}{\mathbf{R}}
\DeclareMathOperator{\Y}{\mathbf{Y}}
\DeclareMathOperator{\B}{\mathbf{B}}

\DeclareMathOperator{\Bmat}{\mathbf{B}}
\DeclareMathOperator{\X}{\mathbf{X}}

\DeclareMathOperator{\y}{\mathbf{y}}
\DeclareMathOperator{\Lmat}{\mathbf{L}}
\DeclareMathOperator{\Mmat}{\mathbf{M}}
\DeclareMathOperator{\Smat}{\mathbf{S}}
\DeclareMathOperator{\Emat}{\mathbf{E}}
\DeclareMathOperator{\Zmat}{\mathbf{Z}}

\DeclareMathOperator*{\minimize}{\mathrm{minimize}}
\DeclareMathOperator*{\argmin}{\mathrm{argmin}}
\DeclareMathOperator*{\maximize}{\mathrm{maximize}}

\DeclareMathOperator*{\lamv}{\lambda_{\mathbf{v}}}
\DeclareMathOperator*{\lamu}{\lambda_{\mathbf{u}}}


\begin{document}

\title{A Generalized Least Squares Matrix Decomposition}
\author{Genevera I. Allen\footnote{To
    whom correspondence should be addressed.} \\
{\small Department of Pediatrics-Neurology, Baylor College of
  Medicine,} \\ {\small Jan and Dan Duncan Neurological Research
  Institute, Texas Children's Hospital,} \\
{\small \& Departments of Statistics and Electrical and Computer
  Engineering, Rice University.} 
\vspace{4mm} \\
Logan Grosenick \\
{\small Center for Mind, Brain and Computation, Stanford University.} \vspace{4mm} \\
Jonathan Taylor \\ 
{\small Department of Statistics, Stanford University.}}

\maketitle

\begin{abstract}
Variables in many massive high-dimensional data sets are structured,
arising for example from measurements on a regular grid as in imaging
and time series or from spatial-temporal measurements as in climate
studies.  Classical multivariate techniques ignore these structural
relationships often resulting in poor performance.  
We propose a generalization of the singular value decomposition (SVD) and
principal components analysis (PCA)
that is appropriate for massive data sets with structured variables or
known two-way dependencies.
By finding the best low rank approximation of the data with respect to
a transposable quadratic norm, 
our decomposition, 
entitled the {\em Generalized least squares Matrix Decomposition} (GMD),
directly accounts for structural relationships.  As many variables in
high-dimensional settings are often irrelevant or noisy, we
also regularize our matrix decomposition by adding two-way penalties
to encourage sparsity or smoothness.  We develop fast computational
algorithms using our methods to 
perform generalized PCA (GPCA), sparse GPCA, and
functional GPCA on massive data sets.  
Through simulations and a whole brain functional MRI example we 
demonstrate the utility of our methodology for dimension reduction,
signal recovery, and feature selection with high-dimensional
structured data. \\
{\bf Keywords}: matrix decomposition, singular value
  decomposition,
  principal components analysis, sparse PCA,
functional PCA, structured data, neuroimaging.
\end{abstract}

\section{Introduction}

Principal components analysis (PCA), and the singular value decomposition
(SVD) upon which it is based, form the foundation of much of classical
multivariate analysis.  It is well known, however, that with both
high-dimensional data and functional data, traditional PCA can perform
poorly \citep{silverman_1996, jolliffe_spca_2003,
  johnstone_spca_2004}.  Methods enforcing sparsity or smoothness on
the matrix 
factors, such as sparse or
functional PCA, have been shown to consistently recover the factors in these
settings \citep{silverman_1996, johnstone_spca_2004}.  Recently,
these techniques have been extended to regularize both the row and
column factors of a matrix \citep{huang_2009, witten_pmd_2009,
  lee_ssvd_2010}.  All of these methods, however, can fail to capture
relevant aspects of structured high-dimensional data.  In this paper,
we propose a general and flexible framework for PCA that can directly
account for structural dependencies and incorporate two-way
regularization for exploratory analysis of massive structured data
sets.

Examples of high-dimensional structured data in which classical PCA
performs poorly abound in areas of biomedical imaging, environmental
studies, time series and longitudinal studies, and network data.
Consider, for example, functional MRIs (fMRIs) which measure
three-dimensional images of the brain over time with high spatial
resolution.  Each three-dimensional pixel, or voxel, in the image
corresponds to a measure of the bold oxygenation level dependent
(BOLD) response (hereafter referred to as ``activation''), an indirect
measure of neuronal activity at a particular location in the brain.
Often these voxels are vectorized at 
each time point to form a high-dimensional matrix of voxels ($\approx$
10,000 - 100,000) by time 
points ($\approx$ 100 - 10,000) to be 
analyzed by multivariate methods.   Thus, this data exhibits strong 
spatial dependencies among the rows and strong temporal dependencies
among the columns \citep{lindquist_2008, lazar_2008}.  Multivariate
analysis techniques are often applied to fMRI data to
find regions of interest, or spatially contiguous groups of voxels
exhibiting strong co-activation as well as the time
courses associated with these regions of activity
\citep{lindquist_2008}.  Principal 
components analysis and the SVD, however, are rarely used for this
purpose.  Many have noted that the first several principal components
of fMRI data appear to capture spatial and temporal dependencies in
the noise rather than the patterns of brain activation in which
neuroscientists are interested.   Furthermore,
subsequent principal components often exhibit a mixture of noise
dependencies and brain activation such that the signal in the data
remains unclear 
\citep{friston_1999, calhoun_2001, thirion_2003, viviani_fmri_2005}.
This behavior of classical PCA is typical when it is applied to
structured data with low 
signal compared to the noise dependencies induced by the data
structure.


To understand the poor performance of the SVD and PCA is these settings,
we examine the model mathematically.  We observe data $\Y \in \Re^{n
  \times p}$, for which 
standard PCA considers the following model: $\Y = \Mmat + \U \D
\V^{T} + \Emat$, 
where $\Mmat$ denotes a mean matrix, $\D$ the singular values,
$\U$ and $\V$ the left and right factors, respectively,
and $\Emat$ the noise matrix.  
 Implicitly, SVD models assume that the elements of
$\Emat$ are independently and identically distributed, as 
the SVD is calculated by minimizing the Frobenius norm loss
function: $|| \X - \U \D \V^{T} ||_{F}^{2}  = \sum_{i=1}^{n}
\sum_{j=1}^{p} ( X_{ij} - \uvec_{i} \D \vvec_{j}^{T} )^{2}$,  where
$\uvec_{i}$ is the $i^{th}$ column of $\U$ and $\vvec_{j}$ is
analogous and  $\X$ denotes the centered data, $\X = \Y - \Mmat$.
This sums of 
squares loss function weights errors associated with each matrix
element equally and cross-product errors,
between elements $X_{ij}$ and $X_{i'j'}$, for example, are ignored.  It
comes as no surprise then that the Frobenius norm loss and thus the
SVD perform poorly with data exhibiting strong structural dependencies
among the matrix elements.  To permit unequal weighting of the matrix
errors according to the data structure, we assume that the noise is
structured: $\mathrm{vec}(\Emat) \sim (\mathbf{0},  \R^{-1} \otimes
\Q^{-1} )$, or 
the covariance of the noise is separable and has a
Kronecker product covariance \citep{gupta_nagar}.  Our loss function,
then changes from the Frobenius norm to a transposable quadratic norm
that permits unequal weighting of the matrix error terms based on
$\Q$ and $\R$: $|| \X - \U \D \V^{T} ||_{\Q,\R}^{2}  = \sum_{i=1}^{n}
   \sum_{j=1}^{p} \sum_{i'=1}^{n}
   \sum_{j'=1}^{p} \Q_{ii'} \R_{jj'} ( X_{ij} - \uvec_{i} \D
   \vvec_{j}^{T} ) ( X_{i'j'} - \uvec_{i'} \D \vvec_{j'}^{T} )$.
By finding the best low rank approximation of the data with respect to
this transposable quadratic norm, 
we develop a decomposition that directly accounts for two-way structural
dependencies in the data.  This gives us a method, {\em Generalized
  PCA}, that extends PCA for applications to structured data.

While this paper was initially under review, previous work on unequal
weighting of matrix elements in PCA via a row and column generalizing
operators came to our attention \citep{escoufier_1977}.  This so
called, ``duality 
approach to PCA'' is known in the French multivariate
 community, but has perhaps not gained the popularity in
the statistics literature that it deserves.    Developed predominately
in the 1970's and 80's for 
applications in ecological statistics, few texts on this were published
in English \citep{escoufier_1977, tenenhaus_young_1985,
  escoufier_1987}.  Recently, a  few works review these methods in 
relation to classical multivariate statistics \citep{escoufier_2006,
  purdom_2007, dray_ade4_2007, holmes_2008}. While these works propose
a mathematical framework for unequal weighting matrices in PCA, much
more statistical and computational development is needed in order to
directly apply these methods to high-dimensional structured data.

In this paper, we aim to develop a flexible mathematical framework for
PCA that accounts for known structure and incorporates regularization
in a manner that is computationally feasible for analysis of massive
structured data sets.  Our specific contributions include (1) a matrix
decomposition that accounts for known two-way structure, (2) a
framework for two-way regularization in PCA and in Generalized PCA,
and (3) results and algorithms allowing one to compute (1) and (2) for
ultra high-dimensional data.  Specifically, beyond the previous work
in this area reviewed by \citet{holmes_2008}, we provide results
allowing for (i) 
general weighting matrices, (ii) computational approaches for
high-dimensional data, and (iii) a framework for two-way
regularization in the context of Generalized PCA.  As our methods are
flexible and general, they will enjoy wide applicability to a variety
of structured data sets common in medical imaging, remote sensing,
engineering, environmetrics, and networking.

The paper is organized as follows.  In Sections 2.1 through 2.5, we
develop the mathematical framework for our {\em Generalized
  Least Squares Matrix Decomposition} and resulting Generalized PCA.
In these sections, we assume that the weighting matrices, or {\em
  quadratic operators} are fixed and known.  Then, by discussing
interpretations of our approach and relations to existing
methodology, we present classes of quadratic operators in Section 2.6
that are appropriate for many structured data sets.  As the focus of
this paper is the statistical methodology for PCA with
high-dimensional structured data, our intent here is to give the
reader intuition on the quadratic operators and leave other aspects
for future work.  In Section ~\ref{section_gpmf}, we propose a general
framework for incorporating two-way regularization in PCA and
Generalized PCA.  We demonstrate how this framework leads to methods
for Sparse and Functional Generalized PCA.    In Section
~\ref{section_results}, we give
results on both simulated data and real functional MRI data.  We
conclude, in Section  
~\ref{section_discussion}, with a discussion of the 
assumptions, implications, and possible extensions of our work.

\section{A Generalized Least Squares Matrix Decomposition}
\label{section_gmd}

We present a generalization of the singular value decomposition (SVD)
that incorporates known dependencies or structural information about
the data. Our methods can be used to perform Generalized PCA for
massive structured data sets.


\subsection{GMD Problem}

Here and for the remainder of the paper, we will assume that the data,
$\X$ has previously been centered.
We begin by reviewing the SVD which can be written as $\X = \U \D \V^{T}$
with the data $\X \in \Re^{n \times p}$, and where $\U \in \Re^{n \times p}$
and $\V \in \Re^{p \times p}$ are orthonormal and $\D \in \Re^{p
  \times p}$ is diagonal with non-negative diagonal elements.  Suppose
one wishes to find the best low rank, $K < \mathrm{min}(n,p)$,
approximation to the data.  Recall that
the SVD gives the best rank-$K$
approximation with respect to the Frobenius norm
\citep{eckart_1936}:
\begin{align}
\label{svd}
\minimize_{\U: \mathrm{rank}(\U) = K,  \D: \D \in \mathcal{D},  \V:
  \mathrm{rank}(\V) = K} \ \ &  || \X - \U \D \V^{T} ||_{F}^{2}
  \nonumber \\
\textrm{subject to } \ \ &  \U^{T} \U = \I_{(K)}, \ \   \V^{T} \V =
  \I_{(K)} \ \ \& \ \ \mathrm{diag}(\D) \geq 0. 
\end{align}
Here, $\mathcal{D}$ is the class of diagonal matrices.  
The solution  is given by the first $K$ singular vectors and singular
values of $\X$. 

The Frobenius norm weights all errors equally in
the loss function.  We seek a generalization of
the SVD that allows 
for unequal weighting of the errors to account for structure or
known dependencies in the data.  To this end, we introduce a loss
function given by a transposable quadratic norm, the
$\Q,\R$-norm, defined as follows:
\begin{definition}
Let $\Q \in \Re^{n \times n}$ and $\R \in \Re^{p \times p}$ be
positive semi-definite matrices, $\Q, \R \succeq 0$.  Then the
$\Q,\R$-norm (or semi-norm) is defined as $|| \X ||_{\Q,\R} = \sqrt{
  \mathrm{tr} ( \Q 
  \X \R \X^{T} )}$.  
\end{definition}
We note that if $\Q$ and $\R$ are both positive definite, $\Q,\R \succ
0$, then the $\Q,\R$-norm is a proper matrix norm.  If $\Q$ or $\R$ are
positive semi-definite, then it is a semi-norm, meaning that for $\X
\neq 0$, $|| \X ||_{\Q,\R} = 0$ if $\X \in null(\Q)$ or if $\X^{T} \in
null(\R)$.  We call the $\Q,\R$-norm a transposable quadratic norm, as
it right and left multiplies $\X$ and is thus an extension of the quadratic norm, or 
``norm induced by an inner product space'' \citep{boyd_convex,
  horn_johnson}.    Note that $|| \X ||_{\Q} \triangleq || \X
||_{\Q,\mathbf{I}}$ and $|| \X^{T} ||_{\R} \triangleq || \X^{T}
||_{\R, \mathbf{I}}$ are
then quadratic norms.

The GMD is then taken to be the best
rank-$K$ approximation to the data with respect to the $\Q,\R$-norm:
\begin{align}
\label{gmd}
\minimize_{\U: \mathrm{rank}(\U) = K, \D: \D \in \mathcal{D}, \V: \mathrm{rank}(\V) = K}
\hspace{4mm} &  || \X - 
\U \D \V^{T} ||_{\Q,\R}^{2}  \nonumber \\
\textrm{subject to } \hspace{4mm} & \U^{T} \Q \U = \I_{(K)},
\ \  \V^{T} \R \V = \I_{(K)} \ \ \& \ \ \mathrm{diag}(\D) \geq 0.
\end{align}
So we do not confuse the elements of the GMD with that of the SVD, we
call $\U$ and $\V$ the left 
and right GMD factors respectively, and the diagonal elements of $\D$,
the GMD values.  The matrices $\Q$ and $\R$ are called the left and
right quadratic operators of the GMD respectively.
Notice that the left and right GMD factors
are constrained to be orthogonal in the inner product space induced by
the $\Q$-norm and $\R$-norm respectively.  Since the Frobenius
norm is a special case 
of the $\Q,\R$-norm, the GMD is in fact a generalization of the SVD.

We pause to motivate the rationale for finding a matrix decomposition
with respect to the $\Q,\R$-norm.  Consider a matrix-variate Gaussian
matrix, $\X \sim 
N_{n,p}( \U \D \V^{T} , \Q^{-1},\R^{-1})$, or $\mathrm{vec}(\X) \sim
N(\mathrm{vec}( \U \D \V^{T} ), \R^{-1} \otimes \Q^{-1})$ in terms of
the familiar multivariate normal.  The 
normal log-likelihood can be written as: 
\begin{align*}
\ell(\X | \Q^{-1}, \R^{-1} ) &\propto   \mathrm{tr}    \left( \Q (\X -
\U \D \V^{T}) \R    (\X - \U \D \V^{T})^{T} \right)  =    ||    \X  -
\U \D \V^{T} ||_{\Q,\R}^{2}.  
\end{align*}
Thus, as the Frobenius norm loss of the SVD is proportional to the
log-likelihood of the spherical Gaussian with mean $\U \D \V^{T}$, the
$\Q,\R$-norm loss is proportional to the log-likelihood of the
matrix-variate Gaussian with mean $\U \D \V^{T}$, row covariance
$\Q^{-1}$, and column covariance $\R^{-1}$.  
In general, using the $\Q,\R$-norm loss assumes that the
covariance of the data arises from the Kronecker product between the
row and column covariances, or that the
covariance structure is separable.  Several statistical tests exist to
check these assumptions with real data \citep{mitchell_2006,
  li_genton_2008}.   Hence, if the dependence structure of the data is
known, taking the quadratic operators to be the inverse
covariance or precision matrices directly accounts for two-way
dependencies in the data.  We assume that $\Q$ and $\R$ are
known in the development of the GMD and discuss possible choices for
these with structured data in Section \ref{section_interpret}.


\subsection{GMD Solution}
\label{section_gmd_sol}

The GMD solution, $\hat{\X} = \U^{*} \D^{*} (\V^{*})^{T}$, is
comprised of 
$\U^{*}, \D^{*}$ and $\V^{*}$, the optimal points minimizing the GMD
problem \eqref{gmd}.  
The following result states that the GMD solution is an SVD of an
altered data matrix. 
\begin{theorem}
\label{gmd_sol}
Suppose $\mathrm{rank}(\Q) = l$ and $\mathrm{rank}(\R) = m$.
Decompose $\Q$ and $\R$ by letting $\Q = \tilde{\Q} \tilde{\Q}^{T}$
and $\R = \tilde{\R} \tilde{\R}^{T}$ where $\tilde{\Q} \in \Re^{n
  \times l}$ and $\tilde{\R} \in \Re^{p \times m}$ are of full column
rank.  Define $\tilde{\X} = \tilde{\Q}^{T} \X \tilde{\R}$ and let
$\tilde{\X} = \tilde{\U} \tilde{\D} \tilde{\V}^{T}$ be the SVD of
$\tilde{\X}$.  Then, the GMD solution, $\hat{\X} = \U^{*} \D^{*}
(\V^{*})^{T}$,  is given by the GMD factors
$\U^{*} = \tilde{\Q}^{-1} \tilde{\U}$ and $\V^{*} = \tilde{\R}^{-1}
\tilde{\V}$ and the GMD values, $\D^{*} = \tilde{\D}$.  Here,
$\tilde{\Q}^{-1}$ and $\tilde{\R}^{-1}$ are any left matrix inverse:
$(\tilde{\Q}^{-1})^{T} \tilde{\Q} = \mathbf{I}_{(l)}$ and $(
\tilde{\R}^{-1})^{T} \tilde{\R} = \mathbf{I}_{(m)}$. 
\end{theorem}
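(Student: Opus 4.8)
The plan is to turn the GMD problem \eqref{gmd} into an ordinary SVD problem \eqref{svd} for the transformed matrix $\tilde{\X} = \tilde{\Q}^T \X \tilde{\R}$ via a change of variables, and then quote the Eckart--Young optimality of the SVD \citep{eckart_1936}. First I would rewrite the loss in Frobenius form. Substituting $\Q = \tilde{\Q}\tilde{\Q}^T$ and $\R = \tilde{\R}\tilde{\R}^T$ into the definition of the $\Q,\R$-norm and using the cyclic invariance of the trace,
\begin{align*}
\| \X - \U \D \V^T \|_{\Q,\R}^2 &= \mathrm{tr}\big( \tilde{\Q}\tilde{\Q}^T (\X - \U \D \V^T) \tilde{\R}\tilde{\R}^T (\X - \U \D \V^T)^T \big) \\
&= \big\| \tilde{\Q}^T (\X - \U \D \V^T) \tilde{\R} \big\|_F^2 = \big\| \tilde{\X} - \A \D \B^T \big\|_F^2,
\end{align*}
where I set the transformed factors $\A = \tilde{\Q}^T \U \in \Re^{l \times K}$ and $\B = \tilde{\R}^T \V \in \Re^{m \times K}$.

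Next I would check that the constraints transform compatibly. Since $\U^T \Q \U = (\tilde{\Q}^T \U)^T (\tilde{\Q}^T \U) = \A^T \A$ and likewise $\V^T \R \V = \B^T \B$, the orthogonality constraints $\U^T \Q \U = \I_{(K)}$ and $\V^T \R \V = \I_{(K)}$ are equivalent to $\A^T \A = \I_{(K)}$ and $\B^T \B = \I_{(K)}$ (which in turn force $\mathrm{rank}(\A) = \mathrm{rank}(\B) = K$). Hence both the objective and the feasible region of \eqref{gmd} depend on $(\U, \V)$ only through $(\A, \B)$, and in the variables $(\A, \D, \B)$ the GMD problem is exactly the SVD problem \eqref{svd} for the matrix $\tilde{\X} \in \Re^{l \times m}$. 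By the Eckart--Young theorem \citep{eckart_1936}, its minimizer is the rank-$K$ truncated SVD of $\tilde{\X}$, that is, $\A^* = \tilde{\U}$, $\B^* = \tilde{\V}$, $\D^* = \tilde{\D}$.

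Finally I would transfer this optimum back to the original variables and argue the two minima coincide. Any GMD-feasible $(\U, \D, \V)$ maps to an SVD-feasible $(\A, \D, \B)$ with the same objective value, so the GMD optimum is at least the SVD optimum. Conversely, setting $\U^* = \tilde{\Q}^{-1}\tilde{\U}$ and $\V^* = \tilde{\R}^{-1}\tilde{\V}$ for any left inverses, the identities $(\tilde{\Q}^{-1})^T \tilde{\Q} = \I_{(l)}$ and $(\tilde{\R}^{-1})^T \tilde{\R} = \I_{(m)}$ give $\tilde{\Q}^T \U^* = \tilde{\U}$ and $\tilde{\R}^T \V^* = \tilde{\V}$; thus $\U^*, \V^*$ reproduce the optimal transformed factors and therefore satisfy $(\U^*)^T \Q \U^* = \tilde{\U}^T \tilde{\U} = \I_{(K)}$, $(\V^*)^T \R \V^* = \I_{(K)}$, and attain the SVD optimum as their GMD loss. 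This exhibits a GMD-feasible point realizing the SVD minimum, so the two optima are equal and $(\U^*, \D^*, \V^*)$ solves \eqref{gmd}.

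The step needing the most care is the rank-deficient case $l < n$ or $m < p$: there the map $\U \mapsto \tilde{\Q}^T \U$ is not injective, so I must lean on the observation that the problem data depend on $\U, \V$ only through $\A, \B$, and verify that the (non-unique) left-inverse reconstruction indeed lands back in the feasible set and realizes the minimum. When $\Q, \R \succ 0$ the matrices $\tilde{\Q}, \tilde{\R}$ are square and invertible, and the argument collapses to a clean invertible change of variables.
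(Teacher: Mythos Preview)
Your proposal is correct and follows essentially the same route as the paper: both reduce the GMD problem to the ordinary SVD problem for $\tilde{\X} = \tilde{\Q}^T \X \tilde{\R}$ by substituting the factorizations $\Q = \tilde{\Q}\tilde{\Q}^T$, $\R = \tilde{\R}\tilde{\R}^T$ and checking that the objective and the orthogonality constraints translate. The paper plugs the specific candidate $(\U^*,\D^*,\V^*)$ into the GMD objective and simplifies to $\|\tilde{\X}-\tilde{\U}\tilde{\D}\tilde{\V}^T\|_F^2$, whereas you carry out the general change of variables $\A=\tilde{\Q}^T\U$, $\B=\tilde{\R}^T\V$ and then argue both inequalities between the optima; your version is a bit more explicit about why equality of the objectives at the candidate actually delivers global optimality (and about the rank-deficient case), which the paper leaves implicit.
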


We make some brief comments regarding this
result.  First, the decomposition, $\Q = \tilde{\Q} \tilde{\Q}^{T}$
where $\tilde{\Q} \in \Re^{n \times l}$ is of full column rank, exists since
$\Q \succeq 0$ \citep{horn_johnson}; the decomposition for $\R$ exists
similarly.  A possible form of this decomposition, the resulting
$\tilde{\X}$, and the GMD solution $\hat{\X}$ can be obtained from the
eigenvalue decomposition of $\Q$ and $\R$: $\Q = \Gam_{\Q} \Lam_{\Q}
\Gam_{\Q}^{T}$ and $\R = \Gam_{\R} \Lam_{\R} \Gam_{\R}^{T}$.  If $\Q$
is full rank, then we can take $\tilde{\Q} = \Q^{1/2} = \Gam_{\Q} 
\Lam_{\Q}^{1/2} \Gam_{\Q}^{T}$, giving the GMD factor $\U^{*} =
\Gam_{\Q} \Lam_{\Q}^{-1/2} \Gam_{\Q}^{T} \tilde{\U}$, and similarly
for $\R$ and $\V$.  On the other 
hand, if $\Q \succeq 0$, then a possible value for $\tilde{\Q}$ is
$\Gam_{\Q}(\cdot,1:l) \Lam^{1/2}_{\Q}(1:l,1:l)$, giving an $n \times l$
matrix with full column rank.  The GMD factor, $\U^{*}$ is then given
by $\Gam_{\Q}(\cdot, 1:l) \Lam_{\Q}^{-1/2}(1:l,1:l) \tilde{\U}$.

From Theorem ~\ref{gmd_sol}, we see that the GMD solution can be
obtained from the SVD of $\tilde{\X}$.  Let us assume for a moment
that $\X$ is matrix-variate Gaussian with row and column covariance
$\Q^{-1}$ and $\R^{-1}$ respectively.  Then, the GMD is like taking
the SVD of the {\em sphered} data, as right and left multiplying by
$\tilde{\Q}$ and $\tilde{\R}$ yields data with identity covariance.
In other words, the GMD de-correlates the data so that the SVD with
equally weighted errors is appropriate.  While the GMD values are the
singular values of this sphered data, the covariance is multiplied
back into the GMD factors.

This relationship to the matrix-variate normal also begs the question,
if one has data with row and column correlations, why not take the SVD
of the two-way sphered or ``whitened'' data?  This is inadvisable for
several reasons.  First, pre-whitening the data and then taking the
SVD yields matrix factors that are in the wrong basis and are thus
uninterpretable in the original data space.  Given this, one may
advocate pre-whitening, taking the SVD and then re-whitening, or in
other words multiplying the correlation back in to the SVD factors.
This approach, however, is still problematic.  In the special case
where $\Q$ and $\R$ are of full rank, the GMD solution is exactly to
same as this pre and re-whitening approach.  In the general case where
$\Q$ and $\R$ are positive semi-definite, however, whitening cannot be
directly performed as the covariances are not full rank.  In the
following section, we will show that the GMD solution can be
computed without taking any matrix inverses, square roots or
eigendecompositions and is thus computationally more attractive than
naive whitening.  Finally, as our eventual goal is to develop a
framework for both structured data and two-way regularization, naive
pre-whitening and then re-whitening of the data would destroy any
estimated sparsity or smoothness from regularization methods and is
thus undesirable.  Therefore, the
GMD solution given in Theorem ~\ref{gmd_sol} is the mathematically
correct way to ``whiten'' the data in the context of the SVD and is
superior to a naive whitening approach.

Next, we explore some of the mathematical properties of our GMD solution,
$\hat{\X} = \U^{*} \D^{*} (\V^{*})^{T}$ in the following corollaries:  
\begin{corollary}
\label{cor_global}
The GMD solution is the global minimizer of the GMD problem,
\eqref{gmd}. 
\end{corollary}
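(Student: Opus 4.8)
The plan is to reduce the generalized problem \eqref{gmd} to an ordinary best rank-$K$ approximation problem in the Frobenius norm, for which global optimality is settled by the Eckart--Young theorem \citep{eckart_1936}, and then to verify that the GMD solution of Theorem~\ref{gmd_sol} is exactly the image of the Eckart--Young minimizer under the relevant change of variables. First I would rewrite the objective using the factorizations $\Q = \tilde{\Q}\tilde{\Q}^{T}$ and $\R = \tilde{\R}\tilde{\R}^{T}$ together with the cyclic invariance of the trace: since $\| \X - \U\D\V^{T}\|_{\Q,\R}^{2} = \mathrm{tr}\left( \tilde{\Q}\tilde{\Q}^{T}(\X-\U\D\V^{T})\tilde{\R}\tilde{\R}^{T}(\X-\U\D\V^{T})^{T}\right)$, moving $\tilde{\Q}$ around the trace turns this into $\| \tilde{\Q}^{T}(\X-\U\D\V^{T})\tilde{\R}\|_{F}^{2}$, which upon distributing equals $\| \tilde{\X} - (\tilde{\Q}^{T}\U)\,\D\,(\tilde{\R}^{T}\V)^{T}\|_{F}^{2}$.

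Next I would introduce the change of variables $\A = \tilde{\Q}^{T}\U$ and $\B = \tilde{\R}^{T}\V$. The objective depends on $(\U,\V)$ only through $(\A,\B)$, and the constraints transform cleanly: $\U^{T}\Q\U = (\tilde{\Q}^{T}\U)^{T}(\tilde{\Q}^{T}\U) = \A^{T}\A$, and similarly $\V^{T}\R\V = \B^{T}\B$, so the orthogonality constraints become $\A^{T}\A = \I_{(K)}$ and $\B^{T}\B = \I_{(K)}$, while the constraint on $\D$ is untouched. Thus \eqref{gmd} becomes an instance of \eqref{svd} with $\X$ replaced by $\tilde{\X}$ and $(\U,\V)$ by $(\A,\B)$, whose global minimizer by Eckart--Young is $\A^{*} = \tilde{\U}$, $\B^{*} = \tilde{\V}$, $\D^{*} = \tilde{\D}$. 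To finish, I would check that the Theorem~\ref{gmd_sol} factors map to this minimizer: since any left inverse satisfies $\tilde{\Q}^{T}\tilde{\Q}^{-1} = \I_{(l)}$, we get $\tilde{\Q}^{T}\U^{*} = \tilde{\Q}^{T}\tilde{\Q}^{-1}\tilde{\U} = \tilde{\U}$, and likewise $\tilde{\R}^{T}\V^{*} = \tilde{\V}$, so the GMD solution attains the Eckart--Young value.

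The step I expect to require the most care is justifying that the minimization over the original feasible set and over the transformed feasible set have the same value, which is the only place the possible rank-deficiency of $\Q$ and $\R$ enters. Because $\tilde{\Q}\in\Re^{n\times l}$ has full column rank, $\tilde{\Q}^{T}$ has full row rank, so for every $\A$ with $\A^{T}\A = \I_{(K)}$ the linear system $\tilde{\Q}^{T}\U = \A$ is consistent (for instance $\U = \tilde{\Q}^{-1}\A$), and the same holds for $\B$; hence the map $(\U,\V)\mapsto(\A,\B)$ carries the feasible set of \eqref{gmd} onto the feasible set of the transformed problem. This surjectivity guarantees that no feasible point of \eqref{gmd} can improve on the Eckart--Young value and that this value is in fact achieved, so the solution of Theorem~\ref{gmd_sol} is the global minimizer. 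Throughout I would assume $K \le \min(l,m)$, so that the transformed feasible set is nonempty.
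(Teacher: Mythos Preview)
Your proposal is correct and follows essentially the same route as the paper: the paper's own argument (in the proof of Theorem~\ref{gmd_sol} together with the one-line proof of the corollaries) reduces the GMD objective and constraints to those of the Frobenius-norm SVD problem for $\tilde{\X}$ via the same substitution $\A=\tilde{\Q}^{T}\U$, $\B=\tilde{\R}^{T}\V$, and then invokes Eckart--Young. Your treatment is in fact more careful than the paper's in explicitly verifying the surjectivity of $(\U,\V)\mapsto(\A,\B)$ onto the transformed feasible set, which is exactly the point needed to rule out a strictly better feasible $(\U,\V)$ in the rank-deficient case; the paper leaves this implicit.
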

\begin{corollary}
\label{cor_k}
Assume that $range(\Q) \cap  null(\X) = \emptyset$ and $range(\R) \cap
null (\X) = \emptyset$.   Then,
there exists at most $k = \mathrm{min}\{ \mathrm{rank}(\X),
\mathrm{rank}(\Q), \mathrm{rank}(\R) \}$ non-zero GMD values and
corresponding left and right GMD factors.  
\end{corollary}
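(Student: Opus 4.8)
The plan is to read the GMD values off Theorem~\ref{gmd_sol} and then bound the rank of the transformed matrix $\tilde{\X} = \tilde{\Q}^{T} \X \tilde{\R}$. By Theorem~\ref{gmd_sol} the GMD values equal the singular values of $\tilde{\X}$, so the number of non-zero GMD values, together with their paired left and right factors (the columns of $\U^{*} = \tilde{\Q}^{-1}\tilde{\U}$ and $\V^{*} = \tilde{\R}^{-1}\tilde{\V}$), is exactly $\mathrm{rank}(\tilde{\X})$. Hence the whole statement reduces to controlling $\mathrm{rank}(\tilde{\Q}^{T} \X \tilde{\R})$.

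For the upper bound I would simply apply submultiplicativity of rank to the triple product. Since $\tilde{\Q} \in \Re^{n \times l}$ and $\tilde{\R} \in \Re^{p \times m}$ have full column rank, $\mathrm{rank}(\tilde{\Q}^{T}) = l = \mathrm{rank}(\Q)$ and $\mathrm{rank}(\tilde{\R}) = m = \mathrm{rank}(\R)$, and moreover $\mathrm{range}(\tilde{\R}) = \mathrm{range}(\R)$ and $\mathrm{null}(\tilde{\Q}^{T}) = \mathrm{range}(\Q)^{\perp}$. Applying $\mathrm{rank}(\A\B) \le \min\{\mathrm{rank}(\A),\mathrm{rank}(\B)\}$ twice gives
\begin{align*}
\mathrm{rank}(\tilde{\Q}^{T} \X \tilde{\R}) &\le \min \{ \mathrm{rank}(\tilde{\Q}^{T}), \ \mathrm{rank}(\X), \ \mathrm{rank}(\tilde{\R}) \} \\
&= \min \{ \mathrm{rank}(\Q), \ \mathrm{rank}(\X), \ \mathrm{rank}(\R) \} = k.
\end{align*}
This already yields the claimed ``at most $k$'' bound, and notably it needs no appeal to the hypotheses on $\Q$, $\R$, and $\X$.

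The role of those hypotheses is to guarantee the bound is attained, i.e.\ $\mathrm{rank}(\tilde{\X}) = k$, so the GMD carries $k$ non-trivial components rather than collapsing to fewer. Here I would use the sharper identity $\mathrm{rank}(\A\B) = \mathrm{rank}(\B) - \dim(\mathrm{null}(\A) \cap \mathrm{range}(\B))$. Taking $\B = \tilde{\R}$ and using $\mathrm{range}(\tilde{\R}) = \mathrm{range}(\R)$, the assumption $\mathrm{range}(\R) \cap \mathrm{null}(\X) = \{\mathbf{0}\}$ makes $\X$ injective on $\mathrm{range}(\R)$, so no rank is lost and $\mathrm{rank}(\X\tilde{\R}) = m$. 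Applying the identity again to $\tilde{\Q}^{T}(\X\tilde{\R})$ with $\mathrm{null}(\tilde{\Q}^{T}) = \mathrm{range}(\Q)^{\perp}$, the companion hypothesis (naturally read on $\X^{T}$, so that $\mathrm{range}(\Q)$ meets $\mathrm{range}(\X)^{\perp} = \mathrm{null}(\X^{T})$ only at the origin) prevents any further rank loss on the left, and combining the two steps gives $\mathrm{rank}(\tilde{\X}) = k$.

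I expect the main obstacle to be the two-sided bookkeeping in this last step: unlike the clean one-sided cut by $\tilde{\R}$, left multiplication by $\tilde{\Q}^{T}$ acts on $\mathrm{range}(\X\tilde{\R})$ rather than on $\mathrm{range}(\X)$, so one must verify that $\mathrm{range}(\Q)^{\perp} \cap \mathrm{range}(\X\tilde{\R})$ has exactly the dimension forced by the hypotheses, and carefully handle the genuinely rank-deficient positive semi-definite case in which $\tilde{\Q}$ and $\tilde{\R}$ are tall. A clean way to reduce the work is to note that the argument is symmetric in the two operators, so it suffices to prove the no-rank-loss statement on one side and then pass to transposes for the other.
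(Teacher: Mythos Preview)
Your proposal is correct and follows essentially the same route as the paper: both reduce the count of non-zero GMD values to $\mathrm{rank}(\tilde{\X})$ via Theorem~\ref{gmd_sol} and then control that rank using the inequality $\mathrm{rank}(\A\B) \le \min\{\mathrm{rank}(\A),\mathrm{rank}(\B)\}$ together with its equality case. You are in fact more careful than the paper's terse argument, correctly separating the ``at most $k$'' bound (which needs no hypothesis) from the equality $\mathrm{rank}(\tilde{\X}) = k$ (which does, and which the paper's proof actually asserts and which Corollary~\ref{cor_error} relies on), and flagging the two-sided bookkeeping that the paper glosses over.
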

\begin{corollary}
\label{cor_error}
With the assumptions and $k$ defined as in Corollary ~\ref{cor_k}, the
rank-$k$ GMD solution has zero
reconstruction error in the $\Q,\R$-norm.  If in addition, $k =
\mathrm{rank}(\X)$ and $\Q$ and $\R$ are full rank, then $\X \equiv
\U^{*} \D^{*} (\V^{*})^{T}$, that is, the GMD is an exact matrix
decomposition.  
\end{corollary}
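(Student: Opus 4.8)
The plan is to reduce the $\Q,\R$-norm reconstruction error to an ordinary Frobenius-norm error for the transformed matrix $\tilde{\X}$, for which the behaviour of the truncated SVD is classical. The starting observation is an isometry: for any $\Mmat$, writing $\Q = \tilde{\Q}\tilde{\Q}^T$ and $\R = \tilde{\R}\tilde{\R}^T$ and using the cyclic invariance of the trace,
\[
|| \Mmat ||_{\Q,\R}^2 = \mathrm{tr}(\Q \Mmat \R \Mmat^T) = \mathrm{tr}\big( (\tilde{\Q}^T \Mmat \tilde{\R})(\tilde{\Q}^T \Mmat \tilde{\R})^T \big) = || \tilde{\Q}^T \Mmat \tilde{\R} ||_F^2 .
\]
First I would apply this with $\Mmat = \X - \hat{\X}$, so that the reconstruction error equals $|| \tilde{\X} - \tilde{\Q}^T \hat{\X} \tilde{\R} ||_F^2$, since $\tilde{\Q}^T \X \tilde{\R} = \tilde{\X}$ by definition.

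The second step is to simplify $\tilde{\Q}^T \hat{\X} \tilde{\R}$ using the left-inverse identities from Theorem \ref{gmd_sol}. Transposing $(\tilde{\Q}^{-1})^T \tilde{\Q} = \I_{(l)}$ gives $\tilde{\Q}^T \tilde{\Q}^{-1} = \I_{(l)}$, hence $\tilde{\Q}^T \U^* = \tilde{\Q}^T \tilde{\Q}^{-1}\tilde{\U} = \tilde{\U}$; likewise $(\V^*)^T \tilde{\R} = \tilde{\V}^T (\tilde{\R}^{-1})^T \tilde{\R} = \tilde{\V}^T$. Restricting to the $k$ retained components, this shows $\tilde{\Q}^T \hat{\X} \tilde{\R}$ is exactly the rank-$k$ truncated SVD $\tilde{\U}_{(k)} \tilde{\D}_{(k)} \tilde{\V}_{(k)}^T$ of $\tilde{\X}$, so the reconstruction error is precisely the Frobenius error of the best rank-$k$ approximation of $\tilde{\X}$.

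The third step is to observe that this error vanishes. Because $\tilde{\X} = \tilde{\Q}^T \X \tilde{\R}$ is a product of matrices of ranks $l$, $\mathrm{rank}(\X)$, and $m$, its rank is at most $k = \min\{\mathrm{rank}(\X), \mathrm{rank}(\Q), \mathrm{rank}(\R)\}$, so the rank-$k$ truncation reproduces $\tilde{\X}$ exactly (Eckart--Young) and $|| \tilde{\X} - \tilde{\U}_{(k)}\tilde{\D}_{(k)}\tilde{\V}_{(k)}^T ||_F = 0$. This yields the claimed zero reconstruction error in the $\Q,\R$-(semi)norm.

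For the final ``exact decomposition'' clause, I would add that when $\Q$ and $\R$ are full rank, $l = n$ and $m = p$, so $\tilde{\Q}$ and $\tilde{\R}$ are square and invertible; a vanishing Frobenius norm then forces $\tilde{\Q}^T(\X - \hat{\X})\tilde{\R} = 0$, and left/right multiplication by the genuine inverses gives $\X = \hat{\X}$. (Equivalently, positive-definite $\Q,\R$ make the $\Q,\R$-norm a proper norm, so zero error means zero residual.) I expect no deep obstacle here; the one point demanding care is the bookkeeping of the non-square ``left inverses'' $\tilde{\Q}^{-1},\tilde{\R}^{-1}$ in the semi-definite case --- ensuring the identities $\tilde{\Q}^T\tilde{\Q}^{-1} = \I_{(l)}$ are applied on the correct side and that truncation to $k$ components is consistent with $\mathrm{rank}(\tilde{\X}) \le k$.
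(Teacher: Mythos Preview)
Your proposal is correct and follows precisely the approach the paper intends: the paper's proof of this corollary is a one-line appeal to ``the properties of the SVD \ldots\ and the relationship between the GMD and SVD given in Theorem~\ref{gmd_sol},'' and your argument is exactly a careful unpacking of that remark via the isometry $||\Mmat||_{\Q,\R} = ||\tilde{\Q}^{T}\Mmat\tilde{\R}||_F$ together with the left-inverse identities. Your handling of the semi-definite bookkeeping (using $\tilde{\Q}^{T}\tilde{\Q}^{-1}=\I_{(l)}$ obtained by transposing the defining relation) is the right move, and your observation that $\mathrm{rank}(\tilde{\X})\le k$ follows from the product bound alone --- so the range assumptions of Corollary~\ref{cor_k} are not actually needed for zero reconstruction error --- is correct and slightly sharper than what the corollary statement requires.
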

\begin{corollary}
\label{cor_unique}
The GMD values, $\D^{*}$, are unique up to multiplicity. 
If in addition, $\Q$ and $\R$
are full rank and the non-zero GMD values are distinct, 
then the GMD factors $\U^{*}$ and $\V^{*}$ corresponding to the
non-zero GMD values are essentially unique (up
to a sign change) and the GMD factorization is unique.  
\end{corollary}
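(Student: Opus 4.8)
The plan is to derive both claims from Theorem~\ref{gmd_sol}, which identifies the GMD factors and values with the singular vectors and singular values of $\tilde{\X} = \tilde{\Q}^T \X \tilde{\R}$, and then invoke the classical uniqueness theory of the SVD. The reduction itself is clean: the identity $(\tilde{\Q}^{-1})^T\tilde{\Q} = \I_{(l)}$ gives the change of variables $\tilde{\U} = \tilde{\Q}^T\U$, $\tilde{\V} = \tilde{\R}^T\V$, $\tilde{\D} = \D$, under which the constraints $\U^T\Q\U = \I$, $\V^T\R\V = \I$ become $\tilde{\U}^T\tilde{\U} = \I$, $\tilde{\V}^T\tilde{\V} = \I$ and the $\Q,\R$-norm objective becomes $|| \tilde{\X} - \tilde{\U}\tilde{\D}\tilde{\V}^T ||_F^2$. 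The only genuinely new point is that the reduction depends on a choice of factorizations $\Q = \tilde{\Q}\tilde{\Q}^T$, $\R = \tilde{\R}\tilde{\R}^T$, which are not unique, so before citing SVD uniqueness I must check that the relevant quantities do not depend on these choices.

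For the GMD values, the reduction shows that any minimizer of \eqref{gmd} corresponds to a best rank-$K$ Frobenius approximation of $\tilde{\X}$, whence $\D^*$ equals the $K$ largest singular values of $\tilde{\X}$; these are always unique as a multiset. To see this multiset does not depend on the factorization, I would prove the invariance lemma that if $\tilde{\Q}_1,\tilde{\Q}_2 \in \Re^{n\times l}$ have full column rank and satisfy $\tilde{\Q}_1\tilde{\Q}_1^T = \tilde{\Q}_2\tilde{\Q}_2^T = \Q$, then $\tilde{\Q}_2 = \tilde{\Q}_1\Omega_Q$ for some orthogonal $\Omega_Q \in \Re^{l\times l}$. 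This follows by comparing thin singular value decompositions: equality of the symmetric products forces matching singular values and a common column space (the range of $\Q$), so the factors differ by an orthogonal matrix on the right. Applying the same fact to $\R$ and writing $\tilde{\R}_2 = \tilde{\R}_1\Omega_R$ gives $\tilde{\X}_2 = \Omega_Q^T\tilde{\X}_1\Omega_R$, and since singular values are invariant under orthogonal multiplication on either side, $\tilde{\X}_1$ and $\tilde{\X}_2$ share singular values. Hence $\D^*$ is well defined and unique up to multiplicity.

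For the factors under the full-rank and distinctness hypotheses, I would transfer the standard SVD uniqueness statement through $\U^* = \tilde{\Q}^{-1}\tilde{\U}$ and $\V^* = \tilde{\R}^{-1}\tilde{\V}$. When $\Q,\R$ are full rank, $\tilde{\Q},\tilde{\R}$ are square and invertible, so the change of variables is a bijection and introduces no extra freedom. Because the nonzero singular values of $\tilde{\X}$ are distinct, the corresponding $\tilde{\U},\tilde{\V}$ are unique up to a common sign flip within each left-right pair, so $\U^*,\V^*$ inherit exactly this paired sign ambiguity. I would then verify the residual orthogonal ambiguity from the factorization cancels: from $\tilde{\Q}_2 = \tilde{\Q}_1\Omega_Q$ one checks (using the convention $(\tilde{\Q}^{-1})^T\tilde{\Q}=\I$) that $\tilde{\Q}_2^{-1} = \tilde{\Q}_1^{-1}\Omega_Q$ while $\tilde{\U}_2 = \Omega_Q^T\tilde{\U}_1$, so $\tilde{\Q}_2^{-1}\tilde{\U}_2 = \tilde{\Q}_1^{-1}\tilde{\U}_1$ and $\Omega_Q$ drops out. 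This yields essential uniqueness of $\U^*,\V^*$ up to sign; since the sign flips occur in matched pairs they cancel in $\hat{\X} = \U^*\D^*(\V^*)^T$, giving uniqueness of the factorization (using Corollary~\ref{cor_global} to identify $\hat{\X}$ as the unique global minimizer).

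The main obstacle is precisely this bookkeeping of non-uniqueness: separating the benign orthogonal ambiguity arising from the square-root factorizations of $\Q$ and $\R$ (which must, and does, cancel) from the genuine sign ambiguity inherited from the SVD. It is also where the hypotheses enter — full rank of $\Q,\R$ makes $\tilde{\Q},\tilde{\R}$ invertible so the correspondence with the SVD is bijective and no extra degrees of freedom appear, while distinctness of the nonzero values pins the singular vectors down to a sign. I would close by noting that restricting to factors corresponding to \emph{nonzero} GMD values is necessary, since singular vectors associated with zero singular values are only determined up to orthogonal rotation within their subspace and hence cannot be unique.
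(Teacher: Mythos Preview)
Your proposal is correct and follows the same route as the paper: reduce the GMD to the SVD of $\tilde{\X}$ via Theorem~\ref{gmd_sol} and invoke the classical uniqueness properties of the SVD. The paper's own argument is a one-line appeal to those properties, whereas you additionally verify that the GMD values and factors are independent of the choice of factorizations $\Q=\tilde{\Q}\tilde{\Q}^T$ and $\R=\tilde{\R}\tilde{\R}^T$; this extra bookkeeping is sound and fills a gap the paper leaves implicit.
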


Some further comments on these results are warranted.  In particular,
Theorem ~\ref{gmd_sol} is straightforward and less interesting
when $\Q$ and $\R$ are full rank, the case discussed in
\citep{escoufier_2006, purdom_2007, holmes_2008}.
When the quadratic 
operators are positive semi-definite, however, the fact that a global
minimizer to the GMD problem, which is non-convex, that has a closed
form can be obtained is 
not immediately clear.  The result stems from the relation of the GMD
to the SVD and the fact that the latter is a unique matrix
decomposition in a lower dimensional subspace defined in Corollary
\ref{cor_k}.    Also note that when $\Q$ and $\R$ are full rank the
GMD is an exact matrix 
decomposition; in the alternative scenario, we do not recover $\X$
exactly, but 
obtain zero reconstruction error in the $\Q,\R$-norm.  Finally, we
note that when $\Q$ and $\R$ are rank deficient, there are possibly
many optimal points for the GMD factors, although the resulting GMD
solution is still a global minimizer of the GMD problem.
In Section 2.6, we will see that permitting the quadratic operators to
be positive semi-definite allows for much greater flexibility when
modeling high-dimensional structured data.

\subsection{GMD Algorithm}

While the GMD solution given in Theorem ~\ref{gmd_sol} is
conceptually simple, its computation, based on computing $\tilde{\Q}$,
$\tilde{\R}$ and the SVD of
$\tilde{\X}$,  is infeasible for massive data
sets common in neuroimaging.  We seek a method of obtaining the 
GMD solution that avoids computing and storing $\tilde{\Q}$ and
$\tilde{\R}$ and thus permits our methods to be used with massive
structured data sets.  

\begin{algorithm}[!!h]
\caption{GMD Algorithm (Power Method)}
\label{gmd_alg}
\begin{enumerate}
\item Let $\hat{\X}^{(1)} = \X$ and initialize $\uvec_{1}$ and
  $\vvec_{1}$.  
\item For $k = 1 \ldots K$:
\begin{enumerate}
\item Repeat until convergence:
\begin{itemize}
\item Set $\uvec_{k} = \frac{\hat{\X}^{(k)} \R
  \vvec_{k}}{||\hat{\X}^{(k)} \R  \vvec_{k}||_{\Q}}$. 
\item Set $\vvec_{k} = \frac{(\hat{\X}^{(k)})^{T} \Q
  \uvec_{k}}{||(\hat{\X}^{(k)})^{T} \Q  \uvec_{k}||_{\R}}$. 
\end{itemize}
\item Set $d_{k} = \uvec_{k}^{T} \Q \hat{\X}^{(k)} \R \vvec_{k}$. 
\item Set $\hat{\X}^{(k+1)} = \hat{\X}^{(k)} - \uvec_{k} d_{k}
  \vvec_{k}^{T}$. 
\end{enumerate}
\item Return $\U^{*} = [ \uvec_{1}, \ldots, \uvec_{K}]$, $\V^{*} = [ \vvec_{1},
  \ldots \vvec_{K} ]$ and $\D^{*} = \mathrm{diag}(d_{1}, \ldots d_{K})$.
\end{enumerate}
\end{algorithm}

\begin{proposition}
\label{gmd_alg_sol}
If $\hat{\uvec}$ and $\hat{\vvec}$ are initialized such that
$\hat{\uvec}^{T} \Q \uvec^{*} \neq 0$ and $\hat{\vvec}^{T} \Q
\vvec^{*} \neq 0$, then
Algorithm ~\ref{gmd_alg} converges to the GMD solution given in
Theorem ~\ref{gmd_sol}, the global minimizer of the GMD problem. If in
addition, $\Q$ and 
$\R$ are full rank, then it converges to the unique global solution.
\end{proposition}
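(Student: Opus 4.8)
The plan is to reduce Algorithm~\ref{gmd_alg}, via the change of variables $\tilde{\uvec} = \tilde{\Q}^{T}\uvec$ and $\tilde{\vvec} = \tilde{\R}^{T}\vvec$, to the classical alternating power method for the SVD of $\tilde{\X}$, whose convergence is standard, and then transfer that convergence back to the original iterates through Theorem~\ref{gmd_sol}. The first fact I would record is that both the objective and the updates depend on the iterates only through these transformed vectors. Since $\Q = \tilde{\Q}\tilde{\Q}^{T}$ and $\R = \tilde{\R}\tilde{\R}^{T}$, a short trace computation gives $|| \X - \hat{\X} ||_{\Q,\R}^{2} = || \tilde{\Q}^{T}(\X - \hat{\X})\tilde{\R} ||_{F}^{2} = || \tilde{\X} - \tilde{\Q}^{T}\hat{\X}\tilde{\R} ||_{F}^{2}$, so the GMD objective is exactly the Frobenius reconstruction error of $\tilde{\X}$ by $\tilde{\Q}^{T}\hat{\X}\tilde{\R}$; minimizing \eqref{gmd} is therefore equivalent to finding the best rank-$K$ approximation of $\tilde{\X}$. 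Likewise $|| \uvec ||_{\Q} = || \tilde{\Q}^{T}\uvec ||_{2}$, and $\Q\uvec = \tilde{\Q}\tilde{\uvec}$, $\R\vvec = \tilde{\R}\tilde{\vvec}$.

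With these identities I would show that the inner loop is the classical power iteration on $\tilde{\X}^{(k)} \triangleq \tilde{\Q}^{T}\hat{\X}^{(k)}\tilde{\R}$. Applying $\tilde{\Q}^{T}$ to the $\uvec$-update and using $\tilde{\Q}^{T}\hat{\X}^{(k)}\R\vvec = \tilde{\X}^{(k)}\tilde{\vvec}$ together with $|| \hat{\X}^{(k)}\R\vvec ||_{\Q} = || \tilde{\X}^{(k)}\tilde{\vvec} ||_{2}$ yields $\tilde{\uvec} = \tilde{\X}^{(k)}\tilde{\vvec}/|| \tilde{\X}^{(k)}\tilde{\vvec} ||_{2}$, and symmetrically $\tilde{\vvec} = (\tilde{\X}^{(k)})^{T}\tilde{\uvec}/|| (\tilde{\X}^{(k)})^{T}\tilde{\uvec} ||_{2}$. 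These are exactly the normalized power iterations for the dominant singular triple of $\tilde{\X}^{(k)}$. The initialization hypothesis transfers cleanly: because $\tilde{\Q}^{T}\tilde{\Q}^{-1} = \I_{(l)}$ we have $\Q\uvec^{*} = \tilde{\Q}\tilde{\U}$, hence $\hat{\uvec}^{T}\Q\uvec^{*} = (\tilde{\Q}^{T}\hat{\uvec})^{T}\tilde{\U} = \tilde{\uvec}_{0}^{T}\tilde{\U}$, so the stated condition says precisely that the transformed starting vectors have nonzero projection onto the leading singular vectors of $\tilde{\X}$.

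I would then invoke the classical convergence of the power method: under this initialization, $\tilde{\uvec}_{k}, \tilde{\vvec}_{k}$ converge to the leading left and right singular vectors $\tilde{\U}_{k}, \tilde{\V}_{k}$ of $\tilde{\X}^{(k)}$, and $d_{k} = \uvec_{k}^{T}\Q\hat{\X}^{(k)}\R\vvec_{k} = \tilde{\uvec}_{k}^{T}\tilde{\X}^{(k)}\tilde{\vvec}_{k} \to \tilde{\D}_{kk}$. For the deflation step I would verify that $\tilde{\Q}^{T}\hat{\X}^{(k+1)}\tilde{\R} = \tilde{\X}^{(k)} - \tilde{\U}_{k}\tilde{\D}_{kk}\tilde{\V}_{k}^{T}$, which is exactly the rank-one SVD deflation, so by induction $\tilde{\X}^{(k)}$ is $\tilde{\X}$ with its first $k-1$ singular components removed and the outer loop reproduces the SVD of $\tilde{\X}$. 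Since each update depends on the previous iterate only through its transform, $\uvec_{k} = \hat{\X}^{(k)}\tilde{\R}\tilde{\vvec}_{k}/|| \cdot ||_{\Q}$ is a continuous function of $\tilde{\vvec}_{k}$ and therefore converges to a fixed vector; the limiting $(\U^{*}, \D^{*}, \V^{*})$ is feasible ($\uvec_{j}^{T}\Q\uvec_{k} = \tilde{\U}_{j}^{T}\tilde{\U}_{k} = \delta_{jk}$, and analogously for $\V$, with $d_{k}\ge 0$) and, by the objective identity above, attains the minimum, so it is the GMD solution of Theorem~\ref{gmd_sol}. When $\Q$ and $\R$ are full rank the transform is a bijection and Corollary~\ref{cor_unique} promotes this to the unique global solution.

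The main obstacle is the positive semi-definite case. There the map $\uvec \mapsto \tilde{\Q}^{T}\uvec$ is not injective, so convergence of the transformed iterates does not by itself pin down $\uvec_{k}$, and the factors are determined only up to the null spaces of $\Q$ and $\R$, consistent with Corollary~\ref{cor_unique}. I would resolve this with the two structural observations above: the objective depends on $\hat{\X}$ only through $\tilde{\Q}^{T}\hat{\X}\tilde{\R}$, and every update factors through the transform, so $\uvec_{k}, \vvec_{k}$ still converge to well-defined limits and the algorithm returns a genuine global minimizer even when the factors are non-unique. A secondary point requiring care is the power-method convergence itself, which needs a spectral gap; I would state it for a strictly dominant singular value and note that with a repeated leading singular value the iterates converge to a vector in the corresponding singular subspace, which still yields a valid GMD solution.
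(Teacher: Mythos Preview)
Your approach is essentially the same as the paper's: both reduce Algorithm~\ref{gmd_alg}, via the substitution linking $\uvec,\vvec$ to $\tilde{\uvec}=\tilde{\Q}^{T}\uvec,\ \tilde{\vvec}=\tilde{\R}^{T}\vvec$, to the classical power method for the SVD of $\tilde{\X}$ and then appeal to its known convergence together with Theorem~\ref{gmd_sol}. Your write-up is in fact more careful than the paper's brief sketch, explicitly tracking the initialization condition, the deflation step, the semi-definite (non-injective) case, and the spectral-gap caveat, all of which the paper leaves implicit.
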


In Algorithm \ref{gmd_alg}, we give a method of computing the
components of the GMD solution that is a variation of the familiar
power method for calculating the SVD \citep{golub_van_loan_1996}.
Proposition \ref{gmd_alg_sol} states that the GMD Algorithm based
on this power method
converges to the GMD solution.  Notice that we do not need to calculate
$\tilde{\Q}$ and $\tilde{\R}$ and thus, this 
algorithm is less computationally
intensive for high-dimensional data sets than finding the solution via
Theorem \ref{gmd_sol} or the computational approaches given in
\citet{escoufier_1987} for positive definite operators.

At this point, we pause to discuss the name of our matrix
decomposition.  Recall that 
the power method, or alternating least squares method, sequentially
estimates  $\uvec$ with $\vvec$ fixed and vice versa by solving
least squares problems and then re-scaling the estimates.  Each step
of the GMD algorithm, then, estimates the factors by solving the
following generalized least squares problems and re-scaling: $|| \X -
(d' \uvec') \vvec^{T} ||_{\Q}^{2}$ and $|| \X^{T} - (d'   \vvec')
\uvec^{T} ||_{\R}^{2}$. This is then the inspiration for the name of
our matrix decomposition.

\subsection{Generalized Principal Components}

In this section we show that the GMD can be used to perform
Generalized Principal Components Analysis (GPCA).   Note that this
result was first shown in \citet{escoufier_1977} for positive
definite generalizing operators, but we review it here for
completeness.  The results in the previous 
section allow one to compute these GPCs
for high-dimensional data when the quadratic operators may not be
full rank and we wish to avoid computing SVDs and eigenvalue
decompositions.

Recall that the SVD problem can be written as finding the linear
combination of variables maximizing the sample variance such that
these linear combinations are orthonormal.  
For GPCA, we seek to project the data onto
the linear combination of variables such that the sample variance is
maximized in a
space that accounts for the structure or dependencies in the data.  
By transforming all inner product spaces to those induced by the
$\Q,\R$-norm, we arrive at the following Generalized PCA
problem: 
\begin{align}
\label{gpca}
\maximize_{\vvec_{k}} \ \ \vvec_{k}^{T} \R \X^{T} \Q \X \R \vvec_{k}
\ \  \mathrm{subject \hspace{1mm} to} \ \  \vvec_{k}^{T} \R
\vvec_{k} = 1 \  \&  \ \vvec_{k}^{T} \R \vvec_{k'} = 0 \ \ \forall \
\ k'<k.
\end{align}
Notice that the loadings of the GPCs are orthogonal in the
$\R$-norm.  The $k^{th}$ GPC is then given by $\zvec_{k} = \X \R \vvec_{k}$.

\begin{proposition}
\label{prop_gpca}
The solution to the $k^{th}$ Generalized principal component problem,
\eqref{gpca}, is given by the $k^{th}$ right GMD factor. 
\end{proposition}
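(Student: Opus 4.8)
The plan is to reduce the Generalized PCA problem \eqref{gpca} to an ordinary (Euclidean) principal components problem for the reduced matrix $\tilde{\X} = \tilde{\Q}^{T} \X \tilde{\R}$ of Theorem~\ref{gmd_sol}, whose solution is furnished by the right singular vectors of $\tilde{\X}$, and then to translate back using the explicit form $\V^{*} = \tilde{\R}^{-1} \tilde{\V}$ of the right GMD factors. Throughout I use the decomposition $\Q = \tilde{\Q} \tilde{\Q}^{T}$, $\R = \tilde{\R} \tilde{\R}^{T}$ with $\tilde{\R} \in \Re^{p \times m}$ of full column rank, and the SVD $\tilde{\X} = \tilde{\U} \tilde{\D} \tilde{\V}^{T}$.

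The first step is the change of variables $\wvec_{k} = \tilde{\R}^{T} \vvec_{k} \in \Re^{m}$. Since $\R \vvec_{k} = \tilde{\R} \tilde{\R}^{T} \vvec_{k} = \tilde{\R} \wvec_{k}$ and $\Q = \tilde{\Q} \tilde{\Q}^{T}$, the objective becomes
\begin{align*}
\vvec_{k}^{T} \R \X^{T} \Q \X \R \vvec_{k} = \wvec_{k}^{T} \tilde{\R}^{T} \X^{T} \tilde{\Q} \tilde{\Q}^{T} \X \tilde{\R} \wvec_{k} = \wvec_{k}^{T} \tilde{\X}^{T} \tilde{\X} \wvec_{k},
\end{align*}
and the constraints transform as $\vvec_{k}^{T} \R \vvec_{k'} = (\tilde{\R}^{T} \vvec_{k})^{T} (\tilde{\R}^{T} \vvec_{k'}) = \wvec_{k}^{T} \wvec_{k'}$, so that $\vvec_{k}^{T} \R \vvec_{k} = 1$ and the orthogonality conditions become $\wvec_{k}^{T} \wvec_{k} = 1$ and $\wvec_{k}^{T} \wvec_{k'} = 0$ for $k' < k$. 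Hence \eqref{gpca} is equivalent to maximizing $\wvec_{k}^{T} \tilde{\X}^{T} \tilde{\X} \wvec_{k}$ over Euclidean-orthonormal $\wvec_{k}$.

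This is the classical deflated Rayleigh-quotient problem, whose $k^{th}$ solution is the $k^{th}$ eigenvector of $\tilde{\X}^{T} \tilde{\X}$, i.e. the $k^{th}$ right singular vector $\tilde{\vvec}_{k}$ of $\tilde{\X}$ (Courant--Fischer). Finally I recover $\vvec_{k}$ from $\wvec_{k} = \tilde{\vvec}_{k}$: taking $\vvec_{k} = \tilde{\R}^{-1} \tilde{\vvec}_{k}$ and using the left-inverse identity $\tilde{\R}^{T} \tilde{\R}^{-1} = \I_{(m)}$ gives $\tilde{\R}^{T} \vvec_{k} = \tilde{\vvec}_{k}$, so this $\vvec_{k}$ attains the maximum and coincides with the $k^{th}$ column of $\V^{*} = \tilde{\R}^{-1} \tilde{\V}$, the $k^{th}$ right GMD factor, as claimed.

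The one point requiring care --- and the main obstacle when $\R$ is only positive semi-definite --- is the non-injectivity of the map $\vvec_{k} \mapsto \wvec_{k} = \tilde{\R}^{T} \vvec_{k}$. I must verify that optimizing over $\vvec_{k} \in \Re^{p}$ is genuinely equivalent to optimizing over $\wvec_{k} \in \Re^{m}$: since $\tilde{\R}$ has full column rank $m$, the map $\tilde{\R}^{T}$ is surjective onto $\Re^{m}$, so every feasible $\wvec_{k}$ is realized by some $\vvec_{k}$, and conversely every $\vvec_{k}$ yields a feasible $\wvec_{k}$; as both objective and constraints depend on $\vvec_{k}$ only through $\wvec_{k}$, the two problems share the same optimal value. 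Consequently, in the rank-deficient case the maximizing $\vvec_{k}$ need not be unique, but the right GMD factor is always among the maximizers, which is exactly the assertion of the proposition.
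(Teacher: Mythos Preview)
Your proof is correct and follows essentially the same route as the paper: reduce the GPCA problem to ordinary PCA for $\tilde{\X} = \tilde{\Q}^{T}\X\tilde{\R}$ via a linear change of variables tied to the factorization $\R = \tilde{\R}\tilde{\R}^{T}$, then invoke the right singular vectors of $\tilde{\X}$ and translate back through $\V^{*} = \tilde{\R}^{-1}\tilde{\V}$.

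The only noteworthy difference is the direction of the substitution. The paper plugs in $\vvec = \tilde{\R}^{-1}\tilde{\vvec}$ and checks that the objective and constraints become those of the Euclidean PCA problem for $\tilde{\X}$; you instead push forward via $\wvec = \tilde{\R}^{T}\vvec$ and then pull back. Your version is slightly more complete in one respect: by arguing that $\tilde{\R}^{T}$ is surjective and that the objective and constraints depend on $\vvec$ only through $\wvec$, you explicitly justify that no $\vvec$ outside the range of $\tilde{\R}^{-1}$ can do better, which handles the positive semi-definite case cleanly. The paper's sketch leaves this point implicit.
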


\begin{corollary}
\label{cor_gpca_var_ex}
The proportion of variance explained by the $k^{th}$ Generalized
principal component is given by $d_{k}^{2} / || \X ||_{\Q,\R}^{2}$.
\end{corollary}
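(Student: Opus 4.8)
The plan is to leverage Proposition \ref{prop_gpca}, which identifies the solution of the $k$th GPCA problem \eqref{gpca} with the $k$th right GMD factor $\vvec_k$, together with the orthogonality relations built into the GMD. The natural notion of ``variance'' of a generalized principal component $\zvec_k = \X \R \vvec_k$ is its squared $\Q$-norm $\zvec_k^T \Q \zvec_k$, since the $\Q$-norm supplies the relevant inner-product geometry on the sample (row) space. First I would record that the quantity maximized in \eqref{gpca} is exactly this variance, since $\vvec_k^T \R \X^T \Q \X \R \vvec_k = (\X \R \vvec_k)^T \Q (\X \R \vvec_k) = \zvec_k^T \Q \zvec_k$.

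The key computation is to evaluate the $k$th GPC in terms of the GMD. Writing $\X = \sum_j d_j \uvec_j \vvec_j^T$ from Theorem \ref{gmd_sol} and using the right-orthogonality $\vvec_j^T \R \vvec_k = \delta_{jk}$, I would obtain $\zvec_k = \X \R \vvec_k = d_k \uvec_k$. The left-orthogonality $\uvec_k^T \Q \uvec_k = 1$ then gives $\zvec_k^T \Q \zvec_k = d_k^2 \, \uvec_k^T \Q \uvec_k = d_k^2$, so the variance explained by the $k$th GPC is precisely the squared GMD value $d_k^2$.

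Next I would compute the total variance and identify it with $\| \X \|_{\Q,\R}^2$. Expanding the trace defining the $\Q,\R$-norm with the same GMD expansion, $\X \R \X^T = \sum_{j,l} d_j d_l \uvec_j (\vvec_j^T \R \vvec_l) \uvec_l^T = \sum_j d_j^2 \uvec_j \uvec_j^T$ by right-orthogonality, whence $\| \X \|_{\Q,\R}^2 = \mathrm{tr}(\Q \X \R \X^T) = \sum_j d_j^2 \, \uvec_j^T \Q \uvec_j = \sum_j d_j^2$ by left-orthogonality. Dividing the per-component variance $d_k^2$ by the total $\sum_j d_j^2 = \| \X \|_{\Q,\R}^2$ yields the claimed proportion $d_k^2 / \| \X \|_{\Q,\R}^2$.

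The one place needing care is the rank-deficient case, where $\Q$ and $\R$ are only positive semi-definite and the GMD is not an exact factorization of $\X$ (Corollary \ref{cor_error}), so the matrix identity $\X = \sum_j d_j \uvec_j \vvec_j^T$ need not hold. There I would instead appeal to the zero reconstruction error in the $\Q,\R$-norm: since $\| \X - \hat{\X} \|_{\Q,\R} = 0$ for the full rank-$k$ GMD solution $\hat{\X}$, the $\Q,\R$-norm of $\X$ coincides with that of $\hat{\X}$, and likewise $\X \R \vvec_k = \hat{\X} \R \vvec_k$ because the discrepancy lies in the null spaces annihilated by $\Q$ and $\R$. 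All computations above then carry over verbatim with $\hat{\X}$ in place of $\X$, which I expect to be the main, though modest, obstacle in making the argument fully rigorous.
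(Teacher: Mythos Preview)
Your argument is correct but takes a different route from the paper. You compute the variance of the $k$th GPC directly from the GMD expansion $\X = \sum_j d_j \uvec_j \vvec_j^T$ and the $\Q$- and $\R$-orthonormality of the factors, obtaining $\zvec_k^T \Q \zvec_k = d_k^2$ and $\|\X\|_{\Q,\R}^2 = \sum_j d_j^2$ by straightforward trace manipulations. The paper instead invokes the change of variables from Theorem~\ref{gmd_sol}: since the GPCA problem \eqref{gpca} is equivalent to ordinary PCA on $\tilde{\X} = \tilde{\Q}^T \X \tilde{\R}$, the proportion of variance explained is simply the classical $\tilde{d}_k^2 / \|\tilde{\X}\|_F^2$, which equals $d_k^2 / \|\X\|_{\Q,\R}^2$ because $\D^* = \tilde{\D}$ and $\mathrm{tr}(\tilde{\X}\tilde{\X}^T) = \mathrm{tr}(\Q \X \R \X^T)$.

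The paper's route is shorter and handles the semi-definite case automatically, since the reduction to $\tilde{\X}$ already absorbs any rank deficiency of $\Q$ or $\R$. Your route is more self-contained and makes the geometry explicit, but as you correctly note, it requires extra work when $\Q,\R \succeq 0$: the identity $\X = \sum_j d_j \uvec_j \vvec_j^T$ may fail, and you must argue via the zero $\Q,\R$-norm reconstruction error that the discrepancy $\X - \hat{\X}$ is annihilated in all the relevant quadratic forms. That step goes through (essentially because $\R \vvec_k = \tilde{\R}\tilde{\vvec}_k$ and $\tilde{\Q}^T(\X - \hat{\X})\tilde{\R} = 0$), so your proof is complete, just less economical than the paper's transfer-from-PCA argument.
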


Just as the SVD can be used to find the principal components, the GMD
can be used to find the 
generalized principal components.  Hence, GPCA can be performed using
the GMD algorithm which  does not require
calculating $\tilde{\Q}$ and $\tilde{\R}$.  This allows one to
efficiently perform GPCA for massive structured data sets.

\subsection{Interpretations, Connections \& Quadratic Operators}
\label{section_interpret}

In the previous sections, we have introduced the methodology for the
Generalized Least Squares Matrix Decomposition and have outlined how
this can be used to compute GPCs for high-dimensional
data.  Here, we pause to discuss some interpretations of our methods
and how they relate to existing methodology for non-linear dimension
reduction.  These interpretations and
relationships will help us understand the role of the quadratic
operators and the classes of quadratic operators that may be appropriate
for types of structured high-dimensional data.  As the focus
of this paper is the statistical methodology of PCA for
high-dimensional structured data, we leave much of the study of
quadratic operators for specific applications as future work.

First, there are many connections and interpretations of the GMD in
the realm of classical matrix analysis and multivariate analysis.  We
will refrain from listing all of these here, but note that there are
close relations to the GSVD of \citet{van_loan_gsvd_1976} and generalized
eigenvalue problems \citep{golub_van_loan_1996} as well as statistical
methods such as 
discriminant analysis, canonical correlations analysis, and
correspondence analysis.  Many of these connections are discussed in
\citet{holmes_2008} and \citet{purdom_2007}.  Recall also that the GMD
can be interpreted as a maximum likelihood problem for the
matrix-variate normal with row and column inverse covariances fixed and
known.  This connection yields two interpretations worth
noting.  First, the GMD is an extension of the SVD to allow for
heteroscedastic (and separable) row and column errors.  Thus, the
quadratic operators act as weights in the matrix decomposition to
permit non i.i.d. errors.  The second relationship is to whitening the
data with known row and column covariances prior to dimension
reduction.  As discussed in Section ~\ref{section_gmd_sol}, our
methodology offers 
the proper mathematical context for this whitening with many advantages
over a naive whitening approach.

Next, the GMD can be interpreted as decomposing a covariance operator.
Let us assume that the data follows a simple model: 
$\X = \Smat + \Emat$, where $\Smat = \sum_{k=1}^{K} \phi_{k} \uvec_{k}
\vvec_{k}^{T}$ with the factors $\vvec_{k}$ and $\uvec_{k}$ fixed but
unknown and with
the amplitude $\phi_{k}$ random, and $\Emat \sim N_{n,p}( 0, \Q^{-1},
\R^{-1} )$.  Then the (vectorized) covariance of the data can be
written as:
\begin{align*}
\mathrm{Cov}( \mathrm{vec}( \X) ) = \sum_{k=1}^{K} \mathrm{Var}(
\phi_{k} ) \vvec_{k} \vvec_{k}^{T} \otimes \uvec_{k} \uvec_{k}^{T} +
\R^{-1} \otimes \Q^{-1},
\end{align*}
such that $\V^{T} \R \V = \mathbf{I}$ and $\U^{T} \Q \U =
\mathbf{I}$.  In other words, the GMD decomposes the covariance of the
data into a signal component and a noise component such that the
eigenvectors of the signal component are orthonormal to those of the
noise component.  This is an important interpretation to consider when
selecting quadratic operators for particular structured data sets,
discussed subsequently.

Finally, there is a close connection between the GMD and smoothing
dimension reduction methods.  Notice that from Theorem ~\ref{gmd_sol},
the GMD factors $\U$ and
$\V$ will be as smooth as as the smallest eigenvectors corresponding to
non-zero eigenvalues of $\Q$ and
$\R$ respectively.  Thus, if
$\Q$ and $\R$ are taken as smoothing operators, then the GMD can yield
smoothed factors.


Given these interpretations of the the GMD, we consider classes of
quadratic operators that may be appropriate when applying our methods
to structured data:
\begin{enumerate}
\item Model-based and parametric operators.  The quadratic operators
  could be taken as the inverse covariance matrices implied by common
  models employed with structured data.  These include well-known time
  series autoregressive and moving average processes, random field
  models used for spatial data, and Gaussian
  Markov random fields.  
\item Graphical operators.  As many types of structured data can be
  well represented by a graphical model, the graph Laplacian operator,
  defined as the difference between the degree and adjacency matrix,
  can be used as a quadratic operator. Consider, for example, image
  data sampled on a regular mesh grid; a lattice graph connecting all
  the nearest neighbors well represents the structure of this data.  
\item Smoothing / Structural embedding operators.  Taking quadratic
  operators as smoothing 
  matrices common in functional data analysis will yield smoothed GMD
  factors.  In these smoothing matrices, two variables are typically
  weighted inversely proportional to the distance between them.  Thus,
  these can also be thought of as structural embedding matrices,
  increasing the weights in the loss function between variables that are
  close together on the structural manifold.
\end{enumerate}

While this list of potential classes of quadratic operators is most
certainly incomplete, these provide the flexibility to model many types of
high-dimensional structured data.  Notice that many examples of
quadratic operators in each of these classes are closely related.
Consider, for example, regularly spaced ordered variables as often
seen in time series.  A graph Laplacian of a nearest neighbor graph is
tri-diagonal, is nearly equivalent except for boundary points to the
squared second differences matrix often used for inverse smoothing in
functional data analysis \citep{ramsay_2006}, and has the same zero
pattern as the 
inverse covariance of an autoregressive and moving average order one
process \citep{shaman_1969, galbraith_1974}.  Additionally, the zero
patterns in the inverse 
covariance matrix of Gaussian Markov random fields are closely
connected to Markov networks and undirected graph Laplacians
\citep{rue_2005}.  A 
recent paper, \citet{gaussian_spde_2011}, shows how common stationary
random field 
models specified by their covariance function such as the Mat\'ern
class can be derived from functions of a graph Laplacian.  Finally, 
notice that graphical operators such as Laplacians and smoothing
matrices are typically not positive definite.  Thus, our framework
allowing for general quadratic operators in Section
\ref{section_gmd_sol} opens more possibilities than diagonal
weighting matrices \citep{dray_ade4_2007} and those used in ecological
applications \citep{escoufier_1977, tenenhaus_young_1985}.

There are many further connections of the GMD when employed with
specific quadratic 
operators belonging to the above classes and other recent methods for
non-linear dimension reduction.  Let us first consider the
relationship of the GMD to Functional PCA (FPCA) and the more recent two-way
FPCA.  \citet{silverman_1996} first showed that for
discretized functional data, FPCA can be obtained by
half-smoothing; \citet{huang_2009} elegantly extended this to two-way
half-smoothing.  Let us consider the latter where row and column smoothing
matrices, $\Smat_{u} = (\mathbf{I} + \lambda_{u} \Omeg_{u})^{-1}$ and
$\Smat_{v} = (\mathbf{I} + \lambda_{v} \Omeg_{v})^{-1}$ respectively, are
formed with $\Omeg$ being a penalty matrix such as to give the squared
second differences for example \citep{ramsay_2006} related to the 
structure of the row and column variables.  Then, \citet{huang_2009}
showed that two-way FPCA can be performed by two-way half smoothing: (1)
take the SVD of $\X' = \Smat_{u}^{1/2} \X \Smat_{v}^{1/2}$, then (2)
the FPCA factors are $\U = \Smat_{u}^{1/2} \U'$ and $\V =
\Smat_{v}^{1/2} \V'$.   In other words, a half smoother is applied to
the data, the SVD is taken and another half-smoother is applied to the
SVD factors.  This procedure is quite similar to the GMD
solution outlined in Theorem \ref{gmd_sol}.  Let us assume that $\Q$
and $\R$ are smoothing matrices.  Then, our GMD solution is
essentially like half-smoothing the data, taking the SVD then inverse
half-smoothing the SVD factors.  Thus, unlike two-way FPCA, the GMD
does not half-smooth both the data and the factors, but only the data
and then finds factors that are in contrast to the smoothed data.  If
the converse is true and we take $\Q$ and $\R$ to be ``inverse
smoothers'' such as a graph Laplacian, then this inverse smoother is
applied to the data, the SVD is taken and the resulting factors are
smoothed.  Dissimilar to two-way FPCA which performs two smoothing
operations, the GMD essentially applies one smoothing operation and
one inverse smoothing operation.

The GMD also shares similarities to other non-linear dimension
reduction techniques such as Spectral Clustering, Local
Linear Embedding, Manifold Learning, and Local Multi-Dimensional
Scaling.  
Spectral clustering, Laplacian embedding, and manifold learning all
involve taking an eigendecomposition of a Laplacian matrix, $\Lmat$,
capturing 
the distances or relationships between variables.  The eigenvectors
corresponding to the smallest eigenvalues are of interest as these
correspond to minimizing a weighted sums of squares: $\mathbf{f}^{T}
\Lmat \mathbf{f} = \sum_{i} \sum_{i'} L_{i,i'} (f_{i} - f_{i'}
)^{2}$, ensuring that the distances between neighboring variables in
the reconstruction $\mathbf{f}$ are small.  Suppose $\Q = \Lmat$ and
$\R = \mathbf{I}$, then the GMD minimizes: $\sum_{i} \sum_{i'}
L_{i,i'} ( \X_{i} - \uvec_{i} \D \V^{T} )^{2}$, similar to the
Laplacian embedding criterion.  Thus, the GMD with quadratic operators
related to the structure or distance between variables ensures that the
errors between the data and its low rank approximation are smallest for
those variables that are close in the data structure.  This concept is
also closely related to Local Multi-Dimensional Scaling which weights
MDS locally by placing more weight on variables that are in close proximity
\citep{chen_lmds_2009}.

The close connections of the GMD to other recent non-linear dimension
reduction techniques provides additional context for the role of the
quadratic operators.  Specifically, these examples indicate that for
structured data, it is often not necessary to directly estimate the quadratic
operators from the data.  If the quadratic operators are taken
as smoothing matrices, structural embedding matrices or Laplacians
related to 
the distances between 
variables, then the GMD has similar properties to many other
non-linear unsupervised learning methods.  In summary, when various quadratic
operators that encode data structure such as the distances between
variables are employed, the GMD can be interpreted as (1) finding the
basis vectors 
of the covariance that are separate (and orthogonal) to that of the
data structure, (2) finding factors that are smooth with respect to
the data structure, or (3) finding an approximation to the data that
has small reconstruction error with respect to the structural
manifold.  Through simulations in
Section \ref{section_simulations}, we will explore the performance of the GMD
for different combinations of graphical and smoothing operators
encoding the known data structure.  As there are many examples of
structured high-dimensional data (e.g. imaging, neuroimaging, microscopy,
hyperspectral imaging, chemometrics, remote sensing, environmental
data, sensor data, and network data), these classes of quadratic
operators provide a wide range of potential applications for our
methodology.

\section{Generalized Penalized Matrix Factorization}
\label{section_gpmf}

With high-dimensional data, many have advocated regularizing principal
components by either automatically selecting relevant features as with
Sparse PCA or by smoothing the factors as with Functional PCA
\citep{silverman_1996, jolliffe_spca_2003, zou_spca_2006, shen_spca_2008,
   huang_2009, witten_pmd_2009, lee_ssvd_2010}.
Regularized PCA can be important for massive structured data as well.
Consider spatio-temporal fMRI data, for example, where many spatial
locations or voxels in the brain are inactive and the time courses are
often extremely noisy.  Automatic feature selection of relevant voxels
and smoothing of the time series in the context of PCA for structured
data would thus be an important contribution. 
In this section, we seek a framework for regularizing the factors of
the GMD by placing penalties on each factor.    In
developing this framework, we reveal an important result demonstrating
the general class of penalties that can be placed on matrix factors that are
to be estimated via deflation: the penalties must
be norms or semi-norms.

\subsection{GPMF Problem}

Recently, many have suggested regularizing the factors of
the SVD by forming two-way penalized regression problems
\citep{huang_2009, witten_pmd_2009, lee_ssvd_2010}.  We briefly review
these existing approaches to understand how to frame a problem that
allows us to place general sparse or smooth penalties on the GMD
factors.

We compare the optimization problems of these three approaches for
computing a single-factor two-way regularized matrix factorization:
\begin{align*}
\textrm{\citet{witten_pmd_2009}}: \  & \maximize_{\vvec, \uvec} \   \uvec^{T}
\X \vvec  \  \mathrm{subject  \hspace{1mm} to} \   \uvec^{T}
\uvec \leq 1, \  \vvec^{T} \vvec 
  \leq 1, \  P_{1}( \uvec ) \leq c_{1}, \ \& \ P_{2}( \vvec ) \leq
  c_{2}. \\
\textrm{\citet{huang_2009}}: \ &  \maximize_{\vvec, \uvec} \  \uvec^{T} \X
  \vvec  - \frac{\lambda}{2} P_{1}(\uvec) P_{2}( \vvec). \\
\textrm{\citet{lee_ssvd_2010}}: \ &  \maximize_{\vvec, \uvec} \  \uvec^{T} \X
  \vvec  - \frac{1}{2} \uvec^{T} \uvec \vvec^{T} \vvec -
  \frac{\lamu}{2} P_{1}(\uvec)  - \frac{\lamv}{2} P_{2}( \vvec).   
\end{align*}
Here, $c_{1}$ and $c_{2}$ are constants, $P_{1}()$ and
$P_{2}()$ are penalty functions, and $\lambda$, $\lamv$ and $\lamu$
are penalty parameters.  These optimization problems
are attractive as they are bi-concave in $\uvec$ and $\vvec$, meaning that
they are concave in $\uvec$ with $\vvec$ fixed and vice versa.  Thus, a
simple maximization strategy of iteratively maximizing with respect to
$\uvec$ and $\vvec$ results in a monotonic ascent algorithm converging
to a local maximum.

These methods, however, differ in the types of penalties that can be
employed and their scaling of the matrix factors.  
\citet{witten_pmd_2009} explicitly restrict the norms of the factors,
while the constraints in the method of \citet{huang_2009} are
implicit because of the types of functional data penalties employed.
Thus, for these methods, as the constants $c_{1}$ and $c_{2}$ or the
penalty parameter, $\lambda$ approach zero, the SVD is returned.  This
is not the case, however, for problem in \citet{lee_ssvd_2010} where
the factors are not constrained in the optimization problem, although
they are later scaled in their algorithm.  Also,
restricting the scale of the factors avoids possible numerical
instabilities when employing the iterative estimation
algorithm (see especially the supplementary materials of
\citet{huang_2009}).  Thus, we prefer the former two approaches for
these reasons.  In \citet{witten_pmd_2009}, however, only the lasso
and fused lasso penalty functions are employed and it is unclear
whether other penalties may be used in their framework.
\citet{huang_2009}, on the other hand, limit their consideration to
quadratic functional data penalties, and their optimization problem
does not implicitly scale the factors for other types of penalties.

As we wish to employ general penalties, specifically sparse and smooth
penalties, on the matrix factors, we adopt an optimization problem
that leads to simple solutions for the factors with a wide class of
penalties, as discussed in the subsequent section.  We then, define
the single-factor {\em Generalized Penalized Matrix Factorization}
(GPMF) problem as the following: 
\begin{align}
\label{gpmf}
\maximize_{\vvec, \uvec} \   \uvec^{T} \Q \X \R
\vvec - \lamv P_{1}( \vvec )  - \lamu P_{2}( \uvec)  \ 
\mathrm{subject \hspace{1mm} to} \  \uvec^{T} \Q \uvec \leq 1
\  \& \  \vvec^{T} \R \vvec \leq 1,
\end{align}
where, as before, $\lamv$ and $\lamu$ are penalty parameters and
$P_{1}()$ and $P_{2}()$ are penalty functions.  
Note that if $\lamu = \lamv = 0$, then 
 the left and right GMD factors can be found from \eqref{gpmf}, as
 desired.  Strictly speaking, 
this problem is the Lagrangian of the problem introduced in
\citet{witten_pmd_2009}, keeping in mind that  we should interpret the
inner products as those 
induced by the $\Q,\R$-norm.  As we will see in Theorem
~\ref{gpmf_sol} in the next section, this problem is tractable for
many common choices of penalty functions and avoids the scaling
problems of other approaches.



\subsection{GPMF Solution}

We solve the GPMF criterion,
\eqref{gpmf}, via block coordinate ascent by alternating
maximizing with respect to $\uvec$ then $\vvec$.  Note that if
$\lamv = 0$ or $\lamu = 0$, then the coordinate update for $\uvec$ or
$\vvec$ is given by the GMD updates in Step (b) (i) of the GMD
Algorithm.  Consider the following result:
\begin{theorem}
\label{gpmf_sol}
Assume that $P_{1}()$ and $P_{2}()$ are convex and homogeneous of
order one: $P(c x) = c P(x) \ \forall \ c > 0$.  Let $\uvec$ be fixed at
$\uvec'$ or $\vvec$ be fixed at $\vvec'$ such that $\uvec'^{T} 
\Q \X \neq 0$ or $\vvec'^{T} \R \X^{T} \neq 0$. 
Then, the coordinate updates, $\uvec^{*}$ and $\vvec^{*}$, maximizing
the single-factor GPMF problem, \eqref{gpmf}, are given by the
following:  Let $\hat{\vvec} = \argmin_{\vvec} \{
\frac{1}{2}|| \X^{T} \Q \uvec' - \vvec ||_{\R}^{2} + \lamv P_{1}(\vvec)
\}$ and $\hat{\uvec} = \argmin_{\uvec} \{ \frac{1}{2} || \X \R \vvec' -
\uvec ||_{\Q}^{2} + \lamu P_{2}(\uvec) \}$.  Then, 
\begin{align*}
\vvec^{*} = \begin{cases} \hat{\vvec} / || \hat{\vvec} ||_{\R} &
  \textrm{if} \ || \hat{\vvec} ||_{\R} > 0 \\ 0 &
  \textrm{otherwise},  \end{cases} \hspace{4mm} \& \hspace{4mm}
  \uvec^{*} = \begin{cases} \hat{\uvec} / || \hat{\uvec} ||_{\Q} & 
  \textrm{if} \ || \hat{\uvec} ||_{\Q} > 0 \\ 0 &
  \textrm{otherwise}.  \end{cases}
\end{align*}
\end{theorem}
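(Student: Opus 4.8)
The plan is to treat the two coordinate updates symmetrically: fix $\uvec = \uvec'$ and solve the resulting $\vvec$-subproblem, the update for $\uvec$ following by interchanging the roles of $\Q,\R$ and $\X,\X^{T}$. With $\uvec$ fixed the term $\lamu P_{2}(\uvec)$ is constant, so \eqref{gpmf} reduces to maximizing $f(\vvec) = \uvec'^{T}\Q\X\R\vvec - \lamv P_{1}(\vvec)$ subject to $||\vvec||_{\R}^{2}\le 1$. Writing $\wvec = \X^{T}\Q\uvec'$ and recognizing $\uvec'^{T}\Q\X\R\vvec = \wvec^{T}\R\vvec = \langle \wvec,\vvec\rangle_{\R}$ as the $\R$-inner product, the subproblem becomes $\max_{||\vvec||_{\R}\le 1}\big[\langle\wvec,\vvec\rangle_{\R} - \lamv P_{1}(\vvec)\big]$. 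First I would connect this to the stated proximal problem by completing the square: since $\tfrac12||\wvec-\vvec||_{\R}^{2} = \tfrac12||\wvec||_{\R}^{2} - \langle\wvec,\vvec\rangle_{\R} + \tfrac12||\vvec||_{\R}^{2}$, minimizing $\tfrac12||\wvec-\vvec||_{\R}^{2} + \lamv P_{1}(\vvec)$ is the same, up to the additive constant $\tfrac12||\wvec||_{\R}^{2}$, as maximizing $g(\vvec) := f(\vvec) - \tfrac12||\vvec||_{\R}^{2}$. Hence $\hat\vvec$ is exactly the unconstrained maximizer of $g$.

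The key idea is joint homogeneity. Because $\langle\wvec,\cdot\rangle_{\R}$ is linear and $P_{1}$ is homogeneous of order one by assumption, $f$ is positively homogeneous of degree one: $f(c\vvec) = c\,f(\vvec)$ for $c>0$ and $f(0)=0$. I would therefore decompose any $\vvec$ into a magnitude $s = ||\vvec||_{\R}$ and a direction, so that $\max_{||\vvec||_{\R}=s} f(\vvec) = s\,\Phi$, where $\Phi := \max_{||\vvec||_{\R}=1} f(\vvec)$; this turns both problems into one-dimensional problems in $s$. The constrained subproblem is $\max_{s\in[0,1]} s\,\Phi$, attained at $s=1$ in the unit maximizing direction $\vvec^{\sharp}$ when $\Phi>0$ and at $s=0$ otherwise, while the unconstrained problem is $\max_{s\ge 0}\big[s\Phi - \tfrac12 s^{2}\big]$, attained at $s=\Phi$ when $\Phi>0$ and at $s=0$ otherwise. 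Reading these off gives the theorem: when $\Phi>0$ we get $\hat\vvec = \Phi\,\vvec^{\sharp}$, so $||\hat\vvec||_{\R} = \Phi>0$ and the constrained optimizer is $\vvec^{\sharp} = \hat\vvec/||\hat\vvec||_{\R}$; when $\Phi\le 0$ both problems are solved at $\vvec^{*}=0$, which is exactly the $||\hat\vvec||_{\R}=0$ branch.

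The main obstacle is that when $\Q$ or $\R$ is only positive semi-definite the $\R$-unit sphere $\{\vvec : ||\vvec||_{\R}=1\}$ is non-compact along $null(\R)$, so I must justify that $\Phi$ is finite and attained and that the magnitude--direction split is legitimate. I would bound the linear term by Cauchy--Schwarz in the $\R$-inner product, $\langle\wvec,\vvec\rangle_{\R}\le ||\wvec||_{\R}\,||\vvec||_{\R}$, so that $f\le ||\wvec||_{\R}$ on the sphere and $\Phi<\infty$. For attainment and for the degenerate branch I would observe that a component $\vvec_{N}\in null(\R)$ contributes nothing to $\langle\wvec,\vvec\rangle_{\R}$ nor to $||\vvec||_{\R}$ while adding $\lamv P_{1}(\vvec_{N})\ge 0$ to the penalty; hence null-space mass never increases $f$, the relevant supremum is attained on a bounded set, and the $s=0$ solution collapses to $\vvec^{*}=0$ as claimed. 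Finally, since $f$ is concave (linear minus the convex $\lamv P_{1}$) over a convex feasible set, the identified point is a global maximizer; the hypothesis $\uvec'^{T}\Q\X\neq 0$ (resp. $\vvec'^{T}\R\X^{T}\neq 0$) guarantees $\wvec\neq 0$, ruling out the degenerate case in which $f$ reduces to $-\lamv P_{1}$ and the maximizer is non-unique, so that the coordinate update is well defined.
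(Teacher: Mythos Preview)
Your approach is correct and genuinely different from the paper's. The paper argues via KKT conditions: it writes down the stationarity and complementary-slackness conditions for the constrained problem, writes down the subgradient equation $\R\y - \lamv\nabla P_{1}(\hat\vvec) - \R\hat\vvec = 0$ for the proximal problem, and then matches them using the fact that for a convex positively-homogeneous function $\partial P_{1}(c\hat\vvec)=\partial P_{1}(\hat\vvec)$ for all $c>0$, so that $(\vvec^{*},\gamma^{*})=(\hat\vvec/||\hat\vvec||_{\R},\,||\hat\vvec||_{\R}/2)$ satisfies the KKT system. Your magnitude--direction reduction exploits the same homogeneity but at the level of the objective rather than its subgradient, and it avoids Lagrange multipliers altogether; it also makes transparent \emph{why} the constraint is tight exactly when $||\hat\vvec||_{\R}>0$.

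One technical point deserves tightening. Your claim that a null-space component $\vvec_{N}\in\mathrm{null}(\R)$ ``adds $\lamv P_{1}(\vvec_{N})\ge 0$ to the penalty'' presumes $P_{1}$ is additive across the $\mathrm{range}(\R)\oplus\mathrm{null}(\R)$ split, but convex order-one homogeneity only gives subadditivity. A counterexample: $P_{1}(\vvec)=|v_{1}-v_{2}|$ with $\R=\mathrm{diag}(1,0)$; adding $e_{2}\in\mathrm{null}(\R)$ to $e_{1}$ drops $P_{1}$ from $1$ to $0$ and hence \emph{increases} $f$. So ``null-space mass never increases $f$'' is false in general, and your attainment argument for $\Phi$ does not go through as written. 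The clean fix is to bypass attainment entirely and read $\Phi$ off of $\hat\vvec$: optimizing $g(c\hat\vvec)=c\,f(\hat\vvec)-\tfrac12 c^{2}||\hat\vvec||_{\R}^{2}$ over $c>0$ at $c=1$ gives $f(\hat\vvec)=||\hat\vvec||_{\R}^{2}$, so $f(\hat\vvec/||\hat\vvec||_{\R})=||\hat\vvec||_{\R}$; and for any competitor with $||\vvec||_{\R}=1$ one has $g(||\hat\vvec||_{\R}\,\vvec)\le g(\hat\vvec)$, which rearranges to $f(\vvec)\le ||\hat\vvec||_{\R}$. This shows simultaneously that $\Phi=||\hat\vvec||_{\R}$ and that it is attained at $\hat\vvec/||\hat\vvec||_{\R}$, closing the gap without any compactness argument.
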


Theorem \ref{gpmf_sol} states that for penalty functions that are
convex and homogeneous of order one, the coordinate updates giving the
single-factor GPMF solution can be obtained by a generalized penalized
regression problem.  Note that penalty functions that are norms or
semi-norms are necessarily convex and homogeneous of order one.  This
class of functions includes common penalties such as the lasso
\citep{tibshirani_lasso}, group 
lasso \citep{yuan_2006_group}, fused 
lasso \citep{tibshirani_2005_fused}, and the generalized lasso
\citep{tibshirani_gen_lasso_2011}.  Importantly, these do not include the ridge
penalty, elastic net, concave penalties such as 
SCAD, and quadratic smoothing penalties commonly used in functional
data analysis.  Many of the penalized regression solutions for
these penalties, however, can be approximated by penalties that are norms or
semi-norms. For instance, to mimic the effect of a given quadratic penalty,
one may use the square-root of this quadratic penalty. Similarly,
the natural majorization-minimization algorithms for SCAD-type
penalties \citep{fan_scad}
involve convex piecewise linear majorizations of the penalties that
satisfy the conditions of Theorem \ref{gpmf_sol}.  Thus, our
single-factor GPMF problem both avoids the complicated scaling problems
of some existing two-way regularized matrix factorizations, and
still permits a wide class of penalties to be used within our
framework.

Following the structure of the GMD Algorithm, the multi-factor GPMF
can be computed via the power method framework.  That is, the GPMF
algorithm replaces Steps (b) (i) of the GMD Algorithm with the single
factor GPMF updates given in Theorem \ref{gpmf_sol}.  This follows the
same approach as that of other two-way matrix factorizations
\citep{huang_2009, witten_pmd_2009, lee_ssvd_2010}.  Notice that
unlike the GMD, subsequent factors computed via this greedy deflation
approach will not be orthogonal in the $\Q,\R$-norm to previously
computed components.  Many have noted in the Sparse PCA literature for
example, that orthogonality of the sparse components is not
necessarily warranted \citep{jolliffe_spca_2003, zou_spca_2006,
  shen_spca_2008}.   
If only one factor is penalized, however,
orthogonality can be enforced in subsequent components via a
Gram-Schmidt scheme \citep{golub_van_loan_1996}.



In the following sections, we give
methods for obtaining the single-factor GPMF updates for sparse or
smooth penalty types, noting that many other penalties may also be employed
with our methods.  As the single-factor GPMF problem is symmetric
in $\uvec$ and $\vvec$, we solve the $\R$-norm penalized
regression problem in $\vvec$, noting that the solutions are analogous
for the $\Q$-norm penalized problem in $\uvec$.

\subsection{Sparsity: Lasso and Related Penalties}


With high-dimensional data, sparsity in the factors or principal
components yields greater
interpretability and, in some cases have better properties
than un-penalized principal components
\citep{jolliffe_spca_2003, johnstone_spca_2004}.  With
neuroimaging data, sparsity in the factors associated
with the voxels is particularly warranted as in most cases relatively few
 brain regions are expected to truly contribute
to the signal.  Hence, we
consider our GPMF problem, \eqref{gpmf}, with the lasso
\citep{tibshirani_lasso}, or 
$\ell_{1}$-norm penalty, commonly used to encourage sparsity.

The penalized regression problem given in Theorem
\ref{gpmf_sol} can be 
written as a lasso problem: $\frac{1}{2}|| \X^{T} \Q \uvec
- \vvec ||_{\R}^{2} + \lamv || \vvec ||_{1}$.
If $\R = \mathbf{I}$, then the solution for $\hat{\vvec}$ is obtained
by simply applying the soft thresholding operator: $\hat{\vvec} =
S( \X^{T} \Q \uvec, \lambda )$, where $S( x, \lambda) =
\mathrm{sign}(x)( | x| - \lambda)_{+}$ \citep{tibshirani_lasso}.  When
$\R \neq \mathbf{I}$, the solution is not as simple:  
\begin{claim}
\label{gpmf_sparse_diag}
If $\R$ is diagonal with strictly positive diagonal entries, then
$\hat{\vvec}$ minimizing the $\R$-norm lasso problem is given by
$\hat{\vvec} = S( \X^{T} \Q \uvec, \lamv
\R^{-1} \mathbf{1}_{(p)} )$.  
\end{claim}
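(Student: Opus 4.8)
The plan is to exploit the fact that a diagonal $\R$ decouples the penalized regression of Theorem~\ref{gpmf_sol} into $p$ independent one-dimensional problems, each of which is a standard scalar lasso. Writing $\mathbf{a} = \X^{T} \Q \uvec$ and $\R = \mathrm{diag}(r_{1}, \ldots, r_{p})$ with every $r_{j} > 0$, the first step is to expand the $\R$-norm objective from the Claim explicitly. Since for a $p$-vector $\wvec$ we have $\| \wvec \|_{\R}^{2} = \wvec^{T} \R \wvec$, and $\R$ is diagonal, this gives
\begin{align*}
\frac{1}{2} \| \mathbf{a} - \vvec \|_{\R}^{2} + \lamv \| \vvec \|_{1} = \sum_{j=1}^{p} \left[ \frac{r_{j}}{2} (a_{j} - v_{j})^{2} + \lamv | v_{j} | \right].
\end{align*}
Because the objective is a sum of terms each involving only a single coordinate $v_{j}$, it can be minimized coordinatewise.

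Next, I would solve each scalar subproblem $\min_{v_{j}} \frac{r_{j}}{2} (a_{j} - v_{j})^{2} + \lamv | v_{j} |$. Because $r_{j} > 0$, dividing the objective by $r_{j}$ leaves the minimizer unchanged and reduces the subproblem to the canonical scalar lasso $\min_{v_{j}} \frac{1}{2}(a_{j} - v_{j})^{2} + (\lamv / r_{j}) | v_{j} |$, whose unique solution is the soft-thresholding operator $\hat{v}_{j} = S(a_{j}, \lamv / r_{j}) = \mathrm{sign}(a_{j}) ( | a_{j} | - \lamv / r_{j})_{+}$, exactly as recalled just before the statement of the Claim (uniqueness follows from strict convexity of the quadratic term). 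The only point requiring care is tracking how the coordinatewise weight $r_{j}$ in front of the squared term passes through to rescale the effective threshold; this is the main (and essentially the only) obstacle, and it is dispatched cleanly by the division-by-$r_{j}$ step.

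Finally, I would reassemble the coordinatewise solutions into vector form. The $j$-th threshold $\lamv / r_{j}$ is precisely the $j$-th entry of $\lamv \R^{-1} \mathbf{1}_{(p)}$, since $\R^{-1} = \mathrm{diag}(1/r_{1}, \ldots, 1/r_{p})$ yields $(\R^{-1} \mathbf{1}_{(p)})_{j} = 1/r_{j}$. Interpreting $S$ as acting componentwise with a vector-valued threshold, the stacked solution is exactly $\hat{\vvec} = S(\X^{T} \Q \uvec, \lamv \R^{-1} \mathbf{1}_{(p)})$, which establishes the Claim.
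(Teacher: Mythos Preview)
Your proof is correct and follows essentially the same idea as the paper's: both exploit the fact that a diagonal $\R$ makes the problem separable in the coordinates of $\vvec$, reducing it to $p$ scalar lasso problems solved by soft thresholding with threshold $\lamv/r_{j}$. The only cosmetic difference is that the paper phrases this via the subgradient (stationarity) condition $\hat{\vvec} = \y - \lamv \R^{-1}\nabla P(\hat{\vvec})$ rather than by directly decomposing the objective, but the substance is identical.
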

\begin{claim}
\label{gpmf_sparse_gen}
The solution, $\hat{\vvec}$, that minimizes the $\R$-norm lasso
problem can be obtained by 
iterative coordinate-descent where the solution for each coordinate
$\hat{\vvec}_{j}$ is given by:
$\hat{\vvec}_{j} =  \frac{1}{ \R_{jj} } S
\left( \R_{rj} \X^{T} \Q \uvec  - \R_{j, \neq j} \hat{\vvec}_{\neq j}  ,
\lamv \right)$,
with the subscript  $\R_{rj}$ denoting the row elements associated
with  column $j$ of $\R$.  
\end{claim}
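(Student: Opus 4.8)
The plan is to read the $\R$-norm lasso objective in Theorem~\ref{gpmf_sol} as a convex optimization problem whose nonsmooth part is separable across coordinates, and to solve it by cyclic coordinate descent. Writing $\zvec = \X^{T}\Q\uvec$ (fixed, since $\uvec$ is held at $\uvec'$), the objective is $f(\vvec) = \tfrac{1}{2}(\zvec-\vvec)^{T}\R(\zvec-\vvec) + \lamv \|\vvec\|_{1}$. The quadratic term is convex because $\R \succeq 0$, and the penalty $\lamv\|\vvec\|_{1} = \lamv\sum_{j}|\vvec_{j}|$ is convex and separable in the coordinates of $\vvec$. These are precisely the hypotheses under which a standard block-coordinate-descent convergence result applies: minimizing $f$ one coordinate at a time (cycling through $j=1,\dots,p$) produces iterates converging to a global minimizer. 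The separability of the $\ell_{1}$ term is the essential ingredient, since it prevents the algorithm from halting at a point that is coordinatewise but not globally optimal.

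Next I would derive the closed-form coordinate update. Fixing all entries $\vvec_{\neq j}$ and collecting the terms of $f$ that depend on $\vvec_{j}$, one obtains, up to an additive constant, the one-dimensional problem
\begin{align*}
\minimize_{t} \ \tfrac{1}{2}\R_{jj}\, t^{2} + \bigl(\R_{j,\neq j}\vvec_{\neq j} - \R_{rj}\zvec\bigr)\, t + \lamv |t|,
\end{align*}
where $\R_{rj}$ denotes the $j$-th row of $\R$ (equivalently the $j$-th column, by symmetry). Setting $a = \R_{jj}$ and $b = \R_{j,\neq j}\vvec_{\neq j} - \R_{rj}\zvec$, this is the elementary lasso $\min_{t}\tfrac{1}{2}a t^{2} + b t + \lamv|t|$ with $a > 0$. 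Examining the subgradient $a t + b + \lamv\,\partial|t|$ in the three cases $t>0$, $t<0$, $t=0$ shows the minimizer is $t^{*} = \tfrac{1}{a}S(-b,\lamv)$ with $S$ the soft-thresholding operator. Substituting $-b = \R_{rj}\zvec - \R_{j,\neq j}\vvec_{\neq j} = \R_{rj}\X^{T}\Q\uvec - \R_{j,\neq j}\hat{\vvec}_{\neq j}$ and $a = \R_{jj}$ recovers the update stated in Claim~\ref{gpmf_sparse_gen} exactly.

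The main obstacle is the convergence justification rather than the update derivation, since $f$ is nondifferentiable and coordinate descent need not converge for general nonsmooth objectives; the argument must lean on the separability of $\|\vvec\|_{1}$ to rule out spurious stationary points. A secondary subtlety is that when $\R$ is only positive semi-definite, some diagonal entries $\R_{jj}$ may vanish, in which case the corresponding coordinate subproblem is degenerate and the division by $\R_{jj}$ is undefined. Assuming strictly positive diagonal entries—as is already imposed in Claim~\ref{gpmf_sparse_diag}—makes each coordinate subproblem strongly convex with the stated unique solution and removes this difficulty; I would state this assumption explicitly and note that it is the natural condition under which the iterative scheme is well posed.
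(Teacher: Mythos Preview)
Your proposal is correct and arrives at the same coordinate update via the same mechanism---coordinate descent on a convex objective with a separable $\ell_{1}$ penalty---but the paper takes a slightly different presentational route. The paper first factors $\R = \tilde{\R}\tilde{\R}^{T}$ to rewrite the $\R$-norm lasso as the standard lasso regression $\tfrac{1}{2}\|\tilde{\R}^{T}\y - \tilde{\R}^{T}\vvec\|_{2}^{2} + \lamv\|\vvec\|_{1}$, cites the coordinate-descent convergence result for ordinary lasso, writes the coordinate update in terms of $\tilde{\R}$, and then simplifies using $\tilde{\R}_{rj}^{T}\tilde{\R}_{rj} = \R_{jj}$, $\tilde{\R}_{rj}^{T}\tilde{\R} = \R_{rj}$, and $\tilde{\R}_{rj}^{T}\tilde{\R}_{r,\neq j} = \R_{j,\neq j}$ to show that $\tilde{\R}$ never actually needs to be formed. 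Your route works directly with the quadratic form in $\R$, invokes the more general smooth-plus-separable-nonsmooth coordinate-descent result, and obtains the update without the $\tilde{\R}$ detour. Both are valid; the paper's detour makes the appeal to the existing lasso literature more explicit, while your argument is more self-contained and naturally surfaces the $\R_{jj}>0$ requirement that the paper's proof leaves implicit.
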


Claim \ref{gpmf_sparse_diag} states that when $\R$ is diagonal, the
solution for 
$\vvec$ can be obtained by soft thresholding the elements of $\y$
by a vector penalty parameter.
For general $\R$, Claim \ref{gpmf_sparse_gen} gives that we can use
coordinate-descent to find the solution without having to compute
$\tilde{\R}$.
Thus, the sparse single-factor GPMF solution can be calculated in a
computationally efficient manner.  One may further improve the speed
of coordinate-descent by employing warm starts and iterating over
active coordinates as described in \citet{friedman_glmnet_2010}.  

We have discussed the details of computing the GPMF factors for the
lasso penalty, and similar techniques can be used to
efficiently compute the solution for the group 
lasso \citep{yuan_2006_group}, fused 
lasso \citep{tibshirani_2005_fused}, generalized lasso
\citep{tibshirani_gen_lasso_2011} and
other sparse convex penalties.  As mentioned, we limit our class of
penalty functions to those that are norms or semi-norms, which does
not include popular concave penalties
such as the SCAD penalty \citep{fan_li_2001}.  
As mentioned above, these penalties, however, can
still be used in our framework as 
concave penalized regression problems can be solved
via iterative weighted lasso problems \citep{mazumder_2009}. 
Thus, our GPMF framework can be used with a wide
range of penalties to obtain sparsity in the factors.  

Finally, we note that as our GMD Algorithm can be used to perform GPCA,
the sparse GPMF framework can be used to find sparse generalized principal
components by setting $\lamu = 0$ in 
\eqref{gpmf}. This is an approach well-studied in the Sparse PCA
literature \citep{shen_spca_2008, witten_pmd_2009, lee_ssvd_2010}.

\subsubsection{Smoothness: $\Omeg$-norm Penalty \& Functional Data}

In addition to sparseness, there is much interest in penalties that
encourage smoothness, especially in the context of functional data
analysis.  We show how the GPMF can be used with smooth
penalties and propose a generalized gradient descent method to solve for these
smooth GPMF factors.  
Many have proposed to obtain smoothness in the factors by using a
quadratic penalty.  \citet{rice_silverman_1991} suggested $P(\vvec) =
\vvec^{T} \Omeg \vvec$, where $\Omeg$ is the matrix of squared second or
fourth differences. As this penalty is not homogeneous of order one,
we use the $\Omeg$-norm penalty: $P(\vvec) = ( \vvec^{T}
\Omeg \vvec )^{-1/2} = || \vvec ||_{\Omeg}$.  Since this  penalty is a
norm or a semi-norm, the GPMF
solution given in Theorem \ref{gpmf_sol} can be employed.  We
seek to minimize the following $\Omeg$-norm penalized regression problem:
$\frac{1}{2} ||  \X^{T} \Q
\uvec -  \vvec ||_{\R}^{2} + \lamv || \vvec ||_{\Omeg}$.
Notice that this problem is similar in structure to the group lasso problem of
\citet{yuan_2006_group} with one group.

To solve the $\Omeg$-norm penalized regression problem, we use a
generalized gradient descent method. 
(We note that there are more elaborate
first order solvers, introduced in recent works such as \citet{tfocs, nesta},
and we describe a simple version of such solvers).
Suppose $\Omeg$ has rank $k$. Then, set
$\y = \X^{T} \Q \uvec$, and define $\Bmat$ as
$\Bmat =
\begin{pmatrix}
\Omeg^{-1/2} \\
 \mathbf{N}
\end{pmatrix}
$ where $\Omeg^{-1/2} \in \Re^{k \times p}$ and $\mathbf{N} \in
\Re^{(p-k) \times p}$.  The rows of $\mathbf{N}$ are taken to span the null space of $\Omeg$,
and $\Omeg^{-1/2}$ is taken to satisfy $(\Omeg^{-1/2})^{T} \Omeg
\Omeg^{-1/2} = \mathbf{P}_{\Omeg}$,
the Euclidean projection onto the column space of $\Omeg$.
The matrices $\Omeg^{-1/2}$ and $\mathbf{N}$ can be found from the full
eigenvalue decomposition of $\Omeg$, $\Omeg = \Gam \Lam^{2} \Gam^{T}$. 
Assuming $\Lam$ is in decreasing order,
we can take $\Omeg^{-1/2}$ to be 
$\Lam^{-1}(1:k,1:k) (\Gam (1:k,\cdot))^{T}$ and $\mathbf{N}$ to be the
last $p-k$ rows of $\Gam$. 
If $\Omeg$ is taken to denote the squared differences, for example, then
$\mathbf{N}$ can be taken as a row of ones.  For the squared 
second differences, $\mathbf{N}$ can be set to
have two rows: a row of ones and a row with the linear sequence $1,
\ldots, p$.

Having found $\Omeg^{-1/2}$ and $\mathbf{N}$, 
we re-parametrize the problem by taking a non-degenerate linear
transformation of $\vvec$ by setting
$\Bmat^{-1} \vvec
= 
\begin{pmatrix}
  \wvec \\
  \mathbf{\eta}
\end{pmatrix}
$ so that $\Bmat \wvec
= \vvec$.  Taking $\Omeg^{1/2}$ to be $\Lam(1:k,1:k) (\Gam (1:k,\cdot))^{T}$, we note that 
$\|\vvec\|_{\Omeg} = \|\Omeg^{1/2}\vvec\|_2 =
\|\Omeg^{1/2}(\Omeg^{-1/2}\wvec + \mathbf{N}\eta)\|_2 = \|\wvec\|_2$,
as desired.
The $\Omeg$-norm penalized regression problem, written
in terms of $(\wvec, \mathbf{\eta})$ therefore has the form
\begin{align}
\label{pen_reg_grp_repar}
\frac{1}{2} || \y - \Omeg^{-1/2} \wvec - \mathbf{N} \mathbf{\eta} ||_{\R}^{2} +
\lamv || \wvec ||_{2}.
\end{align}
The solutions of \eqref{pen_reg_grp_repar} are in one-to-one
correspondence to those of the $\Omeg$-norm regression problem via the
relation $\Bmat\wvec = \vvec$ and hence, it is sufficient 
to solve \eqref{pen_reg_grp_repar}.  
Consider the following algorithm and result:
\begin{algorithm}
\caption{Algorithm for re-parametrized $\Omeg$-norm penalized regression.}
\label{alg_smooth}
\begin{enumerate}
\item Let $L > 0$ be such that $|| \Bmat^{T} \R \Bmat ||_{op} < L$ and
  initialize $\wvec^{(0)}$. 
\item Define $\tilde{\wvec}^{(t)} = \wvec^{(t)} + 
  \frac{1}{L} \Bmat^{T} \R \left( \y - \Omeg^{-1/2} \wvec^{(t)} -
  \mathbf{N} \mathbf{\eta}^{(t)}\right)$.  \\
Set 
$\wvec^{(t+1)} = \left( 1 - \frac{\lamv}{L ||  \tilde{\wvec}^{(t)} 
  ||_{2}}\right)_{+}   \tilde{\wvec}^{(t)}$.
\item Set
$\mathbf{\eta}^{(t+1)} = (\mathbf{N}^T\R\mathbf{N})^{\dagger} \left(\mathbf{N}^T\R (\y - \Omeg^{-1/2} \wvec^{(t+1)}) \right)$.
\item Repeat Steps (b)-(c) until convergence.
\end{enumerate}
\end{algorithm}

\begin{proposition}
\label{prop_gpmf_smooth}
The $\vvec^{*}$ minimizing the $\Omeg$-norm penalized
regression problem is given by $\vvec^{*}
= \Bmat
\begin{pmatrix}
\wvec^{*} \\
\mathbf{\eta}^*  
\end{pmatrix}
$ where $\wvec^{*}$  and $\eta^{*}$ are the solutions of Algorithm
~\ref{alg_smooth}. 
\end{proposition}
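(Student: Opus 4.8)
The plan is to show that Algorithm~\ref{alg_smooth} is, after its first iteration, exactly a proximal (generalized) gradient descent on the reparametrized problem \eqref{pen_reg_grp_repar} with the $\veta$ block profiled out, and then to invoke standard convergence theory. Since the text has already established the one-to-one correspondence $\vvec = \Bmat\begin{pmatrix}\wvec \\ \veta\end{pmatrix}$ between minimizers of \eqref{pen_reg_grp_repar} and of the $\Omeg$-norm penalized regression problem, it suffices to prove that the iterates $(\wvec^{(t)}, \veta^{(t)})$ converge to a global minimizer of \eqref{pen_reg_grp_repar}; the claimed $\vvec^{*}$ then follows by the change of variables.

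First I would split the objective of \eqref{pen_reg_grp_repar} as $F(\wvec, \veta) = g(\wvec, \veta) + \lamv \|\wvec\|_{2}$, where $g(\wvec,\veta) = \tfrac12 \| \y - \Omeg^{-1/2}\wvec - \mathbf{N}\veta \|_{\R}^{2}$ is a convex quadratic (convex since $\R \succeq 0$) and the penalty depends only on $\wvec$. Reading off the updates, Step (c) sets $\veta^{(t+1)} = \argmin_{\veta} g(\wvec^{(t+1)}, \veta)$ exactly, via the normal equations, with the pseudoinverse $(\mathbf{N}^{T}\R\mathbf{N})^{\dagger}$ covering a possibly singular $\mathbf{N}^{T}\R\mathbf{N}$; and Step (b) is a gradient step on $g$ in the $\wvec$ block, $\tilde\wvec^{(t)} = \wvec^{(t)} - \tfrac1L\nabla_{\wvec}g(\wvec^{(t)},\veta^{(t)})$, followed by the proximal operator of $\tfrac{\lamv}{L}\|\cdot\|_{2}$, since $\big(1 - \tfrac{\lamv}{L\|\tilde\wvec\|_{2}}\big)_{+}\tilde\wvec = \mathrm{prox}_{(\lamv/L)\|\cdot\|_{2}}(\tilde\wvec)$ is the block soft-threshold (one-group lasso) operator.

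The key observation is the profiling/envelope step. Define $\bar g(\wvec) = \min_{\veta} g(\wvec, \veta)$ and $\bar F(\wvec) = \bar g(\wvec) + \lamv\|\wvec\|_{2}$; here $\bar g$ is again a convex quadratic, whose Hessian is the Schur complement of $\nabla^{2} g$ with respect to the $\veta$ block. For every $t \geq 1$, Step (c) of the previous iteration guarantees $\veta^{(t)} = \argmin_{\veta} g(\wvec^{(t)}, \veta)$, so $\nabla_{\veta} g(\wvec^{(t)}, \veta^{(t)}) = 0$, and hence by the envelope identity the partial gradient used in Step (b) satisfies $\nabla_{\wvec} g(\wvec^{(t)}, \veta^{(t)}) = \nabla \bar g(\wvec^{(t)})$ (well-defined even under a non-unique $\veta$-minimizer, since $\bar g$ and its gradient are unique). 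Consequently, for $t \geq 1$ the $\wvec$-iteration is precisely $\wvec^{(t+1)} = \mathrm{prox}_{(\lamv/L)\|\cdot\|_{2}}\!\big(\wvec^{(t)} - \tfrac1L \nabla\bar g(\wvec^{(t)})\big)$, i.e. generalized gradient descent (ISTA) on the convex composite objective $\bar F$; the initialization and the single transition $\wvec^{(0)}\to\wvec^{(1)}$ only fix the ISTA starting point and do not affect the limit.

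To close, I would verify the step size and quote convergence. The step $1/L$ is admissible provided $L$ exceeds the Lipschitz constant of $\nabla\bar g$, namely $\|\nabla^{2}\bar g\|_{op}$. Because $\nabla^{2}\bar g$ is the Schur complement of the full Hessian $\nabla^{2}g = \Bmat^{T}\R\Bmat$, positive semidefiniteness gives $0 \preceq \nabla^{2}\bar g \preceq A$, where $A$ is the $\wvec$-block (a principal submatrix) of $\nabla^{2}g$; by Cauchy eigenvalue interlacing \citep{horn_johnson}, $\|A\|_{op} \leq \|\nabla^{2}g\|_{op} = \|\Bmat^{T}\R\Bmat\|_{op} < L$ from Step (a). Hence $1/L$ is a valid proximal-gradient step, and the standard convergence theory for proximal gradient descent on convex problems \citep{tfocs} yields $\wvec^{(t)} \to \wvec^{*} \in \argmin \bar F$, with $\veta^{(t)} \to \veta^{*} = \argmin_{\veta} g(\wvec^{*},\veta)$ by continuity of the affine map in Step (c). Since $\min_{\wvec,\veta} F = \min_{\wvec}\bar F$ with the minimum attained at $(\wvec^{*}, \veta^{*})$, this pair minimizes \eqref{pen_reg_grp_repar}, so $\vvec^{*} = \Bmat\begin{pmatrix}\wvec^{*} \\ \veta^{*}\end{pmatrix}$ minimizes the $\Omeg$-norm problem. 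The main obstacle is the Schur-complement step-size bound---confirming that profiling out $\veta$ cannot inflate the Lipschitz constant past the $L$ of Step (a)---together with the care required for the pseudoinverse when $\mathbf{N}^{T}\R\mathbf{N}$ is singular.
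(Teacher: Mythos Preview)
Your proof is correct and in fact more careful than the paper's own argument, though the overall strategy is closely related.  The paper proceeds by writing $f(\wvec,\eta)=g(\wvec,\eta)+h(\wvec)$, observing that $\nabla g$ is Lipschitz with constant bounded by $\|\Bmat^{T}\R\Bmat\|_{op}$, and then asserting that because the nonsmooth part involves only $\wvec$, block-wise coordinate descent (a proximal gradient step in $\wvec$ alternating with exact minimization in $\eta$) converges, citing \citet{vandenberghe} and \citet{beck_fista_2009}.  The paper does not explicitly justify why this hybrid scheme inherits the guarantees of single-block ISTA, nor does it check that the step size $1/L$ remains valid for the $\wvec$ block alone.

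Your profiling argument fills exactly that gap: by defining $\bar g(\wvec)=\min_{\eta}g(\wvec,\eta)$ and using the envelope identity $\nabla_{\wvec}g(\wvec^{(t)},\eta^{(t)})=\nabla\bar g(\wvec^{(t)})$ once $\eta^{(t)}$ is optimal, you reduce the iteration (for $t\geq 1$) to pure proximal gradient descent on the single-block problem $\bar F(\wvec)=\bar g(\wvec)+\lamv\|\wvec\|_{2}$, for which convergence is standard.  The Schur-complement bound $\nabla^{2}\bar g\preceq H_{\wvec\wvec}\preceq \|\Bmat^{T}\R\Bmat\|_{op}\I$ is the right way to confirm that the $L$ of Step~(a) still dominates the Lipschitz constant of $\nabla\bar g$; the paper simply bounds the full Hessian and implicitly assumes this suffices.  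So your route and the paper's are the same algorithm viewed through two lenses---alternating proximal/exact block updates versus ISTA on the profiled objective---but yours makes the convergence claim rigorous where the paper's appeal to ``block-wise coordinate descent'' is somewhat informal.
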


Since our problem 
employs a monotonic increasing function of the quadratic smoothing
penalty, $|| \vvec ||_{\Omeg}^{2}$,
typically used for functional data analysis, then, the $\Omeg$-norm
penalty also results in a smoothed factor, $\vvec$.

Computationally, our algorithm requires
taking the matrix square root of the penalty matrix $\Omeg$.  For 
dense matrices, this is of order $O(p^{3})$, but for commonly used
difference matrices, sparsity can reduce the
order to $O(p^{2})$ \citep{golub_van_loan_1996}.  The matrix
$\Bmat$, can be computed and stored
prior to running algorithm and $\tilde{\R}$ does not need to be
computed.  Also,  each step of Algorithm ~\ref{alg_smooth} is on the order
of matrix multiplication.  The generalized gradient descent steps
for solving for $\wvec^{(t+1)}$ can also be solved
by Euclidean projection onto 
$\left\{\vvec \in \Re^p: \vvec'\Omeg \vvec \leq \lambda \right\}$.
Hence, as long as this projection is computationally feasible,
the smooth GPMF penalty is
computationally feasible for high-dimensional
functional data. 

We have shown how one can use penalties to
incorporate smoothness into the factors of our GPMF model.  As with
sparse GPCA, we can use this smooth penalty to perform functional
GPCA.  The analog of our $\Omeg$-norm penalized single-factor GPMF
criterion with $\lamu = 
0$ is closely related to previously proposed methods for computing functional
principal components \citep{huang_fpca_2008, huang_2009}.

\subsection{Selecting Penalty Parameters \& Variance Explained}

When applying the GPMF and Sparse or Functional GPCA to real structured
data, there are two practical considerations that must be addressed:
(1) the number of factors, $K$, to extract, and (2) the choice of
penalty parameters, $\lamu$ and $\lamv$, controlling the amount of
regularization.  Careful examination of the former is beyond the
scope of this paper.  For classical PCA, however, several methods
exist for selecting the number of factors \citep{buja_1992, troyanskaya_2001,
  owen_2009_cv}.  Extending the 
imputation-based methods of \citet{troyanskaya_2001} for GPCA methods is
straightforward; we believe extensions of \citet{buja_1992} and
\citet{owen_2009_cv} are also possible in our framework.  A related issue to
selecting the number of factors is the amount of variance explained.
While this is a simple calculation for GPCA (see Corollary
~\ref{cor_gpca_var_ex}), this is not as straightforward for two-way
regularized GPCA as the factors are no longer orthonormal in the
$\Q,\R$-norm.  Thus one must project out the effect of the previous
factors to compute the cumulative variance explained by the first
several factors.
\begin{proposition}
\label{prop_var_ex}
Let $\U_{k} = [ \uvec_{1} \ \ldots \ \uvec_{k} ]$ and $\V_{k} = [
  \vvec_{1} \ \ldots \ \vvec_{k} ]$ and define $\Pmat_{k}^{(\U)} =
  \U_{k} ( \U_{k}^{T} \Q \U_{k} )^{-1} \U_{k}^{T}$ and  $\Pmat_{k}^{(\V)} =
  \V_{k} ( \V_{k}^{T} \R \V_{k} )^{-1} \V_{k}^{T}$ yielding $\X_{k} =
  \Pmat_{k}^{(\U)} \Q \X \R \Pmat_{k}^{(\V)}$.  Then, the cumulative
  proportion of variance explained by the first $k$ regularized GPCs
  is given by: $\mathrm{tr} ( \Q \X_{k} \R \X_{k}^{T} ) / \mathrm{tr}
  ( \Q \X \R \X^{T} )$.  
\end{proposition}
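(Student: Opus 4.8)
The plan is to recognize the numerator $\mathrm{tr}(\Q \X_k \R \X_k^T) = || \X_k ||_{\Q,\R}^2$ as the squared length of an \emph{orthogonal projection} of $\X$ in the inner product induced by the $\Q,\R$-norm, so that the total variance splits additively by Pythagoras. Concretely, I would equip the space of $n \times p$ matrices with the bilinear form $\langle \A, \Bmat \rangle_{\Q,\R} = \mathrm{tr}(\Q \A \R \Bmat^T)$, for which $|| \A ||_{\Q,\R}^2 = \langle \A, \A \rangle_{\Q,\R}$ is exactly the total $\Q,\R$-variance appearing in the denominator. The goal is then to show that $\X_k$ is the orthogonal projection of $\X$ onto the subspace of matrices whose columns lie in $\mathrm{span}(\U_k)$ and whose rows lie in $\mathrm{span}(\V_k)$, from which the stated proportion is immediate.

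First I would handle the two one-sided projections separately. Set $\Pi_{\U} = \Pmat_k^{(\U)} \Q$ and $\Pi_{\V} = \Pmat_k^{(\V)} \R$. Using that $\Pmat_k^{(\U)}, \Pmat_k^{(\V)}, \Q, \R$ are symmetric and that $\U_k^T \Q \U_k$ and $\V_k^T \R \V_k$ are invertible (as is implicit in the definition of $\Pmat_k^{(\U)}$ and $\Pmat_k^{(\V)}$), a direct computation gives idempotency, $\Pi_{\U}^2 = \Pi_{\U}$ and $\Pi_{\V}^2 = \Pi_{\V}$, and self-adjointness with respect to the $\Q$- and $\R$-inner products, since $\Pi_\U^T \Q = \Q \Pmat_k^{(\U)} \Q$ and $\Pi_\V^T \R = \R \Pmat_k^{(\V)} \R$ are symmetric. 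Hence $\Pi_{\U}$ is the $\Q$-orthogonal projection onto $\mathrm{span}(\U_k)$ and $\Pi_{\V}$ the $\R$-orthogonal projection onto $\mathrm{span}(\V_k)$. I would then observe that $\X_k = \Pmat_k^{(\U)} \Q \X \R \Pmat_k^{(\V)} = \Pi_{\U} \X \Pi_{\V}^T$, so the two-sided map $\mathcal{P}(\A) = \Pi_{\U} \A \Pi_{\V}^T$ sends $\X$ to $\X_k$.

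The crux is to show $\mathcal{P}$ is itself an orthogonal projection in $\langle \cdot, \cdot \rangle_{\Q,\R}$. Idempotency $\mathcal{P}^2 = \mathcal{P}$ follows immediately from $\Pi_\U^2 = \Pi_\U$ and $(\Pi_\V^T)^2 = \Pi_\V^T$. For self-adjointness I would reduce both $\langle \mathcal{P}\A, \Bmat \rangle_{\Q,\R}$ and $\langle \A, \mathcal{P}\Bmat \rangle_{\Q,\R}$ to the common form $\mathrm{tr}(\Mmat \A \mathbf{N} \Bmat^T)$ with $\Mmat = \Q \Pmat_k^{(\U)} \Q$ and $\mathbf{N} = \R \Pmat_k^{(\V)} \R$ both symmetric; the cyclic property of the trace together with the symmetry of $\Q, \R, \Pmat_k^{(\U)}, \Pmat_k^{(\V)}$ makes the two expressions coincide. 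This identifies $\mathcal{P}$ as the $\langle \cdot, \cdot \rangle_{\Q,\R}$-orthogonal projection onto $\{ \A : \mathrm{col}(\A) \subseteq \mathrm{span}(\U_k), \ \mathrm{row}(\A) \subseteq \mathrm{span}(\V_k) \}$, and in particular gives $\langle \X_k, \X - \X_k \rangle_{\Q,\R} = 0$.

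With this in hand the conclusion is the Pythagorean identity $|| \X ||_{\Q,\R}^2 = || \X_k ||_{\Q,\R}^2 + || \X - \X_k ||_{\Q,\R}^2$, so the fraction of the total $\Q,\R$-variance lying in the span of the first $k$ factors is precisely $\mathrm{tr}(\Q \X_k \R \X_k^T)/\mathrm{tr}(\Q \X \R \X^T)$. I would close by checking consistency with Corollary \ref{cor_gpca_var_ex}: in the unregularized case $\U_k^T \Q \U_k = \I_{(k)}$ and $\V_k^T \R \V_k = \I_{(k)}$, so $\X_k = \U_k \D_k \V_k^T$ with $\D_k = \mathrm{diag}(d_1, \ldots, d_k)$, and the numerator collapses to $\sum_{j \le k} d_j^2$. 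The main obstacle I anticipate is purely bookkeeping: verifying the self-adjointness of the two-sided operator $\mathcal{P}$ while keeping the $\Q$ and $\R$ weights in their correct positions under the trace. A secondary point needing care is that when $\Q$ or $\R$ are merely positive semi-definite, $\langle \cdot, \cdot \rangle_{\Q,\R}$ is only a semi-inner product; however, the orthogonality $\langle \X_k, \X - \X_k \rangle_{\Q,\R} = 0$ follows algebraically from idempotency and self-adjointness of $\mathcal{P}$ without any use of definiteness, so the Pythagorean decomposition, and hence the claimed formula, remains valid.
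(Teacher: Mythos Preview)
Your argument is correct, but it takes a genuinely different route from the paper's own proof. The paper establishes the formula by \emph{reducing to the Euclidean case}: using the factorizations $\Q = \tilde{\Q}\tilde{\Q}^{T}$ and $\R = \tilde{\R}\tilde{\R}^{T}$ from Theorem~\ref{gmd_sol}, it rewrites $\Pmat_{k}^{(\U)} = \tilde{\Q}^{-1}\Pmat_{k}^{(\tilde{\U})}(\tilde{\Q}^{-1})^{T}$ and similarly for $\Pmat_{k}^{(\V)}$, then shows $\mathrm{tr}(\Q\X_{k}\R\X_{k}^{T}) = \mathrm{tr}(\tilde{\X}_{k}\tilde{\X}_{k}^{T})$, thereby identifying the numerator with the ordinary variance explained by the projected $\tilde{\X}$ and appealing to the Sparse PCA result of Shen and Huang. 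Your approach instead works \emph{intrinsically} in the $\Q,\R$-inner product: you verify directly that $\mathcal{P}(\A) = \Pmat_{k}^{(\U)}\Q\,\A\,\R\Pmat_{k}^{(\V)}$ is idempotent and self-adjoint for $\langle\cdot,\cdot\rangle_{\Q,\R}$, and then invoke Pythagoras. Your route is more self-contained (no square-root factorization, no external citation) and makes the geometric meaning of the formula transparent; the paper's route has the virtue of tying the result back to the same reduction that underlies the entire GMD framework, so the reader sees the variance-explained formula as yet another instance of ``GMD $=$ SVD of $\tilde{\X}$.'' Your handling of the semi-definite case is also cleaner than the paper's, which implicitly relies on the reduction.
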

Also note that since the deflation-based GPMF algorithm is greedy, the
components are not necessarily ordered in terms of variance explained
as those of the GMD.


When applying the GPMF, the penalty parameters $\lamu$ and $\lamv$
control the amount of sparsity or smoothness in the estimated
factors.  We seek a data-driven way to estimate these penalty
parameters.   Many 
have proposed cross-validation approaches for the SVD \citep{troyanskaya_2001,
  owen_2009_cv} or nested generalized cross-validation or Bayesian Information
Criterion (BIC) 
approaches \citep{huang_2009, lee_ssvd_2010}.  While all of these
methods are appropriate for our models as well, we propose an
extension of the BIC
selection method of \citet{lee_ssvd_2010} appropriate for the
$\Q,\R$-norm.   
\begin{claim}
\label{claim_bic}
The BIC selection criterion for the GPMF factor $\vvec$ with $\uvec$
and $d$ fixed at $\uvec'$ and $d'$ respectively is given by the
following:
\begin{align*}
 BIC(\lamv) &=  \ \mathrm{log} \left(
\frac{|| \X - d' \uvec' \hat{\vvec} ||_{\Q,\R}^{2}}{np} \right)  +
\frac{\mathrm{log} (np)}{np}  \widehat{df}(\lamv). 
\end{align*}
\end{claim}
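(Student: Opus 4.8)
The plan is to derive the criterion directly from the matrix-variate Gaussian likelihood interpretation of the $\Q,\R$-norm established in Section~\ref{section_gmd}. Recall that for $\X \sim N_{n,p}(\U\D\V^T, \Q^{-1}, \R^{-1})$ the negative log-likelihood is proportional to $\|\X - \U\D\V^T\|_{\Q,\R}^2$. To obtain a criterion that trades fit against complexity I would introduce an explicit scalar scale parameter $\sigma^2$, modeling $\mathrm{vec}(\X) \sim N(\mathrm{vec}(d'\uvec'\hat{\vvec}^T),\, \sigma^2\, \R^{-1}\otimes\Q^{-1})$ with $\uvec'$ and $d'$ held fixed and $\hat{\vvec}$ the penalized fit from Theorem~\ref{gpmf_sol}. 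The log-likelihood then carries both the quadratic term $-\frac{1}{2\sigma^2}\|\X - d'\uvec'\hat{\vvec}^T\|_{\Q,\R}^2$ and the normalizing term $-\frac{np}{2}\log\sigma^2$ (the $\log\det(\R^{-1}\otimes\Q^{-1})$ contribution being constant in both $\sigma^2$ and $\lamv$), where $np$ is the number of entries of $\X$, i.e. the effective sample size for the vectorized model.

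First I would profile out $\sigma^2$. Differentiating with respect to $\sigma^2$ and setting the result to zero yields the maximum-likelihood scale estimate
\begin{equation*}
\hat{\sigma}^2 = \frac{\|\X - d'\uvec'\hat{\vvec}^T\|_{\Q,\R}^2}{np},
\end{equation*}
and substituting back shows that, up to an additive constant independent of $\lamv$, the profiled quantity $-2\ell$ equals $np\log\hat{\sigma}^2$. Inserting this into the generic BIC expression $-2\ell + \log(\text{sample size})\cdot df$, with sample size $np$ and complexity penalty the effective degrees of freedom $\widehat{df}(\lamv)$ of the fitted factor, and then dividing through by $np$ to fix the scaling, reproduces exactly the stated form: the first term becomes $\log\bigl(\|\X - d'\uvec'\hat{\vvec}^T\|_{\Q,\R}^2/(np)\bigr)$ and the second becomes $\frac{\log(np)}{np}\widehat{df}(\lamv)$.

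The remaining ingredient is the degrees-of-freedom term $\widehat{df}(\lamv)$, and this is where the main difficulty lies. In the ordinary $\R=\I$ case, Stein's unbiased risk estimate identifies $\widehat{df}$ with the number of nonzero coordinates of $\hat{\vvec}$ for lasso-type penalties; I would adapt this to the $\R$-norm penalized regression of Theorem~\ref{gpmf_sol} by accounting for the $\R$-weighting of the residual inner product, following the BIC construction of \citet{lee_ssvd_2010}. The subtlety is that the divergence of the fitted map $\X^T\Q\uvec' \mapsto \hat{\vvec}$ must be computed in the $\R$-geometry rather than the Euclidean one, so that $\widehat{df}(\lamv)$ correctly measures the number of free parameters retained at penalty level $\lamv$. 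Once this identification is made, the criterion follows immediately from the likelihood calculation above.
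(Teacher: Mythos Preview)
Your derivation is correct and follows essentially the same route as the paper: treat the fit of $\vvec$ with $\uvec',d'$ fixed as a penalized regression with effective sample size $np$ and unknown scale, profile out $\sigma^2$, and replace the Frobenius residual sum of squares by the $\Q,\R$-norm. The paper's own proof is in fact just a two-sentence appeal to \citet{lee_ssvd_2010} for the $np$ sample-size identification and to the $\Q,\R$-norm substitution, so your explicit likelihood computation is a fleshed-out version of the same argument. One small note: your third paragraph worries about adapting $\widehat{df}(\lamv)$ to the $\R$-geometry via a weighted divergence, but the paper does not pursue this---it simply plugs in the standard degrees-of-freedom estimates from the penalized regression literature (e.g.\ the number of nonzeros for the lasso), treating $\widehat{df}(\lamv)$ as an externally supplied quantity rather than part of the derivation.
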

The BIC criterion for the other factor $\uvec$ is analogous.  
Here, $\widehat{df}(\lamv)$ is an estimate of the degrees of freedom
for the particular penalty employed.  For the lasso penalty, for
example, $\widehat{df}(\lamv) = \mid \{ \hat{\vvec} \} \mid$
\citep{zou_dof_2007}.  Expressions for the degrees of freedom of other penalty
functions are given in \citet{kato_dof_2009} and
\citet{tibshirani_gen_lasso_2011}.

Given this model selection criterion, 
one can select the optimal
penalty parameter at each iteration of the factors for the GPMF
algorithm as in \citet{lee_ssvd_2010}, or use the criterion to select
parameters after the iterative algorithm has converged.  In 
our experiments, we found that both of these approaches perform
similarly, but the latter is more numerically stable.  The performance
of this method is investigated through simulation studies in Section
~\ref{section_simulations}.  Finally, we note that since the the GPMF only
converges to a local optimum, the solution achieved is highly
dependent on the initial starting point.  Thus, we use the GMD solution
to initialize all our algorithms, an approach similar to related
methods which only achieve a local optimum \citep{witten_pmd_2009,
  lee_ssvd_2010}.

\section{Results}
\label{section_results}

We assess the effectiveness of our methods on simulated data sets and
a real functional MRI example.

\subsection{Simulations}
\label{section_simulations}

 All simulations are generated from the following model: $\X = \Smat +
\Emat = \sum_{k=1}^{K} \phi_{k} \uvec_{k} \vvec_{k}^{T} + \Sig^{1/2}
\Zmat \Delt^{1/2}$, where the
$\phi_{k}$'s denote the magnitude of the rank-$K$ signal matrix $\Smat$,
$\Zmat_{ij} \overset{iid}{\sim} N(0,1)$ and $\Sig$ and $\Delt$ are the row
and column 
covariances so that $\Emat \sim N_{n,p}(0,\Sig,\Delt)$.  Thus, the
data is simulated according to the matrix-variate normal 
distribution with mean given by the rank-$K$ signal matrix, $\Smat$.
The SNR for this model is given by 
$\mathrm{E}\left[ \mathrm{tr}( \Smat^{T} \Smat )\right] /
\mathrm{E}\left[ \mathrm{tr} ( \Emat^{T} \Emat ) \right] =
\mathrm{E}\left[ \sum_{k=1}^{K} \phi_{k}^{2} \uvec_{k}^{T} \uvec_{k}
  \vvec_{k}^{T} \vvec_{k}  \right] /
\mathrm{E} \left[ \mathrm{tr}(\Sig) \mathrm{tr}(\Delt) \right]$
\citep{gupta_nagar}.  The data
is row and column centered before each method is applied, and to be
consistent, the BIC criterion is used to select parameters for all methods.


\subsubsection{Spatio-Temporal Simulation}

\begin{figure}[!!t]
\includegraphics[width=5in,clip=true,trim=0in 1in 0in 0in]{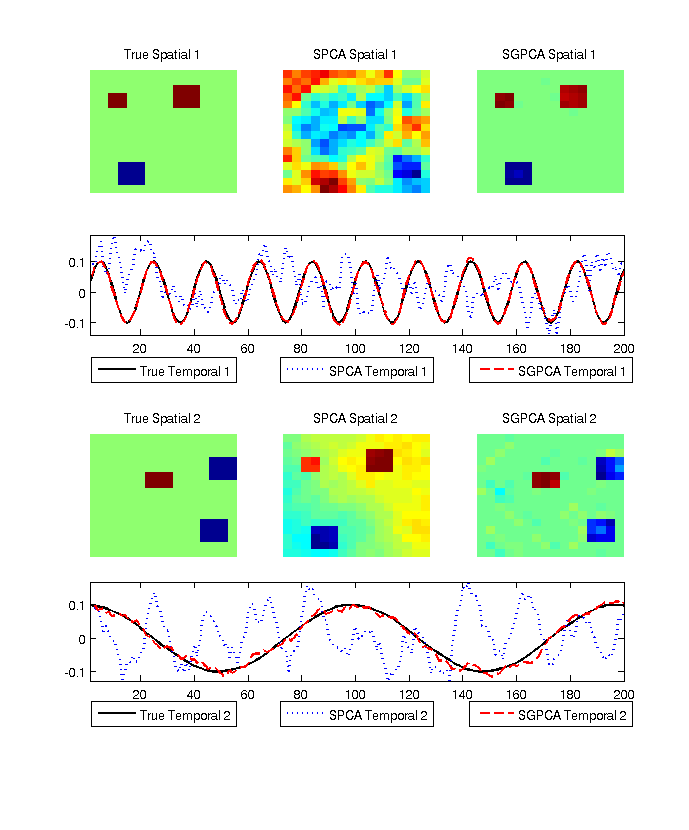}
\caption{\em Example results from the spatio-temporal simulation.}
\label{fig_spt_ex}
\end{figure}

Our first set of simulations is inspired by spatio-temporal
neuroimaging data, and is that of the example shown in Figure
~\ref{fig_spt_ex}.   The two spatial signals, $\U \in \Re^{256
  \times 2}$, each consist of three non-overlapping ``regions of
interest'' on a $16 \times 16$ grid.  The two temporal signals, $\V
\in \Re^{200 \times 2}$, are sinusoidal activation patterns with
different frequencies.  The signal is given by $\phi_{1} \sim N(1,\sigma^{2})$
and $\phi_{2} \sim N(0.5, \sigma^{2})$ where $\sigma$ is
chosen so that the SNR = $\sigma^{2}$.  The noise is simulated from an
autoregressive Gaussian Markov random field.  The spatial covariance,
$\Sig$, is that of an autoregressive process on the $16 \times 16$
grid with correlation $0.9$, and the temporal covariance, $\Delt$, is
that of an autoregressive temporal process with correlation $0.8$.

\begin{table}
\caption{\em Comparison of different quadratic operators for GPCA in the
  spatio-temporal simulation.}
\hspace{-.65in} \begin{tabular}{l|r|r|r|r|r|r}
& \% Var $k=1$ & \% Var $k=2$ & MSSE $\uvec_{1}$ & MSSE
  $\uvec_{2}$ & MSSE $\vvec_{1}$ & MSSE $\vvec_{2}$ \\
\hline
$\Q = \I_{(m^2)}$, $\R = \I_{(p)}$ &  57.4 (0.9)  &  19.1 (0.9) &
  0.5292 (.04) &    0.6478 (.02) &    0.3923 (.02) &    0.4857 (.01) \\
$\Q = \Sigi$, $\R = \Delti$ &  75.4 (2.3)  &  6.6 (0.2) & 
  0.1452 (.02) &    0.3226 (.02) &    0.0087 (.01) &    0.0180 (.01) \\
$\Q = \Lmat_{m,m}$, $\R = \Lmat_{p}$ &    42.9 (2.9) &  2.8 (0.1) &
  0.1981 (.03) &    0.7972 (.02) &   0.0609 (.02) &  0.4334  (.02) \\ 
$\Q = \Lmat_{m,m}$, $\R = \Smat_{p}$ &   55.8 (2.3) &  14.5 (0.3)  &
  0.1714 (.02) &    0.3425 (.03) &    0.0481 (.01) &    0.0809 (.02) \\
$\Q = \Smat_{m,m}$, $\R = \Lmat_{p}$ &   67.6 (0.7) &    13.0 (0.8) &
  0.8320 (.02) &    0.8004 (.01) &    0.5414 (.01) &  0.4831 (.01) \\ 
$\Q = \Smat_{m,m}$, $\R = \Smat_{p}$ &  60.3 (1.0) &   19.3 (1.0)  &
  0.5682 (.04) &    0.6779 (.02) &   0.4310 (.02) &    0.5030 (.01) \\
\end{tabular}
\label{tab_qr}
\end{table}

The behavior demonstrated by Sparse PCA (implemented using the
approach of \citep{shen_spca_2008}) and Sparse GPCA in Figure
~\ref{fig_spt_ex} where $\sigma^{2} = 1$ is typical.  Namely, when the
signal is small and the structural noise is strong, PCA and Sparse PCA
often find structural noise as the major source of variation instead
of the true signal, the regions of interest and activation patterns.
Even in subsequent components, PCA and Sparse PCA often find a mixture
of signal and structural noise, so that the signal is never easily
identified.  
In this example, $\Q$ and $\R$ are not estimated from the data and
instead fixed based on the known data 
structure, a $16 \times 16$ grid and $200$ equally spaced points
respectively.  In Table ~\ref{tab_qr}, we explore two simple
possibilities for quadratic operators for this spatio-temporal data:
graph Laplacians of a graph connecting nearest neighbors and a kernel
smoothing matrix (using the Epanechnikov kernel) smoothing nearest
neighbors.  Notationally, these are denoted as $\Lmat_{m,m}$ for a
Laplacian on a $m \times m$ mesh grid or $\Lmat_{p}$ for $p$ equally
spaced variables; $\Smat$ is denoted analogously.  
We present results when $\sigma^{2} = 1$ in terms of
variance explained and mean 
squared standardized error (MSSE) for 100 replicates of the
simulation.  Notice that the combination of a 
spatial Laplacian and a temporal smoother in terms of signal recovery
performs nearly as well as 
when $\Q$ and $\R$ are set to the true population values and
significantly better than classical PCA.  Thus, quadratic
operators do not necessarily need to be estimated from the data, but
can instead 
be fixed based upon known structure.  Returning to the example in
Figure ~\ref{fig_spt_ex}, $\Q$ and $\R$ are taken as a Laplacian and
smoother respectively, yielding excellent recovery of the regions of
interest and activation patterns.

\begin{figure}
\hspace{-.25in}\includegraphics[width=6in,clip=true,trim=1.1in 0in 1in 0in]{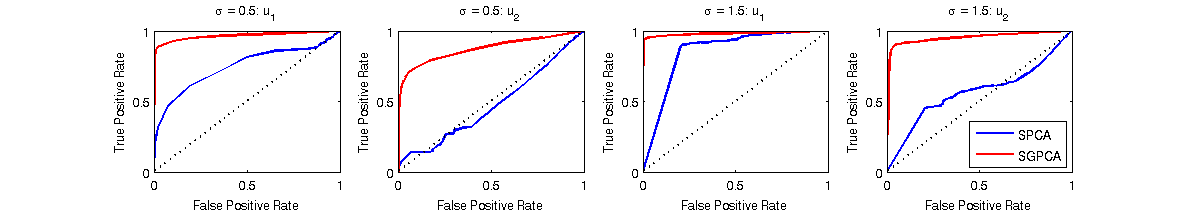}
\caption{\em Average receiver-operator curves for the spatio-temporal
  simulation.}
\label{fig_roc}
\end{figure}

Next, we investigate the feature selection properties of Sparse GPCA
as compared to Sparse PCA for the structured spatio-temporal
simulation.  Note that given the above results, $\Q$ and $\R$ are
taken as a Laplacian and a smoothing operator for nearest neighbor
graphs respectively.  
In Figure ~\ref{fig_roc}, we present mean receiver-operator
curves (ROC) averaged over 100 replicates achieved by varying the
regularization parameter, $\lambda$.  In Table ~\ref{tab_select}, we
present feature selection results in terms of true and false positives
when the regularization parameter $\lambda$ is fixed and estimated via
the BIC method.  From both of these results, we see that Sparse GPCA
has major advantages over Sparse PCA.  Notice also that accounting for
the spatio-temporal structure through $\Q$ and $\R$ also yields better
model selection via BIC in Table \ref{tab_select}.

Overall, this spatio-temporal simulation has demonstrated significant
advantages 
of GPCA and Sparse GPCA for data with low signal compared to the
structural noise when the structure is fixed and known.

\begin{table}
\caption{\em Feature selection results for the spatio-temporal
  simulation.}
\begin{tabular}{ll|r|r|r|r}
& & True Positive $\uvec_{1}$ & False Positive $\uvec_{1}$ & True
  Positive $\uvec_{2}$ & False Positive $\uvec_{2}$ \\
\hline
\multirow{2}{*}{$\sigma = 0.5$} & Sparse PCA &   0.9859 (.008) &    0.9758 (.014) &   0.7083 (.035)  &
0.7391 (.029) \\
& Sparse GPCA  &  0.9045 (.025) &   0.0537 (.007) &   0.7892 (.032) &
0.1749 (.005) \\
\hline
\multirow{2}{*}{$\sigma = 1.5$} & Sparse PCA &    0.9927 (.004) &   0.7144 (.039)  &  0.7592 (.034) &
0.7852 (.030) \\
& Sparse GPCA  &  0.9541 (.018) &   0.0292 (.005)  &  0.9204 (.024)  &
0.1572 (.004) \\
\end{tabular}
\label{tab_select}
\end{table}

\subsubsection{Sparse and Functional Simulations}

The previous simulation example used the outer product of a  sparse
and smooth signal with 
autoregressive noise.  Here, we investigate the behavior of our
methods when both of the factors are either smooth or sparse and when the
noise arises from differing graph structures.  Specifically, we simulate
noise with 
either a block diagonal covariance or a random graph structure in
which the precision matrix arises from a graph with a random number of
edges. The former has blocks of size five with off diagonal elements
equal to $0.99$.  The latter is generated from a graph where each
vertex is connected to $n_{j}$ randomly selected
vertices, where $n_{j} \overset{iid}{\sim} Poisson(3)$ for the row
variables and 
$Poisson(1)$ for the column variables.  The row and column
covariances, $\Sig$ and $\Delt$ are taken as the inverse of the nearest
diagonally dominant matrix to the graph Laplacian of the row and
column variables respectively.  For our GMD
and GPMF methods, the values of the true quadratic operators are
assumed to be unknown, but the structure is known, and thus $\Q$ and
$\R$ are taken to be the graph Laplacian.  One hundred
replicates of the simulations 
for data of dimension $100 \times 100$ were performed.

\begin{table}
\caption{\em \footnotesize Functional Data Simulation Results. 
}
\begin{tabular}{ll|rrr}
&&  Squared Error & Squared Error & Squared Error  \\
&&  Row Factor & Column Factor &  Rank 1 Matrix \\
\hline
\hline
$\sigma = 0.5$  &&&& \\
&Block Diagonal Covariance &&& \\
\hline
&SVD &   3.670 (.121)  &  3.615 (.117) &  67.422 (3.29)   \\
&Two-Way Functional PCA  & 3.431 (.134)  &  3.315 (.127) &  65.973
(3.41)  \\
&GMD &   1.779 (.138)  &  2.497 (.138) &  60.687  (3.55)   \\
&Functional GPMF  &  1.721 (.133)  &  1.721  (.144) &   56.296 (2.81)  \\
\hline
&Random Graph &&& \\
\hline
&SVD &   2.827 (.241) &   2.729 (.234)  & 49.674 (1.50)   \\
&Two-Way Functional PCA &   1.968 (.258)  &  1.878 (.250) &  49.694
 (1.50)  \\
&GMD &    0.472 (.095)  &  1.310 (.138)  & 49.654 (1.51)   \\
&Functional GPMF &   0.663 (.049)  &  0.388 (.031) &  49.666 (1.51)  \\
\hline
\hline
$\sigma = 1.0$ &&&& \\
&Block Diagonal Covariance &&& \\
\hline
&SVD &     2.654
(.138) &  2.532 (.143) &  94.412 (6.49) \\
&Two-Way Functional PCA   &   2.426 (.142)  &  2.311 (.144) &  92.522 (6.59) \\
&GMD  &   1.094
(.117) &   1.647 (.120)  & 88.852 (6.73) \\
&Functional GPMF   &
1.671 (.110)  &  2.105 (.129)  & 76.864 (5.27) \\
\hline
&Random Graph &&& \\
\hline
&SVD &    2.075
 (.224) &   1.961 (.226) &  50.392 (2.61) \\
&Two-Way Functional PCA &      1.224 (.212) &   1.187 (.206) &  50.276
(2.62) \\ 
&GMD &  0.338
 (.075)  &  0.926 (.119) &  50.258 (2.62) \\
&Functional GPMF &   
 0.608 (.070) &    0.659 (.245)  & 50.345 (2.60) \\
\hline
\end{tabular}
\label{tab_sim2}
\end{table}

\begin{figure}
\hspace{-.75in}\includegraphics[width=6.75in,clip=true,trim=0in .5in 0in .35in]{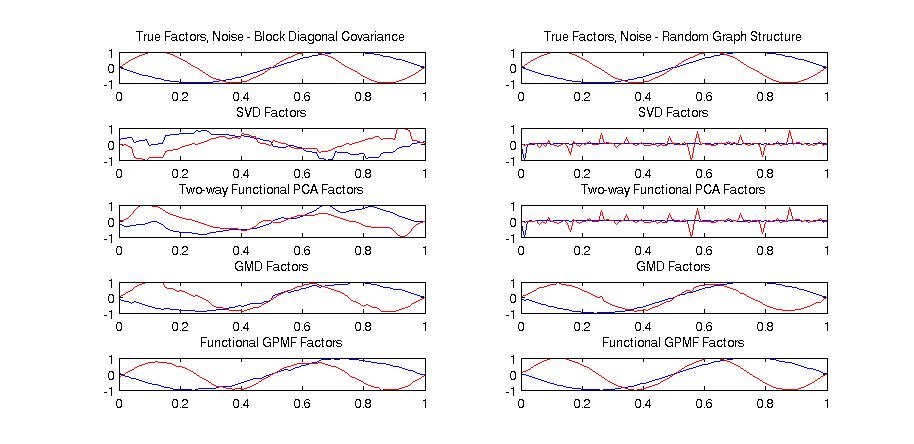}
\caption{\em \footnotesize Example Factor Curves for Functional Data
  Simulation.}
\label{fig_mexample2}
\end{figure}

To test our GPMF method with the smooth $\Omeg$-norm penalties, we
simulate sinusoidal row and column factors: $\uvec = sin (4 \pi x)$ and $\vvec
= -sin(2 \pi x)$ for 100 equally spaced values, $x \in [0,1]$.  As the
factors are fixed, the rank one signal is multiplied by $\phi \sim N(0,
c^{2} \sigma^{2})$, where $c$ is chosen such that the $SNR = \sigma^{2}$.
In Table 3, we compare the GMD and GPMF with smooth penalties to
the SVD and the two-way functional PCA of \citet{huang_2009} in terms
of squared prediction errors of the row and column factors and rank
one signal.  We see that both the GMD and functional GPMF outperform
competing methods.  Notice, however, that the functional GPMF
does not always perform as well as the un-penalized GMD.  In many
cases, as the quadratic operators act as smoothing matrices of the
noise, the GMD yields fairly smooth estimates of the factors, Figure
~\ref{fig_mexample2}.


\begin{table}
\caption{\em \footnotesize Sparse Factors Simulation Results.
}
\begin{tabular}{l|rrrr}
& \% True Positives & \% False Positives & \% True
  Positives & \% False Positives \\
& Row Factor & Row Factor & Column Factor & Column Factor \\
\hline
\hline
Block Diagonal &&&& \\
Covariance &&&& \\
\hline
\hline
$\sigma = 0.5$ &&&& \\
 Sparse PMD &    76.24 (1.86)  &
  46.20 (3.16) &  79.68 (1.55) &   53.23 (3.16) \\
 Sparse GPMF &   79.72 (2.45)  &  1.16  (0.15) &   82.60 (2.57) &
  1.16 (0.17) \\
\hline
$\sigma = 1.0$ &&&& \\
   Sparse PMD & 87.80 (1.26) &  32.80
  (3.34) &  88.56 (1.05) &   40.93 (3.29) \\
 Sparse GPMF &   87.12 (2.19) & 1.25 (0.17) &  87.24 (2.21) &   1.48
  (0.18) \\
\hline
\hline
Random Graph &&&& \\
\hline
\hline
$\sigma = 0.5$ &&&& \\
 Sparse PMD &    83.56 (1.86) &
  39.01 (3.16) &  79.44 (1.55) &   33.67 (3.16) \\
 Sparse GPMF &   87.36 (2.45) &  28.29 (0.15) &  81.16 (2.53) &
  24.73 (0.17) \\
\hline
$\sigma = 1.0$ &&&& \\
 Sparse PMD &   88.04 (1.26) &
  46.12 (3.34) &  85.36 (1.05) &  48.63 (3.29) \\
 Sparse GPMF &   92.48 (2.19) &   44.28 (0.17) &   87.80 (2.21) &
  41.20 (0.18) \\
\hline
\end{tabular}
\label{tab_sim3}
\end{table}

Finally, we test our sparse GPMF method against other sparse penalized
matrix decompositions \citep{witten_pmd_2009, lee_ssvd_2010}, in Table
4.  In both 
the row and column factors, one fourth of the elements are non-zero
and simulated according to $N(0, \sigma^{2})$.  Here the scaling
factor $\phi$ is chosen so that the $SNR = \sigma^{2}$.  We compare
the methods in terms of the average percent true and false positives
for the row and column factors.  The results indicate that our methods
perform well, especially when the noise has a block diagonal
covariance structure.

The three simulations indicate that our GPCA and sparse and functional
GPCA methods perform well when there are two-way dependencies of the
data with known structure.  For the tasks
of identifying regions of interest, functional patterns, and feature
selection with transposable data, our methods show a substantial
improvement over existing technologies.

\subsection{Functional MRI Example}
\label{section_example}

\begin{figure}
\begin{center}\includegraphics[width=3.5in]{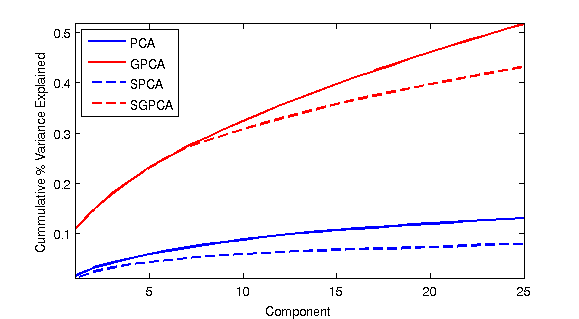}
\caption{\em \footnotesize Cumulative proportion of variance explained
  by the first 25 PCs on the starplus fMRI data.  Generalized PCA
  (GPCA) and Sparse GPCA (SGPCA) yield a significantly larger reduction in
  dimension that PCA and Sparse PCA (SPCA). }
\end{center}
\label{fig_fmri_varex}
\end{figure}

We demonstrate the effectiveness of our GPCA and Sparse GPCA methods
on a functional MRI example.  As discussed in the motivation for our
methods, functional MRI data is a classic example of two-way
structured data in which the nature of the noise with respect to this
structure is relatively well understood.  Noise in the spatial domain
is related 
to the distance between voxels while noise in the temporal domain is
often assumed to follow a autoregressive process or another classical
time series process \citep{lindquist_2008}.  Thus, when fitting our
methods to this 
data, we consider fixed quadratic operators related to these
structures and select the pair of quadratic operators yielding the
largest proportion of variance explained by the first GPC.  Specifically, we
consider $\Q$ in the spatial domain to be a  graph
Laplacian of a nearest neighbor graph connecting the voxels or a
positive power
of this Laplacian.  In the temporal domain, we consider $\R$ as a
graph Laplacian or a positive power of a Laplacian of a chain graph or a
one-dimensional smoothing matrix with a window size of five or ten.
In this manner, $\Q$ and $\R$ are not estimated from the data and are
fixed a priori based on the known two-way structure of fMRI data.

For our functional MRI example, we use a well-studied, publicly
available fMRI data set 
where subjects were shown images and read audible sentences related to
these images, a data set which refer to as the ``StarPlus'' data
\citep{mitchell_fmri_2004}.  We study data for one subject, subject number 04847, which
consists of 4,698 voxels ($64 \times 64 \times 8$ images with
masking) measured for 20 tasks, each lasting for 27 seconds, yielding
54 - 55 time points.  The data was pre-processed by standardizing each voxel
within each task segment.   Rearranging the data yields a $4,698 \times
1,098$ matrix to which we applied our dimension reduction techniques.
For this data, $\Q$ was selected to be an unweighted Laplacian and
$\R$ a kernel smoother with a window size of ten time points.  In
Figure \ref{fig_fmri}, we present the first three PCs found by PCA,
Sparse PCA, GPCA, and Sparse GPCA.  Both the spatial PCs, illustrated
by the eight axial slices, and the corresponding temporal PCs, with
dotted red vertical lines denoting the onset of each new task period,
are given.   Spatial noise overwhelms PCA and Sparse PCA with large
regions selected and with the corresponding time series appearing to
be artifacts or scanner noise, unrelated to the experimental task.  The
time series 
of the first three GPCs and Sparse GPCs, however, exhibit a clear
correlation with the task period and temporal
structure characteristic of the BOLD signal.  While the spatial PCs of
GPCA are 
a bit noisy, incorporating sparsity with Sparse GPCA yields spatial
maps consistent with the task in which the subject was viewing and
identifying images then hearing a sentence related to that
image.  In particular, the first two SGPCs show bilateral occipital,
left-lateralized inferior temporal, and inferior frontal activity
characteristic of the well-known "ventral stream" pathway recruited
during object identification tasks \citep{pennick_2012}.

In Figure \ref{fig_fmri_varex}, we show the
cumulative proportion of variance explained by the first 25 PCs for
all our methods.  Generalized PCA methods clearly exhibit an enormous
advantage in terms of dimension reduction with the first GPC and SGPC
explaining more sample variance than the first 20 classical PCs.  As
PCA is commonly used as an initial dimension reduction technique for
other pattern recognition techniques, such as independent components
analysis, applied to fMRI 
data \citep{lindquist_2008}, our Generalized PCA methods can offer a
major advantage in this 
context.  Overall, by directly accounting for the known two-way
structure of fMRI data, our GPCA methods yield biologically and
experimentally relevant results that explain much more variance than
classical PCA methods.

\begin{sidewaysfigure}
\includegraphics[width=4.25in]{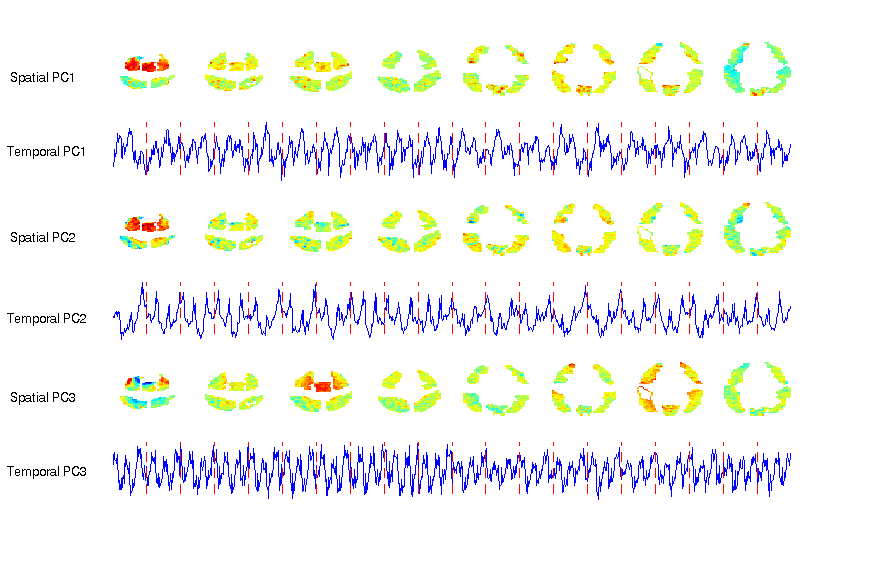}\includegraphics[width=4in]{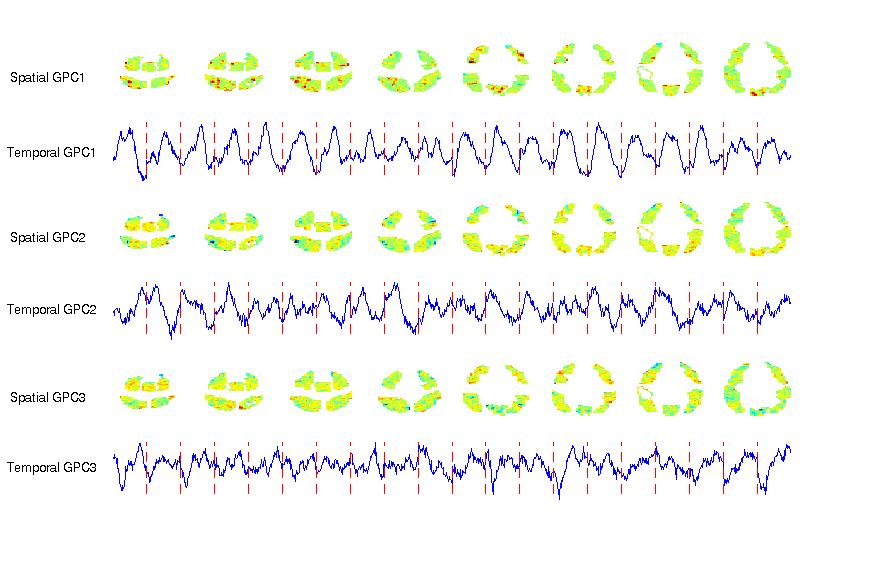}
\includegraphics[width=4.25in]{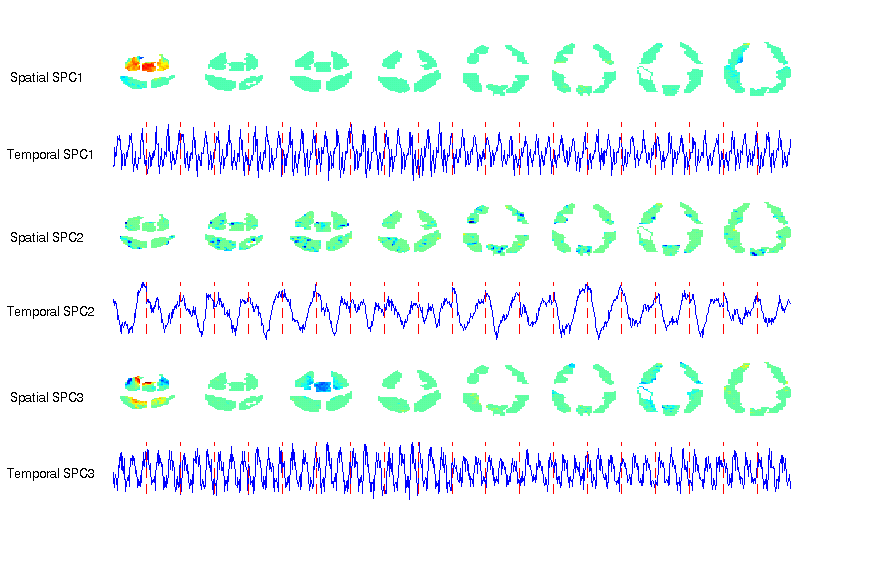}\includegraphics[width=4in]{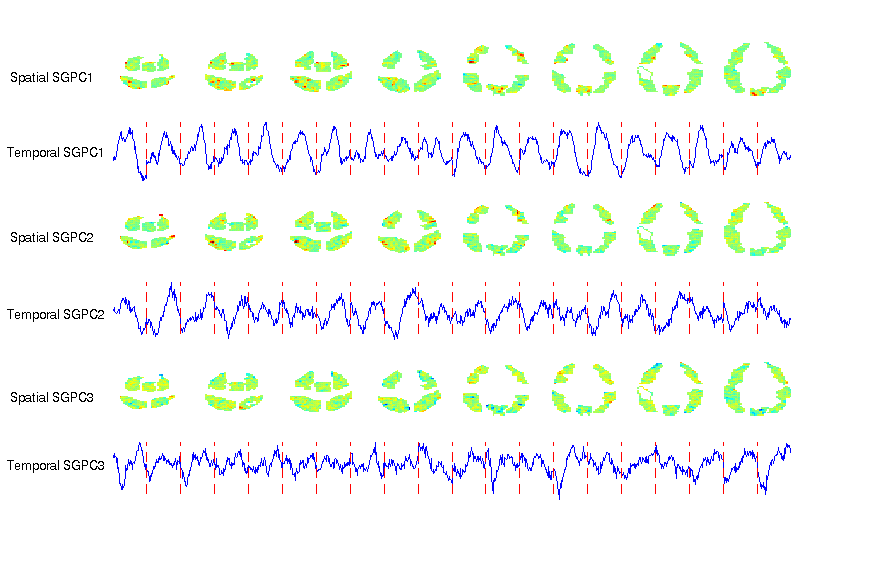}
\caption{\em \footnotesize Eight axial slices with corresponding time
  series for the first three PCs of the StarPlus fMRI data for PCA (top
  left), Sparse PCA (bottom 
  left), Generalized PCA (top right) and Sparse Generalized PCA
  (bottom right). Dotted red vertical lines in the time series denote
  the beginning and end of each task where an image was shown
  accompanied by an audible sentence that corresponded to the image.
  The first three spatial PCs of PCA and the first and third of Sparse
  PCA exhibit large patterns of spatial noise where the corresponding
  time series appear to be artifacts or scanner noise, unrelated to
  the experiment. 
  The temporal PCs of GPCA and Sparse GPCA, on the other hand, exhibit
  a clear pattern with respect to the experimental task.  The spatial
  PCs of GPCA, however are somewhat noisy, whereas when sparsity is
  encouraged, the spatial PCs of Sparse GPCA illustrate clear regions
  of activation related to the experimental task.  } 
\label{fig_fmri}
\end{sidewaysfigure}

\section{Discussion}
\label{section_discussion}

In this paper, we have presented a general framework for
incorporating structure into a matrix
decomposition 
and a two-way penalized matrix factorization.  Hence, we have
provided an alternative to the SVD, PCA and regularized PCA that is
appropriate for structured data.  We have also developed fast
computational algorithms to apply our methods to massive
data such as that of functional MRIs.  Along the way, we have provided
results that
clarify the types of penalty functions that may be employed on
matrix factors that are estimated via a deflation scheme.  Through
simulations and a real example on fMRI data, we have shown that GPCA
and regularized GPCA can be a powerful method for signal recovery and
dimension reduction of high-dimensional structured data.

While we have presented a general framework permitting heteroscedastic
errors in PCA and two-way regularized PCA, we currently only advocate the
applied use of our methodology for structured data.  Data in which
measurements are taken on a grid (e.g. regularly spaced time series,
spectroscopy, and image data) or on known Euclidean coordinates
(e.g. environmental data, epigenetic methylation data and unequally spaced
longitudinal data) have known structure which can be encoded by the
quadratic operators through smoothing or graphical operators as
discussed in Section \ref{section_interpret}.  Through close
connections to other non-linear unsupervised learning techniques and
interpretations of GPCA as a covariance decomposition and smoothing
operator,  the behavior our methodology for
structured data is well understood.  For data that is not
measured on a structured surface, however, more investigations need to
be done to determine the applicability of our methods.  In particular,
while it may be tempting to estimate both the quadratic operators,
via \citet{allen_2010_imp} for example, and the GPCA factors, this is
not an approach we advocate as separation of the noise and signal
covariance structure may be confounded.

For specific applications to structured data, however, there is much future
work to be done to determine how to best employ our methodology.
The utility of GPCA would be greatly enhanced by data-driven methods
to learn or estimate the optimal quadratic operators from certain
classes of structured matrices or an inferential approach to determining
the best pair of quadratic operators from a fixed set of options.
With structured data in particular, methods to achieve these are not
straightforward as the amount of variance explained or the matrix
reconstruction error is not always a good measure of how well the structural
noise is separated from the actual signal of interest (as seen in the
spatio-temporal simulation, Section \ref{section_simulations}).
Additionally as the definition of the ``signal'' varies from one
application to another, these issues should be studied carefully
for specific applications.  An area of future research would then be
to develop the theoretical properties of GPCA in terms of consistency or
information-theoretic bounds based on classes of signal and
structured noise.  
While these investigations are beyond the scope of this initial paper,
the authors plan on exploring these issues as well as applications to
specific data sets in future work.

There are many other statistical directions for future
research related to our work.  As many other classical multivariate
analysis methods are closely related to PCA, our framework for
incorporating structure and two-way regularization could be employed
in methods such as canonical correlations analysis, discriminant
analysis, multi-dimensional scaling, latent variable models, and
clustering.  Also several other statistical and machine learning techniques
are based on Frobenius norms which may be altered for structured data
along the lines of our approach.  Additionally, there are many areas
of research related to two-way regularized PCA models.  Theorem
\ref{gpmf_sol} elucidates the classes of penalties that can be
employed on PCA factors estimated via deflation that ensure
algorithmic convergence.  This 
convergence, however, is only to a local optimum. Methods are then
needed to find
good initializations, to estimate optimal penalty parameters, to find
the relevant range of penalty parameters comprising a full solution
path, and to
ensure convergence to a good solution.  Furthermore, asymptotic
consistency of several approaches to 
Sparse PCA has been established \citep{johnstone_spca_2004,
  amini_2009, ma_spca_2010}, but more work needs to be 
done to establish consistency of two-way Sparse PCA.

Finally, our methodology work has broad implications in a wide array
of applied disciplines.  Massive image data is common in areas of
neuroimaging, microscopy, hyperspectral imaging, remote sensing, and
radiology.  Other examples of high-dimensional structured data can be
found in environmental and climate studies, times series and finance,
engineering sensor and network data,
and astronomy.  Our GPCA and regularized GPCA methods can be used with
these structured data sets for improved dimension reduction, signal
recovery, feature selection, data visualization and exploratory data
analysis.

In conclusion, we have presented a general mathematical framework for
PCA and regularized PCA for massive structured data. Our methods have
broad statistical and 
applied implications that will lead to many future areas of research.

\section{Acknowledgments}

The authors would like to thank Susan Holmes for
bringing relevant references to our attention and Marina Vannucci for
the helpful advice in the preparation of this manuscript.
Additionally, we are grateful to the anonymous referees and associate
editor for helpful comments and suggestions on this work.  J. Taylor
was partially supported by NSF DMS-0906801, and L. Grosenick was
supported by NSF IGERT Award \#0801700.

\appendix



\section{Proofs}

\begin{proof}[Proof of Theorem ~\ref{gmd_sol}]
{\footnotesize
We show that the GMD problem, \eqref{gmd}, at $\U^{*},\D^{*},\V^{*}$
is equivalent to the SVD problem, \eqref{svd}, for $\tilde{\X}$ at
$\tilde{\U},\tilde{\D}, \tilde{\V}$.  Thus, if
$\tilde{\U},\tilde{\D},\tilde{\V}$ minimizes the SVD problem,
\eqref{svd}, then 
$\U^{*},\D^{*},\V^{*}$ minimizes the GMD problem, \eqref{gmd}.
We begin with the objectives:
\begin{align*}
&|| \X - \U^{*} \D^{*} (\V^{*})^{T} ||_{\Q,\R}^{2} = \mathrm{tr} \left( \Q
   \X \R \X^{T} \right) - 2 \mathrm{tr}\left( \Q \U^{*} \D^{*}
   (\V^{*})^{T} \R \X^{T} \right)  \\
& \qquad \qquad + \mathrm{tr} \left( \Q \U^{*} \D^{*} (\V^{*})^{T} \R \V^{*} (\D^{*})^{T}
   (\U^{*})^{T} \right) \\ 
& \qquad = \mathrm{tr} \left(\tilde{\Q} \tilde{\Q}^{T}  \X \tilde{\R}
   \tilde{\R}^{T} \X^{T} \right)
 -2 \mathrm{tr} \left( \tilde{\Q}  \tilde{\Q}^{T}
   \tilde{\Q}^{-1} \tilde{\U} \tilde{\D} \tilde{\V}^{T}
   (\tilde{\R}^{-1})^{T}  \tilde{\R} \tilde{\R}^{T} \X^{T} \tilde{\Q}
   \right)   \\
& \qquad \qquad  + \mathrm{tr} \left( \tilde{\Q}  \tilde{\Q}^{T}  \tilde{\Q}^{-1}
   \tilde{\U} \tilde{\D} \tilde{\V}^{T} 
   (\tilde{\R}^{-1})^{T} \tilde{\R}
   \tilde{\R}^{T} \tilde{\R}^{-1} \tilde{\V} \tilde{\D}^{T}
   \tilde{\U}^{T} \tilde{\Q}^{-1}  \right)  \\ 
&\qquad = \mathrm{tr} \left( \tilde{\X} \tilde{\X}^{T} \right)
-2 \mathrm{tr} \left( \tilde{\U} \tilde{\D} \tilde{\V}^{T}
   \tilde{\X}^{T} \right)  + \mathrm{tr} \left( \tilde{\U} \tilde{\D}
   \tilde{\V}^{T} \tilde{\V} \tilde{\D}^{T} \tilde{\U}^{T}  \right)  
= || \tilde{\X} - \tilde{\U} \tilde{\D} \tilde{\V}^{T} ||_{F}^{2}.
\end{align*}
One can easily verify that the constraint regions are equivalent:
$(\U^{*})^{T} \Q \U^{*} = \tilde{\U}^{T} \tilde{\Q}^{-1} \Q
(\tilde{\Q}^{-1})^{T} \tilde{\U} = \tilde{\U}^{T} \tilde{\U}$ and
similarly for $\V$ and $\R$.  This completes the proof.  }
\end{proof}

\begin{proof}[Proof of Corollaries to Theorem ~\ref{gmd_sol}]
{\footnotesize
These results follow from the properties of the SVD
\citep{eckart_1936, horn_johnson, golub_van_loan_1996} and the
relationship between the GMD and SVD given in Theorem ~\ref{gmd_sol}.
For Corollary ~\ref{cor_k}, we use the fact that $\mathrm{rank}(\A \B)
\leq \mathrm{min} \{ 
  \mathrm{rank}(\A), \mathrm{rank}(\B) \}$ for any two matrices $\A$ and
  $\B$; equality holds if $range(\A) \cap null(\B) = \emptyset$.  Then, $\mathrm{rank}(\tilde{\X} ) = k =
  \mathrm{min}\{ \mathrm{rank}(\X), \mathrm{rank}(\Q),
  \mathrm{rank}(\R) \}$.}
\end{proof}

\begin{proof}[Proof of Proposition ~\ref{gmd_alg_sol}]
{\footnotesize
We show that the updates of $\uvec$ and $\vvec$ in the GMD Algorithm
are equivalent to the updates in the power method for computing the
SVD of $\tilde{\X}$.  
Writing the update for $\uvec_{k}$ in terms of$\tilde{\X}$, 
$\tilde{\uvec}$, and $\tilde{\vvec}$ (suppressing the index $k$), we have:
\begin{align*}
(\tilde{\Q}^{-1})^{T} \tilde{\uvec} &= \frac{( (\tilde{\Q}^{-1})^{T}
    \tilde{\X} \tilde{\R}^{-1}
) \R (\tilde{\R}^{-1})^{T} \tilde{\vvec} }{||( (\tilde{\Q}^{-1})^{T}
    \tilde{\X} \tilde{\R}^{-1} 
) \R (\tilde{\R}^{-1})^{T}  \tilde{\vvec}||_{\Q}  }, \Rightarrow
\tilde{\uvec} &= \frac{\tilde{\X} \tilde{\vvec} }{ \sqrt{
    \tilde{\vvec}^{T} \tilde{\X}^{T} \tilde{\Q}^{-1} \Q (\tilde{\Q}^{-1})^{T}
    \tilde{\X} \tilde{\vvec} }} = \frac{\tilde{\X} \tilde{\vvec} }{ ||\tilde{\X} \tilde{\vvec} ||_{2}}.
\end{align*}
Notice that this last form in $\tilde{\uvec}$ is that of the
power method for computing the SVD of $\tilde{\X}$
\citep{golub_van_loan_1996}.  A similar 
calculation for $\vvec$ yields an analogous form.  Therefore, the GMD
Algorithm is equivalent to the algorithm which converges to the SVD of
$\tilde{\X}$.}
\end{proof}

\begin{proof}[Proof of Proposition ~\ref{prop_gpca}]
{\footnotesize
We show that the objective and constraints
in \eqref{gpca} at the GMD solution are the same as that of the rank
$k$ PCA problem
for $\tilde{\X}$.  (For simplicity of notation, we suppress
the index $k$ in the calculation.)
\begin{align*}
\vvec^{T} \R \X^{T} \Q \X \R \vvec &= \tilde{\vvec}^{T}
(\tilde{\R}^{-1})^{T} \R \tilde{\R}^{-1} \tilde{\X}
(\tilde{\Q}^{-1})^{T} \Q \tilde{\Q}^{-1} \tilde{\X}
(\tilde{\R}^{-1})^{T} \R \tilde{\R}^{-1} \tilde{\vvec} =
\tilde{\vvec}^{T} \tilde{\X}^{T} \tilde{\X} \tilde{\vvec} \\
\vvec^{T} \R \vvec &= \tilde{\vvec}^{T} (\tilde{\R}^{-1})^{T} \R
\tilde{\R}^{-1} \tilde{\vvec} = \tilde{\vvec}^{T} \tilde{\vvec}.
\end{align*}
Then, the left singular vector, $\tilde{\vvec}$, that maximizes
the PCA problem, is the same as the left GMD factor that maximizes
\eqref{gpca}.} 
\end{proof}

\begin{proof}[Proof of Corollary ~\ref{cor_gpca_var_ex}]
{\footnotesize
This result follows from the relationship of the GMD and GPCA to the
SVD and PCA.  Recall that for PCA, the amount of variance explained by
the $k^{th}$ PC is given by $d_{k}^{2} / || \X ||_{F}^{2}$ where
$d_{k}$ is the $k^{th}$ singular value of $\X$.  Then, notice that the
stated result is equivalent to the proportion of variance explained by
the $k^{th}$ singular vector of $\tilde{\X}$:  $\tilde{d}_{k}^{2} /
\mathrm{tr}( \tilde{\X} \tilde{\X}^{T} ) = d_{k}^{2} / \mathrm{tr}( \Q
\X \R \X^{T} )$.  
}\end{proof}

\begin{proof}[Proof of Theorem ~\ref{gpmf_sol}]
{\footnotesize
We will show that the result holds for $\vvec$, and the argument for
$\uvec$ is analogous.  Consider optimization of
\eqref{gpmf} with respect 
to $\vvec$ with $\uvec$ fixed at $\uvec'$.  The problem 
is concave in $\vvec$ and there exists a strictly feasible point,
hence Slater's condition holds and the 
KKT conditions are necessary and sufficient for optimality.  Letting
$\y = \X^{T} \Q \uvec$, these are given by:
$\R \y - \lamv \nabla P_{1}( \vvec^{*} ) - 2 \gamma^{*} \R \vvec^{*} =
0$ and $\gamma^{*}( (\vvec^{*})^{T} \R \vvec^{*} - 1) = 0$.
Here, $\nabla P_{1}( \vvec^{*} )$ is a subgradient of $P_{1}()$ with
respect to $\vvec^{*}$. 
Now, consider the following solution to the penalized regression problem:
$\hat{\vvec} = \argmin_{\vvec} \{ \frac{1}{2}|| \y - \vvec ||_{\R}^{2} + \lamv
P_{1}(\vvec)  \}$. 
This solution must satisfy the subgradient conditions.  Hence,
$\forall \ c > 0$, we have:
\begin{align*}
0 = \R \y - \lamv \nabla P_{1}( \hat{\vvec} ) - \R \hat{\vvec} = \R \y  - \lamv \nabla P_{1}( \hat{\vvec} ) - \R( c \hat{\vvec})
\frac{1}{c} = \R \y  - \lamv \nabla P_{1}( \tilde{\vvec} ) - \R \tilde{\vvec} \frac{1}{c},
\end{align*}
where $\tilde{\vvec} = c \hat{\vvec}$.  The last step follows because
since $P_{1}()$ is convex and homogeneous of order one, $ \nabla P_{1}(x) =
\nabla P_{1}(cx) \ \ \forall \ c > 0$.  

Let us take $c = 1 / || \hat{\vvec} ||_{\R}$; we see that this
satisfies $c > 0$.  Then, letting $\gamma^{*} = \frac{1}{2c} = ||
\hat{\vvec} ||_{\R} / 2$, we see that the subgradient condition of
\eqref{gpmf} is equivalent to that of the penalized
regression problem.  Putting these together, we see that the pair $(
\vvec^{*} = \hat{\vvec } / || \hat{\vvec} ||_{\R}, \gamma^{*} = ||
\hat{\vvec} ||_{\R} / 2)$ satisfies the KKT conditions and is hence
the optimal solution.  Notice that $\gamma^{*} = 0$ only if $||
\hat{\vvec} ||_{\R} = 0$.  From our discussion of the quadratic norm,
recall that this can only occur if $\hat{\vvec} \in null(\R)$ or
$\hat{\vvec} = 0$.  Since we exclude the former by assumption, this
implies that if $\hat{\vvec} = 0$, then $\gamma^{*} = 0$ and the
inequality constraint in \eqref{gpmf} is not tight.  In this case, it
is easy to verify that the pair $(\vvec^{*} = 0, \gamma^{*} = 0)$
satisfy the KKT conditions.  Thus, we have proven the desired result.   }
\end{proof}

\begin{proof}[Proof of Claim ~\ref{gpmf_sparse_diag}]
{\footnotesize
Examining the subgradient condition, we have that $\hat{\vvec} = \y -
\lambda \R^{-1} 
\mathbf{1}_{(p)} \nabla P( \hat{\vvec})$.  This
can be written as such because the subgradient equation is separable
in the elements of $\hat{\vvec}$ when $\R$ is diagonal.  Then, it
follows that  soft 
thresholding the elements of $\y$ with the penalty vector $\lambda
\R^{-1} \mathbf{1}_{(p)}$ solves the subgradient equation.  } 
\end{proof}

\begin{proof}[Proof of Claim ~\ref{gpmf_sparse_gen}]
{\footnotesize
We can write the problem, $\frac{1}{2} || \y - \vvec
||_{\R}^{2} + \lamv || \vvec ||_{1}$, as a lasso regression problem in
terms of $\tilde{\R}$, $\frac{1}{2} || \tilde{\R}^{T} \y -
\tilde{\R}^{T} \vvec ||_{2}^{2} + \lamv || \vvec ||_{1}$. 
\citet{coor_descent_2007} established that the lasso regression
problem can be solved by iterative coordinate-descent.  Then,
optimizing with respect to each coordinate, $\vvec_{j}$:
$\hat{\vvec}_{j} =  \frac{1}{\tilde{\R}_{rj}^{T} \tilde{\R}_{rj}} S
\left(  \tilde{\R}_{rj}^{T} \tilde{\R} \y  -
\tilde{\R}_{rj}^{T} \tilde{\R}_{r, \neq j} \hat{\vvec}_{\neq j} ,
\lambda \right)$.  
Now,
several of these quantities can be written in terms of $\R$ so that
$\tilde{\R}$ does not need to be computed and stored:
$\tilde{\R}_{rj}^{T} \tilde{\R}_{rj} = \R_{jj}$,  
$\tilde{\R}_{rj}^{T} \tilde{\R} = \R_{rj}$, and 
$\tilde{\R}_{rj}^{T}  \tilde{\R}_{r, \neq j} = \R_{j,\neq j}$.
Putting these together, we have the desired result.  }
\end{proof}

\begin{proof}[Proof of Proposition ~\ref{prop_gpmf_smooth}]
{\footnotesize
We will show that Algorithm ~\ref{alg_smooth} uses a generalized
gradient descent method converging to the minimum of the $\Omeg$-norm
generalized regression problem.
First, since $\Bmat$ is full rank, there is a one to one mapping
between $\vvec$ and $\wvec$.  Thus, any $(\wvec^{*},\eta^*)$ minimizing $f(\wvec,\eta) =
\frac{1}{2} || \y - \Omeg^{-1/2} \wvec - \mathbf{N}\eta ||_{\R}^{2} + \lamv || \wvec ||_{2}$
yields  $\vvec^{*} = \Bmat \begin{pmatrix} \wvec^{*} \\ \eta^* \end{pmatrix}$ 
which minimizes the $\Omeg$-norm
generalized regression problem.

Now, let us call the first and second terms in  $f(\wvec,\eta)$, $g(\wvec,\eta)$ and
$h(\wvec,\eta)$ respectively.  Then, $h(\wvec,\eta)$ and $g(\wvec,\eta)$ are
obviously convex and $g(\wvec,\eta)$ is Lipschitz with a Lipschitz constant upper bounded by
$\|\Bmat^T \R \Bmat\|_{op}$.  To see this, note that 
$\nabla g(\wvec,\eta) = - \Bmat^{T} \R \left( \y -
\Bmat \begin{pmatrix}\wvec \\ \eta \end{pmatrix} \right)$ is linear in $(\wvec, \eta)$.  Therefore, we can take
the operator norm of the linear part as a Lipschitz constant.  We also
note that $f(\wvec \eta)$ is separable in $\wvec$ and $\eta$ and hence
block-wise coordinate descent can be employed.

 Putting all of these together, the
generalized gradient update for $\wvec$ and the update for $\eta$ that
converge to the minimum 
of the $\Omeg$-norm penalized regression problem is given by the
following updates \citep{vandenberghe, beck_fista_2009}:  
$$
\wvec^{(t+1)} = \argmin_{\wvec} \left\{ \frac{L}{2} \left\| \wvec - \left(
\wvec^{(t)} + \frac{1}{L} \Bmat^{T} \R \left( \y - \Omeg^{-1/2} \wvec^{(t)} - \mathbf{N}\eta^{(t)} \right)
\right) \right\|_{2}^{2} + \lamv || \wvec ||_{2}   \right\} 
$$
$$
\eta^{(t+1)} = \argmin_{\eta} f(\wvec^{(t+1)}, \eta ).
$$

Notice that the first minimization problem is of the form of the group
lasso regression problem with an identity predictor matrix.  The
solution to this is problem given in \citet{yuan_2006_group} and is the
update in Step (b) of Algorithm \ref{alg_smooth}.  The second
minimization problem is a simple linear regression and is the update
in Step (c).  This
completes the proof.} 
\end{proof}

\begin{proof}[Proof of Proposition ~\ref{prop_var_ex}]
{\footnotesize
This result follows from an argument in \citet{shen_spca_2008}.
Namely, since our regularized GPCA formulation does not impose
orthogonality of subsequent factors in the $\Q,\R$-norm, the effect of
the first $k$ factors must be projected out of the data to compute the
cumulative proportion of variance explained.  \citet{shen_spca_2008}
showed that for Sparse PCA with a penalty on $\vvec$, the total
variance explained is given by $\mathrm{tr}( \X_{k} \X_{k}^{T} )$
where $\X_{k} = \X \Pmat_{k}^{(\V)} = \X \V_{k} ( \V_{k}^{T} \V_{k})^{-1}
\V_{k}^{T}$ with 
$\V_{k} = [\vvec_{1} \ \ldots \ \vvec_{k} ]$.  When penalties are
incorporated on both factors, one must define $\X_{k} = \Pmat_{k}^{(\U)}
\X \Pmat_{k}^{(\V)}$.  We show that the total proportion of variance
explained (the numerator of our stated result) is equivalent to
$\mathrm{tr}( \tilde{\X}_{k} \tilde{\X}_{k})$, where $\tilde{\X}_{k}$
is equivalent to that as defined in Theorem ~\ref{gmd_sol}.  First,
notice that $\Pmat_{k}^{(\U)} = \tilde{\Q}^{-1} \tilde{\U}_{k} (
\tilde{\U}_{k}^{T} ( \tilde{\Q}^{-1})^{T} \Q \tilde{\Q}^{-1}
\tilde{\U}_{k} )^{-1} \tilde{\U}_{k}^{T} ( \tilde{\Q}^{-1})^{T} =
\tilde{\Q}^{-1} \Pmat_{k}^{(\tilde{\U})} ( \tilde{\Q}^{-1} )^{T}$ and
equivalently, $\Pmat_{k}^{(\V)} = \tilde{\R}^{-1}
\Pmat_{k}^{(\tilde{\V})} ( \tilde{\R}^{-1} )^{T}$.  Then, $\X_{k} =
\Pmat_{k}^{(\U)} \Q \X \R \Pmat_{k}^{(\V)} = \tilde{\Q}^{-1}
\Pmat_{k}^{(\tilde{\U})} \tilde{\X} \Pmat_{k}^{(\tilde{\V})}
\tilde{\R}^{-1}$, and we have that \\
$\mathrm{tr}( \Q \X_{k} \R
\X_{k}^{T} ) = \tilde{\Q} \tilde{\Q}^{T} \tilde{\Q}^{-1}
\Pmat_{k}^{(\tilde{\U})} \tilde{\X} \Pmat_{k}^{(\tilde{\V})}
\tilde{\R}^{-1} \tilde{\R} \tilde{\R}^{T} ( \tilde{\R}^{-1})^{T}
\Pmat_{k}^{(\tilde{\V })} \tilde{\X}^{T} \Pmat_{k}^{(\tilde{\U})} (
\tilde{\Q}^{-1})^{T} = \mathrm{tr}( \tilde{\X}_{k} \tilde{\X}_{k}^{T}
)$, thus giving the desired result.
}\end{proof}

\begin{proof}[Proof of Claim ~\ref{claim_bic}]
{\footnotesize
\citet{lee_ssvd_2010} show that estimating $\vvec$ in the manner
described in our GPMF framework is analogous to a penalized regression
problem with a sample size of $np$ and $p$ variables.  Using this and
assuming that the noise term of the penalized regression 
problem is unknown, we arrive at the given form of the BIC.  Notice
that the sums of squares loss function is replaced by the $\Q,\R$-norm
loss function of the GMD model.   } 
\end{proof}

{\scriptsize
\bibliographystyle{Chicago}
\bibliography{fmri}

\begin{thebibliography}{}

\bibitem[\protect\citeauthoryear{Allen and Tibshirani}{Allen and
  Tibshirani}{2010}]{allen_2010_imp}
Allen, G.~I. and R.~Tibshirani (2010).
\newblock Transposable regularized covariance models with an application to
  missing data imputation.
\newblock {\em Annals of Applied Statistics\/}~{\em 4\/}(2), 764--790.

\bibitem[\protect\citeauthoryear{Amini and Wainwright}{Amini and
  Wainwright}{2009}]{amini_2009}
Amini, A. and M.~Wainwright (2009).
\newblock High-dimensional analysis of semidefinite relaxations for sparse
  principal components.
\newblock {\em The Annals of Statistics\/}~{\em 37\/}(5B), 2877--2921.

\bibitem[\protect\citeauthoryear{Beck and Teboulle}{Beck and
  Teboulle}{2009}]{beck_fista_2009}
Beck, A. and M.~Teboulle (2009).
\newblock A fast iterative shrinkage-thresholding algorithm for linear inverse
  problems.
\newblock {\em SIAM Journal on Imaging Sciences\/}~{\em 2\/}(1), 183--202.

\bibitem[\protect\citeauthoryear{Becker, Bobin, and Cand\`es}{Becker
  et~al.}{2009}]{nesta}
Becker, S., J.~Bobin, and E.~J. Cand\`es (2009, April).
\newblock {NESTA:} a fast and accurate {First-Order} method for sparse
  recovery.
\newblock Technical report, California Institute of Technology.

\bibitem[\protect\citeauthoryear{Becker, Cand\`es, and Grant}{Becker
  et~al.}{2010}]{tfocs}
Becker, S., E.~J. Cand\`es, and M.~Grant (2010, September).
\newblock Templates for convex cone problems with applications to sparse signal
  recovery.
\newblock Technical report, Stanford University.

\bibitem[\protect\citeauthoryear{Boyd and Vandenberghe}{Boyd and
  Vandenberghe}{2004}]{boyd_convex}
Boyd, S. and L.~Vandenberghe (2004).
\newblock {\em Convex Optimization}.
\newblock Cambridge University Press.

\bibitem[\protect\citeauthoryear{Buja and Eyuboglu}{Buja and
  Eyuboglu}{1992}]{buja_1992}
Buja, A. and N.~Eyuboglu (1992).
\newblock Remarks on parallel analysis.
\newblock {\em Multivariate behavioral research\/}~{\em 27\/}(4), 509--540.

\bibitem[\protect\citeauthoryear{Calhoun, Adali, Pearlson, and Pekar}{Calhoun
  et~al.}{2001}]{calhoun_2001}
Calhoun, V., T.~Adali, G.~Pearlson, and J.~Pekar (2001).
\newblock {A method for making group inferences from functional MRI data using
  independent component analysis}.
\newblock {\em Human Brain Mapping\/}~{\em 14\/}(3), 140--151.

\bibitem[\protect\citeauthoryear{Chen and Buja}{Chen and
  Buja}{2009}]{chen_lmds_2009}
Chen, L. and A.~Buja (2009).
\newblock Local multidimensional scaling for nonlinear dimension reduction,
  graph drawing, and proximity analysis.
\newblock {\em Journal of the American Statistical Association\/}~{\em
  104\/}(485), 209--219.

\bibitem[\protect\citeauthoryear{Dray and Dufour}{Dray and
  Dufour}{2007}]{dray_ade4_2007}
Dray, S. and A.~Dufour (2007).
\newblock The ade4 package: implementing the duality diagram for ecologists.
\newblock {\em Journal of Statistical Software\/}~{\em 22\/}(4), 1--20.

\bibitem[\protect\citeauthoryear{Eckart and Young}{Eckart and
  Young}{1936}]{eckart_1936}
Eckart, C. and G.~Young (1936).
\newblock {The approximation of one matrix by another of lower rank}.
\newblock {\em Psychometrika\/}~{\em 1\/}(3), 211--218.

\bibitem[\protect\citeauthoryear{Escoufier}{Escoufier}{1977}]{escoufier_1977}
Escoufier, Y. (1977).
\newblock Operators related to a data matrix.
\newblock {\em Recent Developments in Statistics\/}, 125--131.

\bibitem[\protect\citeauthoryear{Escoufier}{Escoufier}{1987}]{escoufier_1987}
Escoufier, Y. (1987).
\newblock The duality diagramm: a means of better practical applications.
\newblock {\em Development in numerical ecology\/}, 139--156.

\bibitem[\protect\citeauthoryear{Escoufier}{Escoufier}{2006}]{escoufier_2006}
Escoufier, Y. (2006).
\newblock Operator related to a data matrix: a survey.
\newblock {\em Compstat 2006-Proceedings in Computational Statistics\/},
  285--297.

\bibitem[\protect\citeauthoryear{Fan and Li}{Fan and Li}{2001a}]{fan_scad}
Fan, J. and R.~Li (2001a).
\newblock {Variable selection via nonconcave penalized likelihood and its
  oracle properties}.
\newblock {\em Journal of the American Statistical Association\/}~{\em
  96\/}(456), 1348--1360.

\bibitem[\protect\citeauthoryear{Fan and Li}{Fan and Li}{2001b}]{fan_li_2001}
Fan, J. and R.~Li (2001b).
\newblock {Variable selection via nonconcave penalized likelihood and its
  oracle properties}.
\newblock {\em Journal of the American Statistical Association\/}~{\em
  96\/}(456), 1348--1360.

\bibitem[\protect\citeauthoryear{Friedman, Hastie, Hoefling, and
  Tibshirani}{Friedman et~al.}{2007}]{coor_descent_2007}
Friedman, J., T.~Hastie, H.~Hoefling, and R.~Tibshirani (2007).
\newblock {Pathwise coordinate optimization}.
\newblock {\em Annals of Applied Statistics\/}~{\em 1\/}(2), 302--332.

\bibitem[\protect\citeauthoryear{Friedman, Hastie, and Tibshirani}{Friedman
  et~al.}{2010}]{friedman_glmnet_2010}
Friedman, J., T.~Hastie, and R.~Tibshirani (2010).
\newblock {Regularization paths for generalized linear models via coordinate
  descent}.
\newblock {\em Journal of statistical software\/}~{\em 33\/}(1), 1.

\bibitem[\protect\citeauthoryear{Friston, Phillips, Chawla, and Buchel}{Friston
  et~al.}{1999}]{friston_1999}
Friston, K., J.~Phillips, D.~Chawla, and C.~Buchel (1999).
\newblock {Revealing interactions among brain systems with nonlinear PCA}.
\newblock {\em Human brain mapping\/}~{\em 8\/}(2-3), 92--97.

\bibitem[\protect\citeauthoryear{Galbraith and Galbraith}{Galbraith and
  Galbraith}{1974}]{galbraith_1974}
Galbraith, R. and J.~Galbraith (1974).
\newblock {On the inverses of some patterned matrices arising in the theory of
  stationary time series}.
\newblock {\em Journal of applied probability\/}~{\em 11\/}(1), 63--71.

\bibitem[\protect\citeauthoryear{Golub and Van~Loan}{Golub and
  Van~Loan}{1996}]{golub_van_loan_1996}
Golub, G.~H. and C.~F. Van~Loan (1996).
\newblock {\em Matrix Computations (Johns Hopkins Studies in Mathematical
  Sciences)(3rd Edition)\/} (3rd ed.).
\newblock The Johns Hopkins University Press.

\bibitem[\protect\citeauthoryear{Gupta and Nagar}{Gupta and
  Nagar}{1999}]{gupta_nagar}
Gupta, A.~K. and D.~K. Nagar (1999).
\newblock {\em Matrix variate distributions}, Volume 104 of {\em Monographs and
  Surveys in Pure and Applied Mathematics}.
\newblock Boca Raton, FL: Chapman \& Hall, CRC Press.

\bibitem[\protect\citeauthoryear{Holmes}{Holmes}{2008}]{holmes_2008}
Holmes, S. (2008).
\newblock Multivariate data analysis: The french way.
\newblock In {\em Probability and Statistics: Essays in Honor of David A.
  Freedman}, Volume~2, pp.\  219--233. IMS Collections.

\bibitem[\protect\citeauthoryear{Horn and Johnson}{Horn and
  Johnson}{1985}]{horn_johnson}
Horn, R.~A. and C.~R. Johnson (1985).
\newblock {\em Matrix Analysis}.
\newblock Cambridge University Press.

\bibitem[\protect\citeauthoryear{Huang, Shen, and Buja}{Huang
  et~al.}{2008}]{huang_fpca_2008}
Huang, J., H.~Shen, and A.~Buja (2008).
\newblock {Functional principal components analysis via penalized rank one
  approximation}.
\newblock {\em Electronic Journal of Statistics\/}~{\em 2}, 678--695.

\bibitem[\protect\citeauthoryear{Huang, Shen, and Buja}{Huang
  et~al.}{2009}]{huang_2009}
Huang, J., H.~Shen, and A.~Buja (2009).
\newblock {The analysis of two-way functional data using two-way regularized
  singular value decompositions}.
\newblock {\em Journal of the American Statistical Association\/}~{\em
  104\/}(488), 1609--1620.

\bibitem[\protect\citeauthoryear{Johnstone and Lu}{Johnstone and
  Lu}{2004}]{johnstone_spca_2004}
Johnstone, I. and A.~Lu (2004).
\newblock {Sparse Principal Components Analysis}.
\newblock Unpublished Manuscript.

\bibitem[\protect\citeauthoryear{Jolliffe, Trendafilov, and Uddin}{Jolliffe
  et~al.}{2003}]{jolliffe_spca_2003}
Jolliffe, I., N.~Trendafilov, and M.~Uddin (2003).
\newblock {A modified principal component technique based on the LASSO}.
\newblock {\em Journal of Computational and Graphical Statistics\/}~{\em
  12\/}(3), 531--547.

\bibitem[\protect\citeauthoryear{Kato}{Kato}{2009}]{kato_dof_2009}
Kato, K. (2009).
\newblock {On the degrees of freedom in shrinkage estimation}.
\newblock {\em Journal of Multivariate Analysis\/}~{\em 100\/}(7), 1338--1352.

\bibitem[\protect\citeauthoryear{Lazar}{Lazar}{2008}]{lazar_2008}
Lazar, N. (2008).
\newblock {\em {The statistical analysis of functional MRI data}}.
\newblock Springer Verlag.

\bibitem[\protect\citeauthoryear{Lee, Shen, Huang, and Marron}{Lee
  et~al.}{2010}]{lee_ssvd_2010}
Lee, M., H.~Shen, J.~Huang, and J.~Marron (2010).
\newblock {Biclustering via Sparse Singular Value Decomposition}.
\newblock {\em Biometrics\/}.

\bibitem[\protect\citeauthoryear{Li, Genton, and Sherman}{Li
  et~al.}{2008}]{li_genton_2008}
Li, B., M.~Genton, and M.~Sherman (2008).
\newblock {Testing the covariance structure of multivariate random fields}.
\newblock {\em Biometrika\/}~{\em 95\/}(4), 813.

\bibitem[\protect\citeauthoryear{Lindgren, Lindstr\"om, and Rue}{Lindgren
  et~al.}{2011}]{gaussian_spde_2011}
Lindgren, F., J.~Lindstr\"om, and H.~Rue (2011).
\newblock An explicit link between {G}aussian fields and {G}aussian {M}arkov
  random fields: the stochastic partial differential approach.
\newblock {\em Journal of the Royal Statistical Society. Series B
  (Methodological)\/}.
\newblock To appear.

\bibitem[\protect\citeauthoryear{Lindquist}{Lindquist}{2008}]{lindquist_2008}
Lindquist, M. (2008).
\newblock {The statistical analysis of fMRI data}.
\newblock {\em Statistical Science\/}~{\em 23\/}(4), 439--464.

\bibitem[\protect\citeauthoryear{Ma}{Ma}{2010}]{ma_spca_2010}
Ma, Z. (2010).
\newblock Sparse principal component analysis and iterative thresholding.
\newblock arXiv:1112.2432.

\bibitem[\protect\citeauthoryear{Mazumder, Friedman, and Hastie}{Mazumder
  et~al.}{2009}]{mazumder_2009}
Mazumder, R., J.~Friedman, and T.~Hastie (2009).
\newblock {SparseNet: Coordinate Descent with Non-Convex Penalties}.
\newblock (To Appear) {\em Journal of the American Statistical Association}.

\bibitem[\protect\citeauthoryear{Mitchell, Genton, and Gumpertz}{Mitchell
  et~al.}{2006}]{mitchell_2006}
Mitchell, M., M.~Genton, and M.~Gumpertz (2006).
\newblock {A likelihood ratio test for separability of covariances}.
\newblock {\em Journal of Multivariate Analysis\/}~{\em 97\/}(5), 1025--1043.

\bibitem[\protect\citeauthoryear{Mitchell, Hutchinson, Niculescu, Pereira,
  Wang, Just, and Newman}{Mitchell et~al.}{2004}]{mitchell_fmri_2004}
Mitchell, T., R.~Hutchinson, R.~Niculescu, F.~Pereira, X.~Wang, M.~Just, and
  S.~Newman (2004).
\newblock Learning to decode cognitive states from brain images.
\newblock {\em Machine Learning\/}~{\em 57\/}(1), 145--175.

\bibitem[\protect\citeauthoryear{Owen and Perry}{Owen and
  Perry}{2009}]{owen_2009_cv}
Owen, A. and P.~Perry (2009).
\newblock {Bi-cross-validation of the SVD and the nonnegative matrix
  factorization}.
\newblock {\em Annals\/}~{\em 3\/}(2), 564--594.

\bibitem[\protect\citeauthoryear{Pennick and Kana}{Pennick and
  Kana}{2012}]{pennick_2012}
Pennick, M. and R.~Kana (2012).
\newblock Specialization and integration of brain responses to object
  recognition and location detection.
\newblock {\em Brain and Behavior\/}.

\bibitem[\protect\citeauthoryear{Purdom}{Purdom}{2007}]{purdom_2007}
Purdom, E. (2007).
\newblock {\em Multivariate kernel methods in the analysis of graphical
  structures}.
\newblock Ph.\ D. thesis, Stanford University.

\bibitem[\protect\citeauthoryear{Ramsay}{Ramsay}{2006}]{ramsay_2006}
Ramsay, J. (2006).
\newblock {\em Functional data analysis}.
\newblock Wiley Online Library.

\bibitem[\protect\citeauthoryear{Rice and Silverman}{Rice and
  Silverman}{1991}]{rice_silverman_1991}
Rice, J. and B.~Silverman (1991).
\newblock {Estimating the mean and covariance structure nonparametrically when
  the data are curves}.
\newblock {\em Journal of the Royal Statistical Society. Series B
  (Methodological)\/}~{\em 53\/}(1), 233--243.

\bibitem[\protect\citeauthoryear{Rue and Held}{Rue and Held}{2005}]{rue_2005}
Rue, H. and L.~Held (2005).
\newblock {\em {Gaussian Markov random fields: theory and applications}}.
\newblock Chapman \& Hall.

\bibitem[\protect\citeauthoryear{Shaman}{Shaman}{1969}]{shaman_1969}
Shaman, P. (1969).
\newblock {On the inverse of the covariance matrix of a first order moving
  average}.
\newblock {\em Biometrika\/}~{\em 56\/}(3), 595.

\bibitem[\protect\citeauthoryear{Shen and Huang}{Shen and
  Huang}{2008}]{shen_spca_2008}
Shen, H. and J.~Huang (2008).
\newblock Sparse principal component analysis via regularized low rank matrix
  approximation.
\newblock {\em Journal of multivariate analysis\/}~{\em 99\/}(6), 1015--1034.

\bibitem[\protect\citeauthoryear{Silverman}{Silverman}{1996}]{silverman_1996}
Silverman, B. (1996).
\newblock {Smoothed functional principal components analysis by choice of
  norm}.
\newblock {\em The Annals of Statistics\/}~{\em 24\/}(1), 1--24.

\bibitem[\protect\citeauthoryear{Tenenhaus and Young}{Tenenhaus and
  Young}{1985}]{tenenhaus_young_1985}
Tenenhaus, M. and F.~Young (1985).
\newblock An analysis and synthesis of multiple correspondence analysis,
  optimal scaling, dual scaling, homogeneity analysis and other methods for
  quantifying categorical multivariate data.
\newblock {\em Psychometrika\/}~{\em 50\/}(1), 91--119.

\bibitem[\protect\citeauthoryear{Thirion and Faugeras}{Thirion and
  Faugeras}{2003}]{thirion_2003}
Thirion, B. and O.~Faugeras (2003).
\newblock {Dynamical components analysis of fMRI data through kernel PCA}.
\newblock {\em NeuroImage\/}~{\em 20\/}(1), 34--49.

\bibitem[\protect\citeauthoryear{Tibshirani}{Tibshirani}{1996}]{tibshirani_las%
so}
Tibshirani, R. (1996).
\newblock {Regression shrinkage and selection via the lasso}.
\newblock {\em Journal of the Royal Statistical Society. Series B
  (Methodological)\/}~{\em 58\/}(1), 267--288.

\bibitem[\protect\citeauthoryear{Tibshirani, Saunders, Rosset, Zhu, and
  Knight}{Tibshirani et~al.}{2005}]{tibshirani_2005_fused}
Tibshirani, R., M.~Saunders, S.~Rosset, J.~Zhu, and K.~Knight (2005).
\newblock {Sparsity and smoothness via the fused lasso}.
\newblock {\em Journal of the Royal Statistical Society: Series B (Statistical
  Methodology)\/}~{\em 67\/}(1), 91--108.

\bibitem[\protect\citeauthoryear{Tibshirani and Taylor}{Tibshirani and
  Taylor}{2011}]{tibshirani_gen_lasso_2011}
Tibshirani, R. and J.~Taylor (2011).
\newblock {The Solution Path of the Generalized Lasso}.
\newblock {\em Annals of Statistics\/}.
\newblock In Press.

\bibitem[\protect\citeauthoryear{Troyanskaya, Cantor, Sherlock, Brown, Hastie,
  Tibshirani, Botstein, and Altman}{Troyanskaya
  et~al.}{2001}]{troyanskaya_2001}
Troyanskaya, O., M.~Cantor, G.~Sherlock, P.~Brown, T.~Hastie, R.~Tibshirani,
  D.~Botstein, and R.~Altman (2001).
\newblock {Missing value estimation methods for DNA microarrays}.
\newblock {\em Bioinformatics\/}~{\em 17\/}(6), 520.

\bibitem[\protect\citeauthoryear{Van~Loan}{Van~Loan}{1976}]{van_loan_gsvd_1976}
Van~Loan, C.~F. (1976).
\newblock Generalizing the singular value decomposition.
\newblock {\em SIAM Journal on Numerical Analysis\/}~{\em 13\/}(1), 76--83.

\bibitem[\protect\citeauthoryear{Vandenberghe}{Vandenberghe}{2009}]{vandenberg%
he}
Vandenberghe, L. (2009).
\newblock Optimization methods for large-scale systems.
\newblock http://www.ee.ucla.edu/~vandenbe/ee236c.html.
\newblock Course Notes.

\bibitem[\protect\citeauthoryear{Viviani, Gron, and Spitzer}{Viviani
  et~al.}{2005}]{viviani_fmri_2005}
Viviani, R., G.~Gron, and M.~Spitzer (2005).
\newblock {Functional principal component analysis of fMRI data}.
\newblock {\em Human brain mapping\/}~{\em 24\/}(2), 109--129.

\bibitem[\protect\citeauthoryear{Witten, Tibshirani, and Hastie}{Witten
  et~al.}{2009}]{witten_pmd_2009}
Witten, D.~M., R.~Tibshirani, and T.~Hastie (2009).
\newblock A penalized matrix decomposition, with applications to sparse
  principal components and canonical correlation analysis.
\newblock {\em Biostatistics\/}~{\em 10\/}(3), 515--534.

\bibitem[\protect\citeauthoryear{Yuan and Lin}{Yuan and
  Lin}{2006}]{yuan_2006_group}
Yuan, M. and Y.~Lin (2006).
\newblock {Model selection and estimation in regression with grouped
  variables}.
\newblock {\em Journal of the Royal Statistical Society: Series B (Statistical
  Methodology)\/}~{\em 68\/}(1), 49--67.

\bibitem[\protect\citeauthoryear{Zou, Hastie, and Tibshirani}{Zou
  et~al.}{2006}]{zou_spca_2006}
Zou, H., T.~Hastie, and R.~Tibshirani (2006).
\newblock {Sparse principal component analysis}.
\newblock {\em Journal of computational and graphical statistics\/}~{\em
  15\/}(2), 265--286.

\bibitem[\protect\citeauthoryear{Zou, Hastie, and Tibshirani}{Zou
  et~al.}{2007}]{zou_dof_2007}
Zou, H., T.~Hastie, and R.~Tibshirani (2007).
\newblock {On the" degrees of freedom" of the lasso}.
\newblock {\em Annals of Statistics\/}~{\em 35\/}(5), 2173--2192.

\end{thebibliography}
}

\end{document}